\title[Symplectic Semiclassical Wave Packet Dynamics II]{Symplectic Semiclassical Wave Packet Dynamics II:\\Non-Gaussian States}
\author{Tomoki Ohsawa}
\address{Department of Mathematical Sciences, The University of Texas at Dallas, 800 W Campbell Rd, Richardson, TX 75080-3021}
\email{tomoki@utdallas.edu}
\date{\today}
\keywords{Semiclassical mechanics, Semiclassical wave packets, Hamiltonian dynamics, Symplectic geometry}
\subjclass[2010]{37J15, 37J35, 70G45, 70H06, 70H33, 81Q05, 81Q20, 81Q70, 81S10}
\theoremstyle{plain}
\newtheorem{theorem}{Theorem}[section]
\newtheorem{proposition}[theorem]{Proposition}
\theoremstyle{definition}
\theoremstyle{remark}
\newtheorem{remark}[theorem]{Remark}
\def\od#1#2{\dfrac{{\rm d}#1}{{\rm d}#2}}
\def\pd#1#2{\dfrac{\partial #1}{\partial #2}}
\def\tpd#1#2{\partial #1/\partial #2}
\def\parentheses#1{{\left(#1\right)}}
\def\brackets#1{{\left[#1\right]}}
\def\braces#1{{\left\{#1\right\}}}
\def\tr{\mathop{\mathrm{tr}}\nolimits}
\def\norm#1{{\left\|#1\right\|}}
\def\abs#1{{\left|#1\right|}}
\def\DS{\displaystyle}
\def\R{\mathbb{R}}
\def\C{\mathbb{C}}
\def\N{\mathbb{N}}
\def\defeq{\mathrel{\mathop:}=}
\def\setdef#1#2{{\left\{ #1 \ |\ #2 \right\}}}
\def\ip#1#2{{\left\langle#1,#2\right\rangle}}
\def\tip#1#2{{\langle#1,#2\rangle}}
\def\bigip#1#2{{\bigl\langle#1,#2\bigr\rangle}}
\def\exval#1{{\left\langle#1\right\rangle}}
\def\texval#1{\langle#1\rangle}
\def\diag{\operatorname{diag}}
\renewcommand{\Re}{\operatorname{Re}}
\renewcommand{\Im}{\operatorname{Im}}
\def\eps{\varepsilon}
\def\Sp{\mathsf{Sp}}
\def\U{\mathsf{U}}
\def\Mat{\mathsf{M}}
\def\d{\mathbf{d}}
\def\ins#1{{\bf i}_{#1}}
\def\rmi{{\rm i}}
\begin{document}

\footskip=.6in

\begin{abstract}
  We generalize our earlier work on the symplectic/Hamiltonian formulation of the dynamics of the Gaussian wave packet to non-Gaussian semiclassical wave packets.
  We find the symplectic forms and asymptotic expansions of the Hamiltonians associated with these semiclassical wave packets, and obtain Hamiltonian systems governing their dynamics.
  Numerical experiments demonstrate that the dynamics give a very good approximation to the short-time dynamics of the expectation values computed by a method based on Egorov's Theorem or the Initial Value Representation.
\end{abstract}

\maketitle

\section{Introduction}
\subsection{Dynamics of Gaussian and Semiclassical Wave Packets}
The Gaussian wave function is one of the most ubiquitous wave functions in quantum mechanics.
The most familiar form of Gaussian wave function appears as the ground state of the harmonic oscillator.
Gaussians also appear in many forms and play significant roles in quantum dynamics or time-dependent quantum mechanics as well; see, e.g., \citet{Ta2007}.
One of the most significant results regarding Gaussians in quantum dynamics is that the Gaussian wave packet
\begin{equation}
  \label{eq:chi_0}
  \chi_{0}(q,p,\mathcal{A},\mathcal{B},\phi,\delta; x)
  = \exp\braces{ \frac{\rmi}{\hbar}\brackets{ \frac{1}{2}(x - q)^{T}(\mathcal{A} + \rmi\mathcal{B})(x - q) + p \cdot (x - q) + (\phi + \rmi \delta) } }
\end{equation}
gives an exact solution of the Schr\"odinger equation if the potential is quadratic and the parameters $(q,p,\mathcal{A},\mathcal{B},\phi,\delta)$ satisfy the following set of ordinary differential equations~(see~\citet{He1975a,He1976b,He1981}, \citet{Ha1980,Ha1998}, and \citet{Li1986}):
\begin{equation}
  \label{eq:Heller}
  \begin{array}{c}
    \DS
    \dot{q} = \frac{p}{m},
    \qquad
    \dot{p} = -D_{q}V(q),
    \medskip\\
    \DS
    \dot{\mathcal{A}} = -\frac{1}{m}(\mathcal{A}^{2} - \mathcal{B}^{2}) - D^{2}V(q),
    \qquad
    \dot{\mathcal{B}} = -\frac{1}{m}(\mathcal{A}\mathcal{B} + \mathcal{B}\mathcal{A}),
    \medskip\\
    \DS
    \dot{\phi} = \frac{p^{2}}{2m} - V(q) - \frac{\hbar}{2m} \tr\mathcal{B},
    \qquad
    \dot{\delta} = \frac{\hbar}{2m} \tr\mathcal{A}.
  \end{array}
\end{equation}

The parameters $(q,p) \in T^{*}\R^{d}$ may be thought of as the position and momentum in the classical sense:
In fact the first two equations are nothing but the classical Hamiltonian system and is decoupled from the rest; they also give the expectation values of the position and momentum operators with respect to the Gaussian~\eqref{eq:chi_0} if it is normalized, i.e., if $\norm{\chi_{0}}=1$ then $\ip{ \chi_{0} }{ \hat{x} \chi_{0} } = q$ and $\ip{ \chi_{0} }{ \hat{p} \chi_{0} } = p$, where $\ip{\,\cdot\,}{\,\cdot\,}$ is the standard inner product on $L^{2}(\R^{d})$, $\hat{x}$ is position operator, i.e., the multiplication by the position vector $x$, and $\hat{p} = -\rmi\hbar\tpd{}{x}$ is the momentum operator.
The matrices $(\mathcal{A},\mathcal{B})$ quantify the uncertainties in the position and momentum, and live in the so-called Siegel upper half space~\cite{Si1943}
\begin{equation}
  \label{eq:Sigma_d}
  \Sigma_{d} \defeq 
  \setdef{ \mathcal{A} + {\rm i}\mathcal{B} \in \Mat_{d}(\mathbb{C}) }{ \mathcal{A}, \mathcal{B} \in \Mat_{d}(\R),\, \mathcal{A}^{T} = \mathcal{A},\, \mathcal{B}^{T} = \mathcal{B},\, \mathcal{B} > 0 },
\end{equation}
i.e., the set of symmetric (in the real sense) $d \times d$ complex matrices with positive-definite imaginary parts; this guarantees that $\chi_{0}$ is an element in $L^{2}(\R^{d})$.
The parameter $\phi \in \mathbb{S}^{1}$ is the phase factor and $\delta \in \R$ controls the norm of $\chi_{0}$ as the square of the norm of $\chi_{0}$ is given by
\begin{equation}
  \label{eq:N}
  \mathcal{N}_{\hbar}(\mathcal{B},\delta) \defeq \norm{\chi_{0}}^{2} = \sqrt{ \frac{(\pi\hbar)^{d}}{\det \mathcal{B}} }\, \exp\parentheses{ -\frac{2\delta}{\hbar} }.
\end{equation}

\citet{Ha1980,Ha1998} came up with an orthonormal basis $\{\chi_{n}\}_{n\in\N_{0}^{d}}$ for $L^{2}(\R^{d})$ whose ground state with $n = 0$ is the normalized version of the Gaussian~\eqref{eq:chi_0}; see Section~\ref{ssec:Hagedorn} below for a brief summary of its construction by ladder operators.
It was also shown that each $\chi_{n}$ gives an exact solution of the Schr\"odinger equation with quadratic potential if the parameters evolve according to \eqref{eq:Heller}.
Moreover, even with {\em non-quadratic} potentials, \citeauthor{Ha1998} gave, under some technical assumptions, an asymptotic error estimate as $\hbar \to 0$ of the approximations by certain linear combinations of the basis elements---each of which is evolving in time according to \eqref{eq:Heller}---to the solution of the Schr\"odinger equation.

\subsection{Previous Work and Motivation}
In our previous work~\cite{OhLe2013}, we followed \citet{FaLu2006} and \citet{Lu2008} to come up with the symplectic-geometric/Hamiltonian formulation of the dynamics of the Gaussian wave packet~\eqref{eq:chi_0} as follows:
The parameters $(q,p,\mathcal{A},\mathcal{B},\phi,\delta)$ associated with the Gaussian~\eqref{eq:chi_0} live in the manifold $\mathcal{M} \defeq T^{*}\R^{d} \times \Sigma_{d} \times \mathbb{S}^{1} \times \R$.
But then we can induce a natural symplectic structure $\Omega_{\mathcal{M}}^{(0)}$ and a Hamiltonian function $H^{(0)}$ on $\mathcal{M}$ by exploiting the Hamiltonian formulation of the Schr\"odinger equation (see Section~\ref{ssec:Hamiltonian_Schroedinger} below).
This results in the Hamiltonian system $\ins{X_{H^{(0)}}} \Omega_{\mathcal{M}}^{(0)} = \d{H^{(0)}}$, and gives almost the same set of equations as \eqref{eq:Heller}---the only difference being that the second equation is replaced by
\begin{equation}
  \label{eq:pdot-chi_0}
  \dot{p} = -D_{q}V^{(0)}_{\hbar}(q,\mathcal{B}),
\end{equation}
where the potential $V^{(0)}_{\hbar}$ has an $O(\hbar)$ correction term to the classical one:
\begin{equation*}
  V^{(0)}_{\hbar}(q,\mathcal{B}) \defeq
  V(q) + \frac{\hbar}{4}\tr\parentheses{ \mathcal{B}^{-1} D^{2}V(q) },
\end{equation*}
and hence the dynamics of $(q,p)$ does not satisfy the classical Hamiltonian system any more.
Numerical experiments suggest that the $(q,p)$ dynamics of our system gives a better approximation than \eqref{eq:Heller} to the short time dynamics of the expectation values
\begin{equation*}
 \exval{\hat{x}}(t) \defeq \ip{ \psi(t,\,\cdot\,) }{ \hat{x} \psi(t,\,\cdot\,) }
 \quad\text{and}\quad
 \exval{\hat{p}}(t) \defeq \ip{ \psi(t,\,\cdot\,) }{ \hat{p} \psi(t,\,\cdot\,) }
\end{equation*}
of the position and momentum with respect to the solution $\psi(t,x)$ to the initial value problem of the Schr\"odinger equation
\begin{equation}
  \label{eq:SchroedingerEq-coordinates}
  \rmi\hbar\pd{}{t}\psi(t,x) = \hat{H}\psi(t,x)
\end{equation}
with the Gaussian initial condition
\begin{equation*}
  \psi(0,x) = \chi_{0}(q(0),p(0),\mathcal{A}(0),\mathcal{B}(0),\phi(0),\delta(0); x),
\end{equation*}
where $\hat{H}$ is the standard Schr\"odinger operator
\begin{equation}
  \label{eq:SchroedingerOp}
  \hat{H} \defeq -\frac{\hbar^{2}}{2m} \Delta + V(x).
\end{equation}

The main motivation for this work is to extend our approach to non-Gaussian elements of the semiclassical wave packets $\{ \chi_{n} \}_{n\in\N_{0}^{d}}$, i.e., we would like to generalize our work~\cite{OhLe2013} done for $n = 0$ (i.e., the Gaussian~\eqref{eq:chi_0}) to those elements with $n \neq 0$.
Since the semiclassical wave packets $\{ \chi_{n} \}_{n\in\N_{0}^{d}}$ provide an orthogonal basis for $L^{2}(\R^{d})$, our extension opens the door to new semiclassical approximation methods for the Schr\"odinger equation, potentially offering improvements on the results obtained by \citet{Ha1980,Ha1998}.

The main difficulty in extending our approach to the non-Gaussian elements $\chi_{n}$ ($n \neq 0$) is that there is no known explicit formula for $\chi_{n}$ that is valid for any $n$.
The difficulty is exacerbated in the multi-dimensional case, i.e., $d > 1$:
Unlike the Hermite functions, the multi-dimensional semiclassical wave packets {\em cannot} be written as products of the one-dimensional components.
In other words, the only practical way to come up with an explicit expression for $\chi_{n}$ for a given $n \in \N_{0}^{d}$ is to apply the associated raising operator $|n|$ times to the Gaussian $\chi_{0}$ for the given dimension $d$.
This makes those calculations involving $\chi_{n}$ for an arbitrary multi-index $n \in \N_{0}^{d}$ particularly cumbersome.
The calculations of the symplectic form $\Omega_{\mathcal{M}}^{(0)}$ and Hamiltonian $H^{(0)}$ performed in our previous work~\cite{OhLe2013} were fairly straightforward because $\chi_{0}$ is a Gaussian.
However, mimicking the same calculations for an arbitrary $n\in\N_{0}^{d}$ is not feasible because of the above difficulty in obtaining an explicit expression for $\chi_{n}$ with an arbitrary $n \in \N_{0}^{d}$.

\subsection{Main Results}
We focus on those semiclassical wave packets $\{\chi_{n}\}_{n\in\N_{0}^{d}}$ that are parametrized by the same parameters $(q,p,\mathcal{A},\mathcal{B},\phi,\delta)$ as the Gaussian $\chi_{0}$, and circumvent the above difficulty by proving those recurrence relations that hold between the symplectic forms and Hamiltonians associated with $\{\chi_{n}\}_{n\in\N_{0}^{d}}$. 
Then the symplectic form $\Omega_{\mathcal{M}}^{(n)}$ and Hamiltonian $H^{(n)}$ for an arbitrary $n\in\N_{0}^{d}$ follow by induction; see Propositions~\ref{prop:Omega_M} and \ref{prop:H_M}.
As a result, we can formulate the Hamiltonian system associated with the semiclassical wave packet $\chi_{n}$ for an arbitrary $n \in \N_{0}^{d}$; see Theorem~\ref{thm:Hamiltonian_system}.

We also extend our previous results~\cite{OhLe2013} on the symplectic reduction of the dynamics of the Gaussian wave packet $\chi_{0}$ to the dynamics of an arbitrary semiclassical wave packets $\chi_{n}$ with $n \in \N_{0}^{d}$.
This results in a Hamiltonian system on the reduced symplectic manifold $T^{*}\R^{d} \times \Sigma_{d}$; see Theorem~\ref{thm:ReducedSemiclassicalSystem}.
The reduced symplectic structure takes a much simpler and suggestive form that carries an $O(\hbar)$ correction term to the classical one, and the same goes with the reduced Hamiltonian; that is, it reveals the quantum correction as an $O(\hbar)$ perturbation to the classical Hamiltonian system.

Numerical experiments with a simple one-dimensional test case demonstrate that the these Hamiltonian systems provide very good approximations to the short-time dynamics of those expectation values $\exval{\hat{x}}(t)$ and $\exval{\hat{p}}(t)$ computed by Egorov's Theorem~\cite{Eg1969,CoRo2012,KeLaOh2016} or the Initial Value Representation (IVR) method~\cite{Mi1970,Mi1974b,WaSuMi1998,Mi2001} with $\chi_{n}$ as the initial wave functions for several $n$.
The IVR is a popular method for computing such expectation values and is shown to have $O(\hbar^{2})$ asymptotic accuracy by Egorov's Theorem.

\subsection{Outline}
We start with a brief review of the semiclassical wave packets of \citet{Ha1980,Ha1998} in Section~\ref{sec:Hagedorn_wave_packets}.
We present two different parametrizations of the wave packets:
One is that used by \citeauthor{Ha1998} and the other with the same set of parameters as \eqref{eq:chi_0}; we use the latter throughout the paper as in our earlier work~\cite{OhLe2013}.
In Section~\ref{sec:Symplectic_Structures}, we find the symplectic forms $\{ \Omega_{\mathcal{M}}^{(n)} \}_{n\in\N_{0}^{d}}$ associated with the semiclassical wave packets $\{ \chi_{n} \}_{n\in\N_{0}^{d}}$.
In Section~\ref{sec:Hamiltonian_Dynamics_of_SWP}, we find the semiclassical Hamiltonians $\{ H^{(n)} \}_{n\in\N_{0}^{d}}$ and the Hamiltonian systems associated with the semiclassical wave packets $\{ \chi_{n} \}_{n\in\N_{0}^{d}}$.
In Section~\ref{sec:Symplectic_Reduction}, we perform the symplectic reduction mentioned above to simplify the formulations.
Finally, in Section~\ref{sec:Numerical_Results}, we show numerical results of a simple test case comparing our solutions with the classical solution and those obtained by an Egorov/IVR-type method.

\section{The Semiclassical Wave Packets}
\label{sec:Hagedorn_wave_packets}
\subsection{Two Parametrizations and the Siegel Upper Half Space}
\citet{Ha1980, Ha1981, Hagedorn1985, Ha1998} uses a different parametrization for the elements $\mathcal{C} = \mathcal{A} + \rmi\mathcal{B}$ in the Siegel upper half space $\Sigma_{d}$ defined in \eqref{eq:Sigma_d}.
Namely the matrix $\mathcal{C}$ in the Gaussian wave packet~\eqref{eq:chi_0} is replaced by $P Q^{-1}$ to have
\begin{equation}
  \label{eq:chi_0-Hagedorn}
  \chi_{0}(q,p,Q,P,\phi,\delta; x) = \exp\braces{ \frac{\rmi}{\hbar}\brackets{ \frac{1}{2}(x - q)^{T}P Q^{-1}(x - q) + p \cdot (x - q) + (\phi + \rmi\delta) } },
\end{equation}
where $Q, P \in \Mat_{d}(\C)$, i.e., $d \times d$ complex matrices, that satisfy
\begin{equation}
  \label{eq:Q_P-Hagedorn}
  Q^{T}P - P^{T}Q = 0,
  \qquad
  Q^{*}P - P^{*}Q = 2\rmi I_{d},
\end{equation}
where $I_{d}$ is the $d \times d$ identity matrix.
It is pointed out by \citet[Section~V.1]{Lu2008} that this is a parametrization of elements in the symplectic group $\Sp(2d,\R)$ in the following way:
\begin{equation*}
  \Sp(2d,\R)
  = \setdef{
    \begin{bmatrix}
      \Re Q & \Im Q \\
      \Re P & \Im P
    \end{bmatrix} \in \Mat_{2d}(\R)
  }
  {
    \begin{array}{c}
      Q, P \in \Mat_{d}(\C),\ Q^{T}P - P^{T}Q = 0, \smallskip\\
      Q^{*}P - P^{*}Q = 2\rmi I_{d}
    \end{array}
  }.
\end{equation*}
In fact, one can show that if $(Q,P)$ satisfies \eqref{eq:Q_P-Hagedorn} then $Q$ is invertible and also $P Q^{-1} \in \Sigma_{d}$; see, e.g., \cite[Lemma~V.1.1 on p.~124]{Lu2008}.
However, for a given $\mathcal{A} + \rmi\mathcal{B} \in \Sigma_{d}$, the corresponding $(Q, P)$ satisfying \eqref{eq:Q_P-Hagedorn} and $P Q^{-1} = \mathcal{A} + \rmi\mathcal{B}$ is not unique:
For example, one finds that, by setting
\begin{equation*}
   \begin{bmatrix}
    \Re Q_{0} & \Im Q_{0} \\
    \Re P_{0} & \Im P_{0}
  \end{bmatrix}
  =
  \begin{bmatrix}
    \mathcal{B}^{-1/2} & 0 \\
    \mathcal{A}\mathcal{B}^{-1/2} & \mathcal{B}^{1/2}
  \end{bmatrix},
\end{equation*}
one sees that $(Q_{0},P_{0})$ satisfies \eqref{eq:Q_P-Hagedorn} as well as $P_{0}Q_{0}^{-1} = \mathcal{A} + \rmi\mathcal{B}$.
However, setting
\begin{equation*}
  \begin{bmatrix}
    \Re Q & \Im Q \\
    \Re P & \Im P
  \end{bmatrix}
  =
  \begin{bmatrix}
      \mathcal{B}^{-1/2} & 0 \\
      \mathcal{A}\mathcal{B}^{-1/2} & \mathcal{B}^{1/2}
    \end{bmatrix}
    \begin{bmatrix}
      U  & V \\
      -V & U
    \end{bmatrix}
    =
    \begin{bmatrix}
     \mathcal{B}^{-1/2} U & \mathcal{B}^{-1/2} V \\
     \mathcal{A}\mathcal{B}^{-1/2} U - \mathcal{B}^{1/2} V & \mathcal{A}\mathcal{B}^{-1/2} V + \mathcal{B}^{1/2} U
    \end{bmatrix}
\end{equation*}
for any $U + {\rm i}V \in \U(d)$ (the unitary group of degree $d$) would do as well:
$(Q,P)$ again satisfies \eqref{eq:Q_P-Hagedorn} as well as $P Q^{-1} = \mathcal{A} + \rmi\mathcal{B}$.
Therefore one has
\begin{equation}
  \label{eq:Q_P-A_B}
  Q = \mathcal{B}^{-1/2} \mathcal{U},
  \qquad
  P = (\mathcal{A} + \rmi\mathcal{B}) \mathcal{B}^{-1/2} \mathcal{U},
\end{equation}
where $\mathcal{U} \defeq U + {\rm i}V \in \U(d)$.
This is because $\Sigma_{d}$ is actually the homogeneous space $\Sp(2d,\R)/\U(d)$; see, e.g., \citet{Si1943}, \citet[Section~4.5]{Fo1989}, \citet[Exercise~2.28 on p.~48]{McSa1999}, and \citet{Oh2015c} for details.

\subsection{The Hagedorn Wave Packets}
\label{ssec:Hagedorn}
Upon normalizing and getting rid of the phase factor in \eqref{eq:chi_0-Hagedorn}, we have the ground state of the Hagedorn wave packets:
\begin{equation*}
  \varphi_{0}(q,p,Q,P; x) = \frac{(\det Q)^{-1/2}}{(\pi\hbar)^{d/4}} \exp\braces{ \frac{{\rm i}}{\hbar}\brackets{ \frac{1}{2}(x - q)^{T}P Q^{-1}(x - q) + p \cdot (x - q) } }.
\end{equation*}
\citet{Ha1998}\footnote{\citet{Ha1998} uses parameters $A, B \in \Mat_{d}(\C)$, which are related to $Q$ and $P$ as $A = Q$ and $B = -\rmi P$.} came up with the ladder operators
\begin{align*}
  \mathscr{L}(q,p,Q,P) &= -\frac{\rmi}{\sqrt{2\hbar}} \brackets{ P^{T}(\hat{x} - q) - Q^{T}(\hat{p} - p) }, \\
  \mathscr{L}^{*}(q,p,Q,P) &= \frac{\rmi}{\sqrt{2\hbar}} \brackets{ P^{*}(\hat{x} - q) - Q^{*}(\hat{p} - p) },
\end{align*}
that satisfy the same relationships that are satisfied by the ladder operators of the Hermite functions, i.e., 
\begin{equation*}
  \begin{array}{c}
    [\mathscr{L}_{j}(q,p,Q,P), \mathscr{L}_{k}(q,p,Q,P)] = 0,
    \medskip\\
    {[\mathscr{L}_{j}^{*}(q,p,Q,P), \mathscr{L}^{*}_{k}(q,p,Q,P)] = 0},
    \qquad
    [\mathscr{L}_{j}(q,p,Q,P), \mathscr{L}^{*}_{k}(q,p,Q,P)] = \delta_{jk}.
  \end{array}
\end{equation*}
Then these operators are used to define an orthonormal basis $\{ \varphi_{n}(q,p,Q,P;\,\cdot\,) \}_{n \in \N_{0}^{d}}$ for $L^{2}(\R^{d})$ recursively by applying the raising operator $\mathscr{L}^{*}$ repeatedly, i.e., for any multi-index $n = (n_{1}, \dots, n_{d}) \in \N_{0}^{d}$ and $j \in \{1, \dots, d\}$,
\begin{equation*}
  \varphi_{n + e_{j}}(q,p,Q,P;\,\cdot\,) \defeq \frac{1}{\sqrt{n_{j} + 1}}\,\mathscr{L}^{*}_{j}(q,p,Q,P) \varphi_{n}(q,p,Q,P;\,\cdot\,),
\end{equation*}
where $e_{j} \in \R^{d}$ is the unit vector whose $j$-th entry is 1.
One can also show that the lowering operator $\mathscr{L}$ satisfies
\begin{equation*}
  \varphi_{n - e_{j}}(q,p,Q,P;\,\cdot\,) = \frac{1}{\sqrt{n_{j}}}\,\mathscr{L}_{j}(q,p,Q,P) \varphi_{n}(q,p,Q,P;\,\cdot\,).
\end{equation*}

\subsection{Semiclassical Wave Packets}
We would like to use the parametrization $(\mathcal{A},\mathcal{B})$ instead of $(Q,P)$ here.
So we may first rewrite the above ladder operators in terms of $(\mathcal{A},\mathcal{B},\mathcal{U})$ instead of $(Q,P)$ using \eqref{eq:Q_P-A_B}.
But then the resulting operators define ladder operators for an arbitrary $\mathcal{U} \in \U(d)$; hence we set $\mathcal{U} = I_{d}$ to have---with an abuse of notation---the ladder operators
\begin{subequations}
  \label{eq:ladder_operators}
  \begin{align}
    \mathscr{L}(q,p,\mathcal{A},\mathcal{B})
    &\defeq -\frac{\rmi}{\sqrt{2\hbar}}\, \mathcal{B}^{-1/2} \brackets{ (\mathcal{A} + \rmi\mathcal{B}) (\hat{x} - q) - (\hat{p} - p) },\\
    \mathscr{L}^{*}(q,p,\mathcal{A},\mathcal{B})
    &\defeq \frac{\rmi}{\sqrt{2\hbar}}\, \mathcal{B}^{-1/2} \brackets{ (\mathcal{A} - \rmi\mathcal{B})(\hat{x} - q) - (\hat{p} - p) },
    \label{eq:raising_operator}
  \end{align}
\end{subequations}
and generate an orthogonal basis $\{ \chi_{n}(q,p,\mathcal{A},\mathcal{B},\phi,\delta;\,\cdot\,) \}_{n \in \N_{0}^{d}}$ by setting
\begin{equation}
  \label{eq:chi_n-raised}
  \chi_{n + e_{j}}(q,p,\mathcal{A},\mathcal{B},\phi,\delta;\,\cdot\,)
  \defeq \frac{1}{\sqrt{n_{j} + 1}}\,\mathscr{L}^{*}_{j}(q,p,\mathcal{A},\mathcal{B}) \chi_{n}(q,p,\mathcal{A},\mathcal{B},\phi,\delta;\,\cdot\,)
\end{equation}
starting with the ground state~\eqref{eq:chi_0} (without normalization; hence only orthogonal), whereas the lowering operator works as follows:
\begin{equation}
  \label{eq:chi_n-lowered}
  \chi_{n - e_{j}}(q,p,\mathcal{A},\mathcal{B},\phi,\delta;\,\cdot\,)
  = \frac{1}{\sqrt{n_{j}}}\,\mathscr{L}_{j}(q,p,\mathcal{A},\mathcal{B}) \chi_{n}(q,p,\mathcal{A},\mathcal{B},\phi,\delta;\,\cdot\,).
\end{equation}

\begin{remark}
Setting $\mathcal{U} = I_{d}$ has the advantage of parametrizing the wave packets by $(\mathcal{A},\mathcal{B})$ as is done for the Gaussian, but results in a slightly less general form of wave packets than Hagedorn's.
\end{remark}

Note that the norms of these wave packets are all the same because, writing $\chi_{n} = \chi_{n}(y;\,\cdot\,)$ and $\mathscr{L}^{*} = \mathscr{L}^{*}(q,p,\mathcal{A},\mathcal{B})$ for brevity,
\begin{align*}
  \norm{\chi_{n+e_{j}}}^{2}
  &= \frac{1}{n_{j}+1} \ip{ \mathscr{L}^{*}_{j} \chi_{n} }{ \mathscr{L}^{*}_{j} \chi_{n} } \\
  &= \frac{1}{n_{j}+1} \ip{ \mathscr{L}_{j} \mathscr{L}^{*}_{j} \chi_{n} }{ \chi_{n} } \\
  &= \ip{ \chi_{n} }{ \chi_{n} } \\
  &= \norm{ \chi_{n} }^{2},
\end{align*}
and hence we have $\norm{ \chi_{n} }^{2} = \norm{ \chi_{0} }^{2} = \mathcal{N}_{\hbar}(\mathcal{B},\delta)$ for any $n\in\N_{0}^{d}$ by induction, where $\mathcal{N}_{\hbar}$ was defined in \eqref{eq:N}.
As we shall see later in Proposition~\ref{prop:Omega_M-reduced}, $\mathcal{N}_{\hbar}$ is the Noether conserved quantity corresponding to the inherent phase symmetry of our Hamiltonian dynamics; see also Remark~\ref{rem:N} below.
Therefore, if necessary, one may normalize the orthogonal basis $\{ \chi_{n}(q,p,\mathcal{A},\mathcal{B},\phi,\delta;\,\cdot\,) \}_{n \in \N_{0}^{d}}$ to obtain an orthonormal basis just like the Hagedorn wave packets $\{ \varphi_{n}(q,p,Q,P;\,\cdot\,) \}_{n \in \N_{0}^{d}}$.
We will show later in Section~\ref{sec:Symplectic_Reduction} that the normalization corresponds to the symplectic reduction by the phase symmetry with respect to the variable $\phi$.

\section{Embeddings of Semiclassical Wave Packets and Symplectic Structures}
\label{sec:Symplectic_Structures}
Let us write $y \defeq (q, p, \mathcal{A}, \mathcal{B}, \phi, \delta) \in \mathcal{M}$ for short.
Our goal is to come up with the dynamics $y(t)$ of the parameters so that each wave packet $\chi_{n}(y(t);x)$ approximates the solution of the initial value problem of the Schr\"odinger equation~\eqref{eq:SchroedingerEq-coordinates} with the initial condition $\psi(0,x) = \chi_{n}(y(0);x)$.
Particularly we would like to obtain a Hamiltonian dynamics of the parameters $y$ that is naturally related to the Hamiltonian/symplectic structure associated with the Schr\"odinger equation.
This amounts to finding the symplectic structure $\Omega_{\mathcal{M}}^{(n)}$ and Hamiltonian $H^{(n)}$ on $\mathcal{M}$ naturally associated with $\chi_{n}$, and results in the Hamiltonian system defined in terms of $\Omega_{\mathcal{M}}^{(n)}$ and $H^{(n)}$.
Indeed, one can show that this gives the best approximation in some appropriate sense as we shall see below in Section~\ref{ssec:BestApproximation}.

In our previous work~\cite{OhLe2013} on the dynamics of the Gaussian $\chi_{0}$, we followed the approach by \citet{FaLu2006} and \citet[Section~II.1]{Lu2008} and obtained the symplectic structure $\Omega_{\mathcal{M}}^{(0)}$ by regarding the Gaussian wave packet $\chi_{0}(y,\,\cdot\,)$ as the embedding $\iota_{0}\colon\mathcal{M} \hookrightarrow L^{2}(\R^{d})$ defined by $y \mapsto \chi_{0}(y,\,\cdot\,)$; see Section~\ref{ssec:embedding} below for more details.
We would like to generalize the approach to $\chi_{n}$ for an arbitrary $n \in \N_{0}^{d}$.

\subsection{Hamiltonian Formulation of the Schr\"odinger Equation}
\label{ssec:Hamiltonian_Schroedinger}
Let us first briefly review the Hamiltonian formulation of the Schr\"odinger equation following \citet[Section~2.2]{MaRa1999}.
Let $\mathcal{H}$ be a complex Hilbert space---$\mathcal{H} = L^{2}(\R^{d})$ throughout the paper---equipped with a (right-linear) inner product $\ip{\cdot}{\cdot}$.
Then $\mathcal{H}$ is a symplectic vector space with the symplectic structure $\Omega$ defined by
\begin{equation*}
  \Omega(\psi_{1}, \psi_{2}) \defeq 2\hbar \Im\ip{\psi_{1}}{\psi_{2}}.
\end{equation*}
In fact, defining a one-form $\Theta$ on $\mathcal{H}$ by
\begin{equation}
  \label{eq:Theta}
  \Theta(\psi) = -\hbar \Im\ip{\psi}{\mathbf{d}\psi},
\end{equation}
one obtains $\Omega = -\mathbf{d}\Theta$.
Given a Hamiltonian operator $\hat{H}$ on $\mathcal{H}$ (we proceed formally here without specifying the domain of definition of $\hat{H}$), we may write the expectation value of the Hamiltonian $\texval{\hat{H}}\colon \mathcal{H} \to \R$ as
\begin{equation*}
 \texval{\hat{H}}(\psi) \defeq \texval{\psi, \hat{H}\psi}.
\end{equation*}
Now we think of $\texval{\hat{H}}$ as a Hamiltonian {\em function} on the symplectic vector space $\mathcal{H}$, and define the corresponding Hamiltonian vector field $X_{\exval{\hat{H}}}$ on $\mathcal{H}$ by the Hamiltonian system
\begin{equation}
  \label{eq:Schroedinger-HamiltonianSystem}
  {\bf i}_{X_{\texval{\hat{H}}}} \Omega = \mathbf{d}\texval{\hat{H}}.
\end{equation}
Writing the vector field $X_{\exval{\hat{H}}}$ as $X_{\exval{\hat{H}}}(\psi) = (\psi,\dot{\psi}) \in T\mathcal{H} \cong \mathcal{H} \times \mathcal{H}$, one obtains the Schr\"odinger equation
\begin{equation*}
  \dot{\psi} = -\frac{\rmi}{\hbar}\hat{H} \psi.
\end{equation*}
For $\mathcal{H} = L^{2}(\R^{d})$ with the Schr\"odinger operator~\eqref{eq:SchroedingerOp}, the above equation gives \eqref{eq:SchroedingerEq-coordinates}.

\subsection{Embeddings defined by Semiclassical Wave Packets}
\label{ssec:embedding}
We would like to exploit the above Hamiltonian approach to the Schr\"odinger equation in order to formulate Hamiltonian dynamics of the parameters $(q, p, \mathcal{A}, \mathcal{B}, \phi, \delta)$.
First note that the parameters $y = (q, p, \mathcal{A}, \mathcal{B}, \phi, \delta)$ live in the space
\begin{equation*}
 \mathcal{M} \defeq T^{*}\R^{d} \times \Sigma_{d} \times \mathbb{S}^{1} \times \R,
\end{equation*}
which is an even-dimensional manifold for any $d \in \N$ because the (real) dimension of $\Sigma_{d}$ is $d(d+1)$ and hence the dimension of $\mathcal{M}$ is $(d+1)(d+2)$.
Then we may define a family of embeddings of $\mathcal{M}$ to $\mathcal{H} \defeq L^{2}(\R^{d})$ by
\begin{equation}
  \label{eq:iota_n}
  \iota_{n}\colon \mathcal{M} \hookrightarrow \mathcal{H};
  \quad
  \iota_{n}(y) = \chi_{n}(y;\,\cdot\,)
\end{equation}
for any $n \in \N_{0}^{d}$.

Can we naturally induce a symplectic structure on $\mathcal{M}$ from the symplectic structure $\Omega$ on $\mathcal{H}$?
In our previous work~\cite[Proposition~2.1]{OhLe2013}, we reformulated the work of \citet[Section~II.1]{Lu2008} and showed the following:
Let $\iota\colon \mathcal{M} \hookrightarrow \mathcal{H}$ be an embedding of a manifold $\mathcal{M}$ in a complex Hilbert space $\mathcal{H}$ and suppose that $\mathcal{M}$ is equipped with an almost complex structure $J_{y}: T_{y}\mathcal{M} \to T_{y}\mathcal{M}$ that is compatible with the multiplication by the imaginary unit $\rmi$ in $\mathcal{H}$, i.e.,
\begin{equation}
  \label{eq:iota-compatibility}
  T_{y}\iota \circ J_{y} = \rmi\cdot T_{y}\iota
\end{equation}
for any $y \in \mathcal{M}$; then $\mathcal{M}$ is a symplectic manifold with symplectic form defined by the pull-back $\Omega_{\mathcal{M}} \defeq \iota^{*}\Omega$.
In \cite{OhLe2013}, we worked out the Gaussian case, i.e., $\iota = \iota_{0}$ explicitly:
We found that
\begin{multline}
  \label{eq:J}
  J_{y}\parentheses{ \dot{q}, \dot{p}, \dot{\mathcal{A}}, \dot{\mathcal{B}}, \dot{\phi}, \dot{\delta} }
  \\
  = \parentheses{
    \mathcal{B}^{-1}(\mathcal{A}\dot{q} - \dot{p}),\,
    (\mathcal{A}\mathcal{B}^{-1}\mathcal{A} + \mathcal{B})\dot{q} - \mathcal{A}\mathcal{B}^{-1}\dot{p},\,
    -\dot{\mathcal{B}},\,
    \dot{\mathcal{A}},\,
    p^{T}\mathcal{B}^{-1}(\mathcal{A}\dot{q} - \dot{p}) - \dot{\delta},\,
    -p \cdot \dot{q} + \dot{\phi}
  },
\end{multline}
is an almost complex structure that satisfies $T_{y}\iota_{0} \circ J_{y} = \rmi\cdot T_{y}\iota_{0}$, and found the pull-back $\Theta_{\mathcal{M}}^{(0)} \defeq \iota_{0}^{*} \Theta$ of the canonical one-form $\Theta$ in \eqref{eq:Theta}.
Setting $\Omega_{\mathcal{M}}^{(0)} \defeq -\d\Theta_{\mathcal{M}}^{(0)}$ gives a symplectic form on $\mathcal{M}$.

\subsection{Hamiltonian Dynamics as the Best Approximation}
\label{ssec:BestApproximation}
Given an embedding $\iota\colon \mathcal{M} \hookrightarrow \mathcal{H}$ with $\iota(y) = \chi(y;\,\cdot\,)$ satisfying \eqref{eq:iota-compatibility}, one can also define a Hamiltonian function $H$ on $\mathcal{M}$ via the above embedding as $H \defeq \iota^{*}\texval{\hat{H}} = \bigl\langle \chi, \hat{H} \chi \bigr\rangle$.
So one can formulate a Hamiltonian system on $\mathcal{M}$ as $\ins{X_{H}}{\Omega_{\mathcal{M}}} = \d{H}$.
As shown in \citet[Section~II.1.2]{Lu2008} (see also \cite[Proposition~2.4]{OhLe2013}), the Hamiltonian vector field $X_{H}$ gives the best approximation to the vector field $X_{\texval{\hat{H}}}$ of the Schr\"odinger dynamics in the following sense: $X_{H}$ is the least squares approximation---in terms of the norm in $L^{2}(\R^{d})$---among the vector fields on $\mathcal{M}$ to the vector field $X_{\exval{\hat{H}}}$ defined by the Schr\"odinger equation~\eqref{eq:Schroedinger-HamiltonianSystem}.
More specifically, we have, for any $y \in \mathcal{M}$, 
\begin{equation*}
  \| X_{\texval{\hat{H}}}(\iota(y)) - T_{y}\iota(X_H(y)) \|
  \le
  \| X_{\texval{\hat{H}}}(\iota(y)) - T_{y}\iota(V_{y}) \|
\end{equation*}
for any $V_{y} \in T_{y}\mathcal{M}$, where the equality holds if and only if $V_{y} = X_{H}(y)$; see Fig.~\ref{fig:BestApproximation}.

\begin{figure}
  \centering
  \includegraphics[width=.6\linewidth]{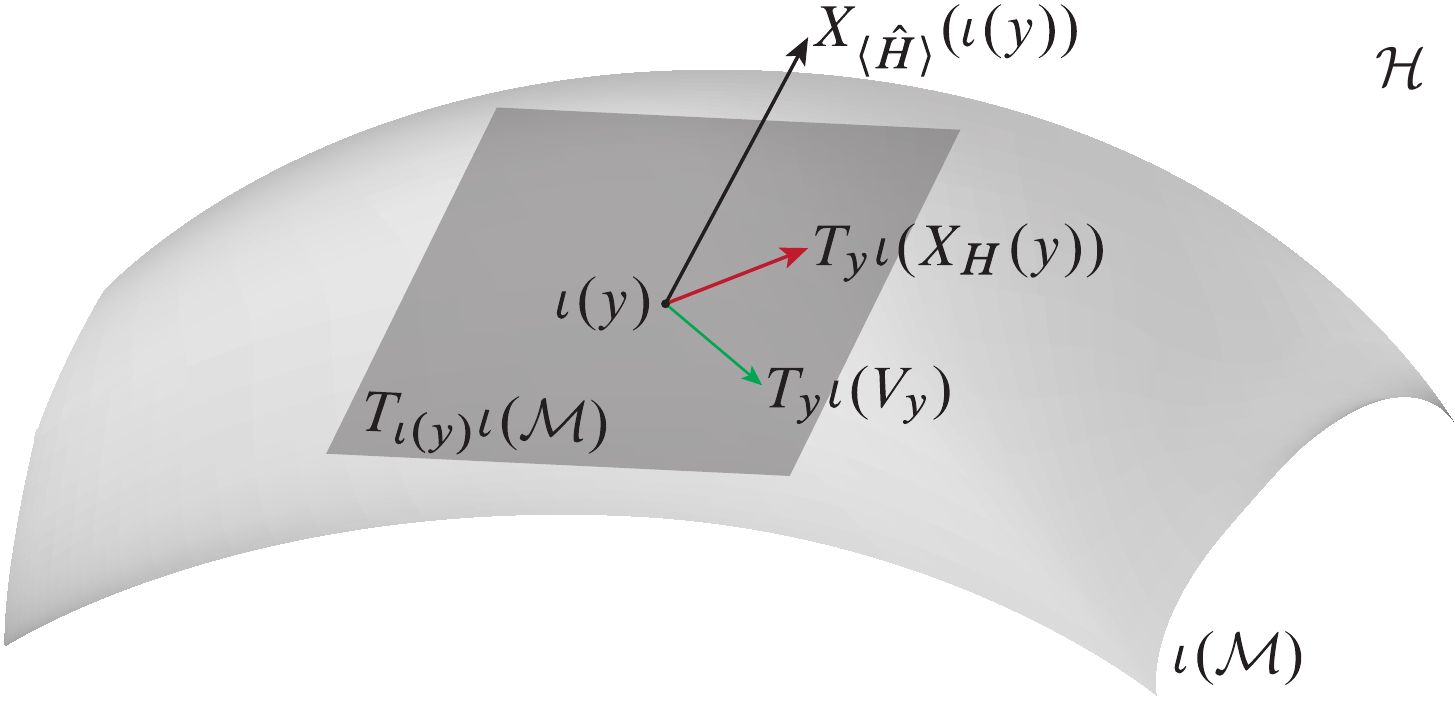}
  \caption{The Hamiltonian vector field $X_{H}$ gives the best approximation among the vector fields on $\mathcal{M}$ to the Schr\"odinger dynamics $X_{\texval{\hat{H}}}$.}
  \label{fig:BestApproximation}
\end{figure}

\subsection{Semiclassical Symplectic Structures}
We would like to apply the above approach to the embedding $\iota_{n}\colon \mathcal{M} \hookrightarrow \mathcal{H}$ of the semiclassical wave packets $\chi_{n}(y;\,\cdot\,)$ with an arbitrary $n \in \N_{0}^{d}$.
As the first step, let us find the symplectic form $\Omega^{(n)}_{\mathcal{M}} \defeq \iota_{n}^{*}\Omega$.
As mentioned earlier, a concrete expression for $\chi_{n}$ with general $n \in \N_{0}^{d}$ is essentially out of reach, and so it is not feasible to calculate $\Theta_{\mathcal{M}}^{(n)} \defeq \iota_{n}^{*} \Theta$ directly.
A way around it is to find a recurrence relation between the canonical one-forms $\{ \Theta_{\mathcal{M}}^{(n)} \}_{n\in\N_{0}^{d}}$.
\begin{proposition}
  \label{prop:Omega_M}
  Let $\iota_{n}\colon \mathcal{M} \hookrightarrow \mathcal{H} = L^{2}(\R^{d})$ be the embedding~\eqref{eq:iota_n} defined by the semiclassical wave packet $\chi_{n}(y;\,\cdot\,)$ for an arbitrary multi-index $n = (n_{1}, \dots, n_{d}) \in \N_{0}^{d}$.
  Then the almost complex structure $J$ in \eqref{eq:J} satisfies $T_{y}\iota_{n} \circ J_{y} = \rmi\cdot T_{y}\iota_{n}$ for any $y \in \mathcal{M}$ and hence the pull-back $\Omega_{\mathcal{M}}^{(n)} \defeq \iota_{n}^{*}\Omega$ defines a symplectic structure on $\mathcal{M}$, and is given by $\Omega_{\mathcal{M}}^{(n)} = -\mathbf{d}\Theta_{\mathcal{M}}^{(n)}$ with
  \begin{equation}
    \label{eq:Theta_M}
    \Theta_{\mathcal{M}}^{(n)} \defeq \iota_{n}^{*}\Theta
    = \mathcal{N}_{\hbar}(\mathcal{B},\delta) \parentheses{
      p_{i}\,\mathbf{d}q_{i} - \frac{\hbar}{4}\tr\parentheses{ (\mathcal{B}^{(n)})^{-1} \mathbf{d}\mathcal{A} } - \mathbf{d}\phi
    },
  \end{equation}
  where $\mathcal{B}^{(n)}$ is the $d \times d$ positive-definite matrix defined as
  \begin{equation}
    \label{eq:mathcalB_n}
    \mathcal{B}^{(n)} \defeq \mathcal{B}^{1/2} (\Lambda^{(n)})^{-1} \mathcal{B}^{1/2},
  \end{equation}
  and $\Lambda^{(n)}$ is the $d \times d$ diagonal matrix defined as
  \begin{equation}
    \label{eq:Lambda_n}
    \Lambda^{(n)} \defeq \diag(2n_{1}+1, \dots, 2n_{d}+1).
  \end{equation}
  More explicitly,
  \begin{equation}
    \label{eq:Omega_M}
    \begin{split}
      \Omega_{\mathcal{M}}^{(n)}
      =\mathcal{N}_{\hbar}(\mathcal{B},\delta) \biggl\{&
      \mathbf{d}q_{i} \wedge \mathbf{d}p_{i}
      - \frac{p_{i}}{2} \mathbf{d}q_{i} \wedge \tr\parentheses{ (\mathcal{B}^{(n)})^{-1}\mathbf{d}\mathcal{B}^{(n)} }
      - \frac{2p_{i}}{\hbar}\mathbf{d}q_{i} \wedge \mathbf{d}\delta \\
      &- \frac{\hbar}{4}\d\mathcal{A}_{ij} \wedge \d(\mathcal{B}^{(n)})^{-1}_{ij} + \frac{\hbar}{8}\tr\parentheses{ (\mathcal{B}^{(n)})^{-1}\mathbf{d}\mathcal{A} } \wedge \tr\parentheses{ (\mathcal{B}^{(n)})^{-1}\d\mathcal{B}^{(n)} } \\
      &+ \frac{1}{2}\tr\parentheses{ (\mathcal{B}^{(n)})^{-1}\d\mathcal{A} } \wedge \d\delta
      - \frac{1}{2}\tr\parentheses{ (\mathcal{B}^{(n)})^{-1}\d\mathcal{B}^{(n)} } \wedge \d\phi + \frac{2}{\hbar} \mathbf{d}\phi \wedge \mathbf{d}\delta
      \biggr\}.
    \end{split}
  \end{equation}
\end{proposition}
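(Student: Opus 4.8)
The plan is to prove both assertions by induction on $\abs{n} \defeq n_1 + \dots + n_d$, using the raising relation \eqref{eq:chi_n-raised}; this sidesteps the absence of a closed form for $\chi_n$. The base case $n = 0$ is the Gaussian computation of \cite{OhLe2013}: there the compatibility $T_y\iota_0\circ J_y = \rmi\cdot T_y\iota_0$ and the one-form $\Theta_\mathcal{M}^{(0)}$ are already established, and the latter coincides with \eqref{eq:Theta_M} because $\Lambda^{(0)} = I_d$ forces $\mathcal{B}^{(0)} = \mathcal{B}$.

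For the inductive step I would differentiate $\chi_{n+e_j} = (n_j+1)^{-1/2}\mathscr{L}_j^*(q,p,\mathcal{A},\mathcal{B})\chi_n$ along an arbitrary $V \in T_y\mathcal{M}$, giving
\[
  T_y\iota_{n+e_j}(V) = \frac{1}{\sqrt{n_j+1}}\bigl[(D_V\mathscr{L}_j^*)\chi_n + \mathscr{L}_j^*\,T_y\iota_n(V)\bigr],
\]
where $D_V\mathscr{L}_j^*$ is the derivative of the parameter-dependent coefficients of the raising operator \eqref{eq:raising_operator}. The device that replaces an explicit $\chi_n$ is to expand $(D_V\mathscr{L}_j^*)\chi_n$ back into the orthogonal family $\{\chi_m\}$: inverting \eqref{eq:ladder_operators} gives $\hat x - q = \sqrt{\hbar/2}\,\mathcal{B}^{-1/2}(\mathscr{L}+\mathscr{L}^*)$ together with a companion expression for $\hat p - p$, so that $(\hat x - q)\chi_n$ and $(\hat p - p)\chi_n$, hence $(D_V\mathscr{L}_j^*)\chi_n$, become finite combinations of $\chi_n$ and $\chi_{n\pm e_k}$. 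Every inner product then collapses by orthogonality and $\norm{\chi_m}^2 = \mathcal{N}_\hbar$.

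Feeding this into $\Theta_\mathcal{M}^{(n)} = \iota_n^*\Theta = -\hbar\Im\ip{\chi_n}{T_y\iota_n(\,\cdot\,)}$ and using $\mathscr{L}_j\mathscr{L}_j^*\chi_n = (n_j+1)\chi_n$ to reduce the $\mathscr{L}_j^*T_y\iota_n$ term, I obtain the recurrence
\[
  \Theta_\mathcal{M}^{(n+e_j)}(V) = \Theta_\mathcal{M}^{(n)}(V) - \frac{\hbar}{n_j+1}\Im\bigip{\mathscr{L}_j^*\chi_n}{(D_V\mathscr{L}_j^*)\chi_n}.
\]
Only the $\chi_{n+e_j}$-component of $(D_V\mathscr{L}_j^*)\chi_n$ survives the pairing with $\mathscr{L}_j^*\chi_n = \sqrt{n_j+1}\,\chi_{n+e_j}$, and its imaginary part evaluates to $-\tfrac{\hbar}{2}\mathcal{N}_\hbar\,(\mathcal{B}^{-1/2}\,\d\mathcal{A}\,\mathcal{B}^{-1/2})_{jj}$, notably independent of $n_j$. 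Since $\Lambda^{(n+e_j)} - \Lambda^{(n)} = 2E_{jj}$ (the matrix unit in the $(j,j)$ slot) yields $(\mathcal{B}^{(n+e_j)})^{-1} - (\mathcal{B}^{(n)})^{-1} = 2\,\mathcal{B}^{-1/2}E_{jj}\mathcal{B}^{-1/2}$, this correction is exactly the increment of $-\tfrac{\hbar}{4}\mathcal{N}_\hbar\tr((\mathcal{B}^{(n)})^{-1}\d\mathcal{A})$, closing the induction and proving \eqref{eq:Theta_M}. The explicit form \eqref{eq:Omega_M} then follows from $\Omega_\mathcal{M}^{(n)} = -\d\Theta_\mathcal{M}^{(n)}$ by the Leibniz rule, using $\d\mathcal{N}_\hbar = \mathcal{N}_\hbar(-\tfrac12\tr((\mathcal{B}^{(n)})^{-1}\d\mathcal{B}^{(n)}) - \tfrac{2}{\hbar}\d\delta)$, where one notes $\tr((\mathcal{B}^{(n)})^{-1}\d\mathcal{B}^{(n)}) = \tr(\mathcal{B}^{-1}\d\mathcal{B})$ because $\det\Lambda^{(n)}$ is constant.

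The compatibility $T_y\iota_n\circ J_y = \rmi\cdot T_y\iota_n$ is addressed by the same induction. After the hypothesis removes the $\mathscr{L}_j^* T_y\iota_n$ contribution, the step reduces to controlling $(D_{J_yV}\mathscr{L}_j^* - \rmi\,D_V\mathscr{L}_j^*)\chi_n$, which I would again expand via the ladder identities, tracking how $J$ acts on the increments $\dot{\mathcal{A}} - \rmi\dot{\mathcal{B}}$ and on $(\dot q,\dot p)$ through \eqref{eq:J}. I expect this to be the main obstacle. In the Gaussian case the dependence on $\mathcal{A} + \rmi\mathcal{B}$ is holomorphic, so the verification is transparent; but the prefactor $\mathcal{B}^{-1/2}$ in \eqref{eq:raising_operator} is not holomorphic, and its derivative contributes real terms that must be shown to assemble---together with the contributions of the $(q,p)$-block of $J$---into the single factor of $\rmi$ demanded by the compatibility. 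Organizing this cancellation, rather than running the $\Theta$-recurrence, is the delicate heart of the argument.
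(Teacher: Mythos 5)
Your derivation of \eqref{eq:Theta_M} is, step for step, the paper's own: the recurrence
\begin{equation*}
  \Theta_{\mathcal{M}}^{(n+e_{j})}(V) = \Theta_{\mathcal{M}}^{(n)}(V) - \frac{\hbar}{n_{j}+1}\,\Im\bigip{ \mathscr{L}^{*}_{j}\chi_{n} }{ (D_{V}\mathscr{L}^{*}_{j})\chi_{n} }
\end{equation*}
is exactly \eqref{eq:rec_rel-Theta-pre}; the evaluation of the correction term by writing $\hat{x}-q$ in ladder operators and invoking orthogonality, the observation that only the $\mathcal{A}$-derivative of the coefficients contributes an imaginary part, and the identity $(\mathcal{B}^{(n+e_{j})})^{-1}-(\mathcal{B}^{(n)})^{-1} = 2\,\mathcal{B}^{-1/2}E_{jj}\mathcal{B}^{-1/2}$ ($E_{jj}$ being the $(j,j)$ matrix unit) closing the induction all coincide with the computation in the paper, as does the passage to \eqref{eq:Omega_M} via $\d\mathcal{N}_{\hbar}$. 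That half of the proposal is complete and correct.

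The gap is the compatibility $T_{y}\iota_{n}\circ J_{y} = \rmi\cdot T_{y}\iota_{n}$, and you have put your finger on precisely the right spot without resolving it. The paper disposes of this step by writing $T_{y}\iota_{n+e_{j}} = \frac{1}{\sqrt{n_{j}+1}}\mathscr{L}_{j}^{*}\circ T_{y}\iota_{n}$ ``since $\mathscr{L}_{j}^{*}$ is a linear operator,'' which silently discards the very term $(D_{V}\mathscr{L}_{j}^{*})\chi_{n}$ that your chain rule correctly retains. That term does not assemble into a factor of $\rmi$: for $d=1$ and $n=1$ one has $\chi_{1} = \sqrt{2\mathcal{B}/\hbar}\,(x-q)\,\chi_{0}$, and taking $V$ with $\dot{\mathcal{B}}=1$ and all other components zero, so that $J_{y}V$ has $\dot{\mathcal{A}}=-1$ by \eqref{eq:J}, a direct computation gives
\begin{equation*}
  T_{y}\iota_{1}(J_{y}V) - \rmi\, T_{y}\iota_{1}(V) = -\frac{\rmi}{\sqrt{2\hbar\mathcal{B}}}\,(x-q)\,\chi_{0} \neq 0,
\end{equation*}
the obstruction coming exactly from the real, $\mathcal{B}$-only prefactor $\sqrt{2\mathcal{B}/\hbar}$, i.e., from the non-holomorphic $\mathcal{B}^{-1/2}$ in \eqref{eq:raising_operator} that you flagged (an artifact of freezing $\mathcal{U}=I_{d}$). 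So the cancellation you propose to organize is not there to be organized, and neither your sketch nor the paper's one-line identity establishes the compatibility for $n\neq 0$. What survives independently of $J$ is everything you actually proved: $\Theta_{\mathcal{M}}^{(n)}$ as in \eqref{eq:Theta_M}, $\Omega_{\mathcal{M}}^{(n)} = -\d\Theta_{\mathcal{M}}^{(n)}$ because pullback commutes with $\d$, and the nondegeneracy of $\Omega_{\mathcal{M}}^{(n)}$, which should be verified directly from \eqref{eq:Omega_M} (or from the reduced form \eqref{eq:Omega-reduced}) rather than inferred from the almost complex structure.
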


\begin{proof}
  Let us first show that $T_{y}\iota_{n} \circ J_{y} = \rmi\cdot T_{y}\iota_{n}$ for any $n \in \N_{0}^{d}$ by induction.
  One can check that it holds for $n = 0$ by direct calculations.
  Now let $n \in \N_{0}^{d}$ and suppose that $\iota_{n}$ satisfies $T_{y}\iota_{n} \circ J_{y} = \rmi\cdot T_{y}\iota_{n}$, and let $e_{j} \in \R^{d}$ be the unit vector with 1 in the $j$-th component with $j \in \{1, \dots, d\}$.
  Then \eqref{eq:chi_n-raised} implies that the embeddings $\iota_{n}$ and $\iota_{n+e_{j}}$ are related as
  \begin{equation*}
    \iota_{n+e_{j}} = \frac{1}{\sqrt{n_{j} + 1}}\mathscr{L}_{j}^{*} \circ \iota_{n}
    \quad
    \text{or}
    \quad
    \begin{tikzcd}[row sep=7ex, column sep=7ex]
      & L^{2}(\R^{d}) \\
      \mathcal{M} \arrow[r,hook,swap,"\iota_{n}"] \arrow[ru,hook,"\iota_{n+e_{j}}"] & L^{2}(\R^{d}) \arrow[u,swap,"\frac{1}{\sqrt{n_{j} + 1}}\mathscr{L}_{j}^{*}"]
    \end{tikzcd}.
  \end{equation*}
  However, since $\mathscr{L}_{j}^{*}$ is a linear operator, we have, for any $y \in \mathcal{M}$,
  \begin{equation*}
    T_{y}\iota_{n+e_{j}} = \frac{1}{\sqrt{n_{j} + 1}}\mathscr{L}_{j}^{*} \circ T_{y}\iota_{n}.
  \end{equation*}
  But then this implies that
  \begin{align*}
    T_{y}\iota_{n+e_{j}} \circ J_{y}
    &= \frac{1}{\sqrt{n_{j} + 1}}\mathscr{L}_{j}^{*} \circ T_{y}\iota_{n} \circ J_{y} \\
    &= \frac{\rmi}{\sqrt{n_{j} + 1}}\mathscr{L}_{j}^{*} \circ T_{y}\iota_{n} \\
    &= \rmi\cdot T_{y}\iota_{n+e_{j}}.
  \end{align*}

  The expression \eqref{eq:Theta_M} follows from the following recurrence relation that holds between the one-forms $\{ \Theta_{\mathcal{M}}^{(n)} \}_{n \in \N_{0}^{d}}$:
  Let $n \in \N_{0}^{d}$ and $e_{j}$ be as above; then, as we shall prove later,
  \begin{equation}
    \label{eq:rec_rel-Theta}
    \Theta_{\mathcal{M}}^{(n+e_{j})} = \Theta_{\mathcal{M}}^{(n)} - \frac{\hbar}{2} \mathcal{N}_{\hbar}(\mathcal{B},\delta) \parentheses{ \mathcal{B}^{-1/2}\,\d\mathcal{A}\,\mathcal{B}^{-1/2} }_{jj},
  \end{equation}
  where summation on the index $j$ is {\em not} assumed on the right-hand side, i.e., it is the $(j,j)$-entry of the matrix $\mathcal{B}^{-1/2}\,\d\mathcal{A}\,\mathcal{B}^{-1/2}$.
  But then direct calculations yield, as is done in \cite{OhLe2013},
  \begin{equation*}
    \Theta_{\mathcal{M}}^{(0)} \defeq \iota_{0}^{*}\Theta = \mathcal{N}_{\hbar}(\mathcal{B},\delta) \parentheses{
      p_{i}\,\mathbf{d}q_{i} - \frac{\hbar}{4}\tr\parentheses{ \mathcal{B}^{-1/2}\,\mathbf{d}\mathcal{A}\,\mathcal{B}^{-1/2} } - \mathbf{d}\phi
    },
  \end{equation*}
  and hence we obtain \eqref{eq:Theta_M} as follows:
  \begin{align*}
    \Theta_{\mathcal{M}}^{(n)}
    &= \Theta_{\mathcal{M}}^{(0)} - \frac{\hbar}{2} \mathcal{N}_{\hbar}(\mathcal{B},\delta) \sum_{j=1}^{d} n_{j} (\mathcal{B}^{-1/2}\,\d\mathcal{A}\,\mathcal{B}^{-1/2})_{jj} \\
    &= \mathcal{N}_{\hbar}(\mathcal{B},\delta) \parentheses{
      p_{i}\,\mathbf{d}q_{i} - \frac{\hbar}{4}\tr\parentheses{ \mathcal{B}^{-1/2}\Lambda^{(n)}\mathcal{B}^{-1/2}\,\mathbf{d}\mathcal{A} } - \mathbf{d}\phi
      } \\
    &= \mathcal{N}_{\hbar}(\mathcal{B},\delta) \parentheses{
      p_{i}\,\mathbf{d}q_{i} - \frac{\hbar}{4}\tr\parentheses{ (\mathcal{B}^{(n)})^{-1} \mathbf{d}\mathcal{A} } - \mathbf{d}\phi
      },
  \end{align*}
  where $\mathcal{B}^{(n)}$ is defined in \eqref{eq:mathcalB_n}.
  Then we have
  \begin{align*}
    \Omega_{\mathcal{M}}^{(n)}
    &= \iota_{n}^{*}\Omega \\
    &= -\iota_{n}^{*}\d\Theta \\
    &= -\d\iota_{n}^{*}\Theta \\
    &= -\d\Theta_{\mathcal{M}}^{(n)},
  \end{align*}
  and the expression~\eqref{eq:Omega_M} for $\Omega_{\mathcal{M}}^{(n)}$ follows from tedious but straightforward calculations; note that 
  \begin{align*}
    \d\mathcal{N}_{\hbar}(\mathcal{B},\delta)
    &= \mathcal{N}_{\hbar}(\mathcal{B},\delta) \parentheses{
      -\frac{1}{2}\tr\parentheses{ \mathcal{B}^{-1}\d\mathcal{B} }
      -\frac{2}{\hbar}\d\delta
    } \\
    &= \mathcal{N}_{\hbar}(\mathcal{B},\delta) \parentheses{
      -\frac{1}{2}\tr\parentheses{ (\mathcal{B}^{(n)})^{-1}\d\mathcal{B}^{(n)} }
      -\frac{2}{\hbar}\d\delta
    }.
  \end{align*}
  So it remains to prove the recurrence relation~\eqref{eq:rec_rel-Theta}.
  Using \eqref{eq:chi_n-raised}, we have
  \begin{align*}
    \Theta_{\mathcal{M}}^{(n+e_{j})}
    &= \iota_{n+e_{j}}^{*}\Theta \\
    &= -\hbar \Im\ip{ \chi_{n+e_{j}} }{ D_{y}\chi_{n+e_{j}} } \cdot \d{y} \\
    &= -\frac{\hbar}{n_{j}+1} \Im\ip{ \mathscr{L}^{*}_{j} \chi_{n} }{ D_{y}\parentheses{ \mathscr{L}^{*}_{j} \chi_{n} } } \cdot \d{y} \\
    &= -\frac{\hbar}{n_{j}+1} \Bigl(
      \Im\ip{ \mathscr{L}^{*}_{j} \chi_{n} }{  D_{y}\mathscr{L}^{*}_{j} \chi_{n} }
       + \Im\ip{ \mathscr{L}^{*}_{j} \chi_{n} }{ \mathscr{L}^{*}_{j} D_{y}\chi_{n} }
      \Bigr) \cdot \d{y},
  \end{align*}
  where again no summation is assumed on $j$.
  Using the properties \eqref{eq:chi_n-raised} and \eqref{eq:chi_n-lowered} of the ladder operators, we have
  \begin{align*}
    \Im\ip{ \mathscr{L}^{*}_{j} \chi_{n} }{ \mathscr{L}^{*}_{j} D_{y}\chi_{n} } \cdot \d{y}
    &= \Im\ip{ \mathscr{L}_{j} \mathscr{L}^{*}_{j} \chi_{n} }{ D_{y}\chi_{n} } \cdot \d{y} \\
    &= (n_{j}+1) \Im\ip{ \chi_{n} }{ D_{y}\chi_{n} } \cdot \d{y} \\
    &= -\frac{n_{j}+1}{\hbar} \Theta_{\mathcal{M}}^{(n)}.
  \end{align*}
  Therefore we obtain the recurrence relation
  \begin{equation}
    \label{eq:rec_rel-Theta-pre}
    \Theta_{\mathcal{M}}^{(n+e_{j})}
    = \Theta_{\mathcal{M}}^{(n)}
      - \frac{\hbar}{n_{j}+1}
      \Im\ip{ \mathscr{L}^{*}_{j} \chi_{n} }{ D_{y}\mathscr{L}^{*}_{j} \chi_{n} } \cdot \d{y}.
  \end{equation}
  Let us evaluate the second term on the right-hand side.
  Taking the derivatives of \eqref{eq:raising_operator} with respect to $(q,p)$, we have
  \begin{equation*}
    D_{q_{l}}\mathscr{L}^{*}_{j} = -\frac{\rmi}{\sqrt{2\hbar}}\,\mathcal{B}^{-1/2}_{jk}(\mathcal{A} - \rmi\mathcal{B})_{kl},
    \qquad
    D_{p_{l}}\mathscr{L}^{*}_{j} = \frac{\rmi}{\sqrt{2\hbar}}\,\mathcal{B}^{-1/2}_{jl},
  \end{equation*}
  and hence
  \begin{align*}
    \ip{ \mathscr{L}^{*}_{j} \chi_{n} }{ D_{q_{l}}\mathscr{L}^{*}_{j} \chi_{n} }
    &= -\frac{\rmi}{\sqrt{2\hbar}}\,\mathcal{B}^{-1/2}_{jk}(\mathcal{A} - \rmi\mathcal{B})_{kl} \ip{ \mathscr{L}^{*}_{j} \chi_{n} }{ \chi_{n} }, \\
    \ip{ \mathscr{L}^{*}_{j} \chi_{n} }{ D_{p_{l}}\mathscr{L}^{*}_{j} \chi_{n} }
    &= \frac{\rmi}{\sqrt{2\hbar}}\,\mathcal{B}^{-1/2}_{jl} \ip{ \mathscr{L}^{*}_{j} \chi_{n} }{ \chi_{n} }.
  \end{align*}
  However, they both vanish due to the orthogonality of the basis $\{ \chi_{n} \}_{n\in\N_{0}^{d}}$:
  \begin{equation*}
    \ip{ \mathscr{L}^{*}_{j} \chi_{n} }{ \chi_{n} }
    = \sqrt{n_{j}+1}\,\tip{ \chi_{n+{e_{j}}} }{ \chi_{n} } = 0.
  \end{equation*}
  On the other hand, taking the derivatives of \eqref{eq:raising_operator} with respect to $\mathcal{A}$ and $\mathcal{B}$, we have
  \begin{align*}
    D_{\mathcal{A}_{lr}}\mathscr{L}^{*}_{j}
    &= \frac{\rmi}{2\sqrt{2\hbar}}\,\parentheses{
      \mathcal{B}^{-1/2}_{jl}(\hat{x} - q)_{r} + \mathcal{B}^{-1/2}_{jr}(\hat{x} - q)_{l}
      } \\
    &= \frac{\rmi}{4}\, \parentheses{ \mathcal{B}^{-1/2}_{jl} \mathcal{B}^{-1/2}_{rk} + \mathcal{B}^{-1/2}_{jr} \mathcal{B}^{-1/2}_{lk} } ( \mathscr{L}_{k} + \mathscr{L}^{*}_{k} )
  \end{align*}
  and
  \begin{align*}
    D_{\mathcal{B}_{lr}}\mathscr{L}^{*}_{j}
    &= \frac{\rmi}{\sqrt{2\hbar}}\, D_{\mathcal{B}_{lr}}\mathcal{B}^{-1/2}_{js} \bigl(
      (\mathcal{A} - \rmi\mathcal{B}) (\hat{x} - q) - (\hat{p} - p)
      \bigr)_{s}
      + \frac{1}{2\sqrt{2\hbar}}\, \parentheses{ \mathcal{B}^{-1/2}_{jl} (\hat{x} - q)_{r} + \mathcal{B}^{-1/2}_{jr} (\hat{x} - q)_{l} }\\
    &= D_{\mathcal{B}_{lr}}\mathcal{B}^{-1/2}_{js} \mathcal{B}^{1/2}_{su} \mathscr{L}^{*}_{u}
      + \frac{1}{4} \parentheses{ \mathcal{B}^{-1/2}_{jl} \mathcal{B}^{-1/2}_{ru} + \mathcal{B}^{-1/2}_{jr} \mathcal{B}^{-1/2}_{lu} } ( \mathscr{L}_{u} + \mathscr{L}^{*}_{u} ),
  \end{align*}
  where we used \eqref{eq:raising_operator} as well as the following identity that follows from \eqref{eq:ladder_operators}:
  \begin{equation}
    \label{eq:x-q_in_As}
    \hat{x} - q = \sqrt{\frac{\hbar}{2}}\,\mathcal{B}^{-1/2} ( \mathscr{L} + \mathscr{L}^{*} ).
  \end{equation}
  So we have 
  \begin{align*}
    \ip{ \mathscr{L}^{*}_{j} \chi_{n} }{ D_{\mathcal{A}_{lr}}\mathscr{L}^{*}_{j} \chi_{n} }
    &= \frac{\rmi}{4}\,\parentheses{ \mathcal{B}^{-1/2}_{jl} \mathcal{B}^{-1/2}_{rk} + \mathcal{B}^{-1/2}_{jr} \mathcal{B}^{-1/2}_{lk} } \parentheses{ 
      \ip{ \mathscr{L}^{*}_{j} \chi_{n} }{ \mathscr{L}_{k} \chi_{n} }
      + \ip{ \mathscr{L}^{*}_{j} \chi_{n} }{ \mathscr{L}^{*}_{k} \chi_{n} }
      } \\
    &= (n_{j}+1)\frac{\rmi}{4}\,\mathcal{N}_{\hbar}(\mathcal{B},\delta)\,\parentheses{ \mathcal{B}^{-1/2}_{jl} \mathcal{B}^{-1/2}_{rj} + \mathcal{B}^{-1/2}_{jr} \mathcal{B}^{-1/2}_{lj} }
  \end{align*}
  with no summation on the index $j$, since 
  \begin{equation}
    \label{eq:inner_products}
    \ip{ \mathscr{L}^{*}_{j} \chi_{n} }{ \mathscr{L}_{k} \chi_{n} } = 0,
    \qquad
    \ip{ \mathscr{L}^{*}_{j} \chi_{n} }{ \mathscr{L}^{*}_{k} \chi_{n} } = \delta_{jk}\, (n_{j}+1) \mathcal{N}_{\hbar}(\mathcal{B},\delta)
  \end{equation}
  due to \eqref{eq:chi_n-raised} and \eqref{eq:chi_n-lowered}.
  On the other hand, we see that the term $\tip{ \mathscr{L}^{*}_{j} \chi_{n} }{ D_{\mathcal{B}_{lr}}\mathscr{L}^{*}_{j} \chi_{n} }$ is real and hence does not contribute to \eqref{eq:rec_rel-Theta-pre}.
  As a result, we obtain
  \begin{align*}
    \Im\ip{ \mathscr{L}^{*}_{j} \chi_{n} }{ D_{y}\mathscr{L}^{*}_{j} \chi_{n} } \cdot \d{y}
    &= \Im\ip{ \mathscr{L}^{*}_{j} \chi_{n} }{ D_{\mathcal{A}_{lr}}\mathscr{L}^{*}_{j} \chi_{n} } \d\mathcal{A}_{lr} \\
    &= \frac{n_{j}+1}{2}\,\mathcal{N}_{\hbar}(\mathcal{B},\delta) (\mathcal{B}^{-1/2}\,\d\mathcal{A}\,\mathcal{B}^{-1/2})_{jj},
  \end{align*}
  and hence substituting this into \eqref{eq:rec_rel-Theta-pre} yields the recurrence relation \eqref{eq:rec_rel-Theta}.
\end{proof}

\section{Hamiltonian Dynamics of Semiclassical Wave Packets}
\label{sec:Hamiltonian_Dynamics_of_SWP}
Now that we have the symplectic forms $\{ \Omega_{\mathcal{M}}^{(n)} \}_{n\in\N_{0}^{d}}$, it remains to find the Hamiltonians $\{ H^{(n)} \}_{n\in\N_{0}^{d}}$ that correspond to the semiclassical wave packets $\{ \chi_{n} \}_{n\in\N_{0}^{d}}$ in order to formulate Hamiltonian dynamics for them.

In our previous work~\cite{OhLe2013}, we found the Hamiltonian $H^{(0)}$ corresponding to the Gaussian $\chi_{0}$ via an asymptotic expansion of the pull-back of the expectation value of the Hamiltonian operator $\hat{H}$, i.e.,
\begin{equation}
  \label{eq:H^0}
  \texval{\hat{H}}^{(0)} \defeq \iota_{0}^{*}\texval{\hat{H}}
  = \bigl\langle \chi_{0}, \hat{H} \chi_{0} \bigr\rangle \\
  = H^{(0)} + O(\hbar^{2}).
\end{equation}
Then the Hamiltonian system $\ins{X_{H^{(0)}}}{\Omega^{(0)}} = \d{H^{(0)}}$ yields \eqref{eq:Heller} with the second equation replaced by \eqref{eq:pdot-chi_0}.
In this section, we would like to generalize this result to $\chi_{n}$ with an arbitrary $n\in\N_{0}^{d}$.

\subsection{Semiclassical Hamiltonians}
Let us find an asymptotic expansion for the expectation value
\begin{equation*}
  \texval{\hat{H}}^{(n)} \defeq \iota_{n}^{*}\texval{\hat{H}}
  = \bigl\langle \chi_{n}, \hat{H} \chi_{n} \bigr\rangle
\end{equation*}
of the Schr\"odinger operator $\hat{H}$ in \eqref{eq:SchroedingerOp} with respect to the semiclassical wave packet $\chi_{n}$.
We will evaluate the kinetic and potential parts of $\{ H^{(n)} \}_{n\in\N_{0}^{d}}$ separately:
It turns out that the kinetic part can be found again via a recurrence relation by induction on $n$, whereas the potential part can be evaluated directly as an asymptotic expansion in $\hbar$ for any $n \in \N_{0}^{d}$ under a reasonable technical assumption on the potential $V$.

\begin{proposition}
  \label{prop:H_M}
  Suppose that the potential $V$ is in $C^{3}(\R^{d})$ and that there exist $C_{1}, C_{2}, M_{1} \in \R$ such that $C_{1} \le V(x)$ and for any $\alpha \in \N_{0}^{d}$ with $|\alpha| = 3$,
  \begin{equation}
    \label{eq:assumption_on_D3V}
    |D^{\alpha}V(x)| \le C_{2} \exp(M_{1}|x|^{2}).
  \end{equation}
  Then the expectation value $\texval{\hat{H}}^{(n)}$ for each $n\in\N_{0}^{d}$ has the asymptotic expansion
  \begin{equation}
    \label{eq:exvalH-asymptotic}
    \texval{\hat{H}}^{(n)} = H^{(n)} + \mathcal{N}_{\hbar}(\mathcal{B},\delta)\,O(\hbar^{3/2}),
  \end{equation}
  where $H^{(n)}\colon \mathcal{M} \to \R$ is defined as
  \begin{equation}
    \label{eq:H_M}
    H^{(n)} \defeq \mathcal{N}_{\hbar}(\mathcal{B},\delta) \Biggl\{
    \frac{p^{2}}{2m} + \frac{\hbar}{4m} \tr\!\parentheses{ (\mathcal{B}^{(n)})^{-1}(\mathcal{A}^{2} + \mathcal{B}^{2}) } \\
    + V(q) + \frac{\hbar}{4} \tr\!\parentheses{  (\mathcal{B}^{(n)})^{-1} D^{2}V(q) }
    \Biggr\},
  \end{equation}
  with $\mathcal{B}^{(n)}$ defined in \eqref{eq:mathcalB_n}.
\end{proposition}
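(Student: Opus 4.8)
The plan is to split the Schr\"odinger operator \eqref{eq:SchroedingerOp} as $\hat{H} = \hat{T} + V(\hat{x})$ with $\hat{T} \defeq \hat{p}^{2}/(2m)$ and to compute the two expectations $\ip{\chi_{n}}{\hat{T}\chi_{n}}$ and $\ip{\chi_{n}}{V(\hat{x})\chi_{n}}$ separately. The point is that the kinetic expectation is \emph{exact}---no expansion in $\hbar$ is needed---while the $O(\hbar^{3/2})$ error in \eqref{eq:exvalH-asymptotic} arises entirely from Taylor-expanding $V$ about $q$ and estimating the remainder. Throughout I would lean on the exact ladder-operator identities already used in the proof of Proposition~\ref{prop:Omega_M}: namely $\mathscr{L}_{k}\chi_{n} = \sqrt{n_{k}}\,\chi_{n-e_{k}}$ and $\mathscr{L}^{*}_{k}\chi_{n} = \sqrt{n_{k}+1}\,\chi_{n+e_{k}}$ from \eqref{eq:chi_n-raised}--\eqref{eq:chi_n-lowered}, together with $\norm{\chi_{m}}^{2} = \mathcal{N}_{\hbar}(\mathcal{B},\delta)$ for every $m$ and the mutual orthogonality of the $\chi_{m}$.

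For the kinetic term, the first step is to derive the companion of \eqref{eq:x-q_in_As} for the momentum by inverting the ladder operators \eqref{eq:ladder_operators}, which gives
\[
  \hat{p} - p = \sqrt{\frac{\hbar}{2}}\brackets{ \mathcal{A}\mathcal{B}^{-1/2}(\mathscr{L} + \mathscr{L}^{*}) + \rmi\,\mathcal{B}^{1/2}(\mathscr{L}^{*} - \mathscr{L}) }.
\]
Since $p$ is a c-number and $\hat{p} - p$ is self-adjoint, $\ip{\chi_{n}}{\hat{p}^{2}\chi_{n}} = \mathcal{N}_{\hbar}\,p^{2} + \norm{(\hat{p}-p)\chi_{n}}^{2}$, the cross term vanishing because $(\hat{p}-p)_{l}\chi_{n}$ is a combination of $\chi_{n\pm e_{k}} \perp \chi_{n}$. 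Expanding $(\hat{p}-p)_{l}\chi_{n}$ with the formula above and using the orthogonality identities, the $\mathscr{L}$--$\mathscr{L}^{*}$ cross terms drop out and the diagonal ones contribute the factor $n_{k} + (n_{k}+1) = 2n_{k}+1$; summing over $l$ and recognizing $\sum_{l}[(\mathcal{A}\mathcal{B}^{-1/2})_{lk}^{2} + (\mathcal{B}^{1/2})_{lk}^{2}] = [\mathcal{B}^{-1/2}(\mathcal{A}^{2}+\mathcal{B}^{2})\mathcal{B}^{-1/2}]_{kk}$ assembles, via \eqref{eq:mathcalB_n}--\eqref{eq:Lambda_n} and cyclicity of the trace, into $\norm{(\hat{p}-p)\chi_{n}}^{2} = \tfrac{\hbar}{2}\mathcal{N}_{\hbar}\tr((\mathcal{B}^{(n)})^{-1}(\mathcal{A}^{2}+\mathcal{B}^{2}))$. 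Dividing by $2m$ reproduces the kinetic part of \eqref{eq:H_M} exactly. (Equivalently one could set up a recurrence in $n$ paralleling \eqref{eq:rec_rel-Theta}, but the direct computation is cleaner.)

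For the potential term I would Taylor-expand $V(x) = V(q) + D_{i}V(q)(x-q)_{i} + \tfrac{1}{2}D^{2}_{ij}V(q)(x-q)_{i}(x-q)_{j} + R(x)$ with the integral form of the remainder, whose coefficients involve only third derivatives of $V$. The zeroth-order piece gives $V(q)\mathcal{N}_{\hbar}$; the first-order piece vanishes since $\ip{\chi_{n}}{(\hat{x}-q)_{i}\chi_{n}} = 0$ by \eqref{eq:x-q_in_As} and orthogonality; and for the second-order piece, \eqref{eq:x-q_in_As} with the same identities gives $\ip{\chi_{n}}{(\hat{x}-q)_{i}(\hat{x}-q)_{j}\chi_{n}} = \tfrac{\hbar}{2}\mathcal{N}_{\hbar}\,(\mathcal{B}^{-1/2}\Lambda^{(n)}\mathcal{B}^{-1/2})_{ij} = \tfrac{\hbar}{2}\mathcal{N}_{\hbar}\,((\mathcal{B}^{(n)})^{-1})_{ij}$, which contracts against $\tfrac12 D^{2}V(q)$ to give $\tfrac{\hbar}{4}\mathcal{N}_{\hbar}\tr((\mathcal{B}^{(n)})^{-1}D^{2}V(q))$. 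Combined with the kinetic part this is exactly $H^{(n)}$ in \eqref{eq:H_M}, so the whole proposition reduces to showing $\ip{\chi_{n}}{R(\hat{x})\chi_{n}} = \mathcal{N}_{\hbar}\,O(\hbar^{3/2})$.

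The hard part will be this remainder estimate. Using \eqref{eq:assumption_on_D3V} I would bound $\abs{\ip{\chi_{n}}{R(\hat{x})\chi_{n}}} \le C\int \abs{\chi_{n}(x)}^{2}\,\abs{x-q}^{3}\,\rme^{M_{1}\abs{x}^{2}}\,dx$ and show the right-hand side is $\mathcal{N}_{\hbar}\,O(\hbar^{3/2})$. Two ingredients are needed. First, $\chi_{n}$ is a polynomial of degree $\abs{n}$ in $x-q$ times the Gaussian $\chi_{0}$, so $\abs{\chi_{n}(x)}^{2}$ decays like $\exp(-(x-q)^{T}\mathcal{B}(x-q)/\hbar)$; writing $\abs{x}^{2} \le 2\abs{x-q}^{2} + 2\abs{q}^{2}$, for $\hbar < \lambda_{\min}(\mathcal{B})/(2M_{1})$ this Gaussian decay dominates the growth $\rme^{M_{1}\abs{x}^{2}}$ and the integral converges, with constants controlled uniformly in $\hbar$ on that range. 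Second, the rescaling $x = q + \sqrt{\hbar}\,\xi$ turns $\abs{x-q}^{3}$ into $\hbar^{3/2}\abs{\xi}^{3}$ and reduces the integral to an $\hbar$-independent (polynomial times Gaussian) integral, while the Gaussian prefactor reproduces exactly $\mathcal{N}_{\hbar}$. Making the uniform-in-$\hbar$ domination of the exponentially growing tail rigorous---so that the implied $O(\hbar^{3/2})$ constant is genuinely independent of $\hbar$---is the real technical obstacle; everything else is bookkeeping with the ladder-operator identities inherited from Section~\ref{sec:Symplectic_Structures}.
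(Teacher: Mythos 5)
Your proposal is correct, and it reaches \eqref{eq:H_M} by a partly different route. For the kinetic term you compute $\bigl\langle \chi_{n},\hat{p}^{2}\chi_{n}\bigr\rangle$ directly, writing $\hat{p}^{2} = p^{2} + 2p\cdot(\hat{p}-p) + (\hat{p}-p)^{2}$ and expanding $(\hat{p}-p)\chi_{n}$ in the ladder basis so that only the diagonal contributions $n_{k}+(n_{k}+1)=2n_{k}+1$ survive; the paper instead sets up a recurrence in $n$ via the commutator $\bigl[\hat{T},\mathscr{L}^{*}_{j}\bigr]$, exactly parallel to the recurrence \eqref{eq:rec_rel-Theta} used for the one-forms, and sums it starting from the Gaussian value $\texval{\hat{T}}^{(0)}$. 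Both yield the kinetic part of \eqref{eq:H_M} exactly (your observation that no $\hbar$-expansion is needed there agrees with the paper); your version is more self-contained, while the paper's keeps the inductive structure uniform across the symplectic form and the Hamiltonian. The potential term is treated the same way in both (Taylor expansion, vanishing first moment, second moment $\tfrac{\hbar}{2}\mathcal{N}_{\hbar}(\mathcal{B}^{-1/2}\Lambda^{(n)}\mathcal{B}^{-1/2})_{ij}$); the only difference is the remainder: you dominate the whole integral at once by the Gaussian decay of $|\chi_{n}|^{2}$, valid once $\hbar < \lambda_{\min}(\mathcal{B})/(2M_{1})$ or so, and rescale $x = q+\sqrt{\hbar}\,\xi$, whereas the paper follows Hagedorn's Theorem~2.9 and splits $\R^{d}$ into a fixed ball $\bar{\mathbb{B}}_{R}(q)$ (where $D^{\alpha}V$ is bounded and the cited moment bound gives $O(\hbar^{3/2})$) and its complement (where the Gaussian tail is $o(\hbar^{\gamma})$ for every $\gamma$). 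These are equivalent in strength here; your version leans on the polynomial-times-Gaussian structure of $\chi_{n}$, which is a known qualitative fact even though no explicit formula for the polynomial is available, so that is a legitimate ingredient. No gap.
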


\begin{remark}
  \label{rem:N}
  The quantity $\mathcal{N}_{\hbar}(\mathcal{B},\delta) = \norm{ \chi_{n}(y;\,\cdot\,) }^{2}$ depends on $\hbar$ as shown in \eqref{eq:N}.
  However, as we shall see later in Proposition~\ref{prop:Omega_M-reduced}, $\mathcal{N}_{\hbar}(\mathcal{B},\delta)$ is conserved along the Hamiltonian dynamics of the parameters $y = (q,p,\mathcal{A},\mathcal{B},\phi,\delta)$ that we derive later.
  Therefore, upon normalizing the wave packet $\chi_{n}(y;\,\cdot\,)$ in the initial condition by setting $\mathcal{N}_{\hbar}(\mathcal{B},\delta) = 1$, it stays so all time; hence we may assume $\mathcal{N}_{\hbar}(\mathcal{B},\delta) = O(1)$.
\end{remark}

\begin{remark}
  The error term becomes $O(\hbar^{2})$ if $V$ is $C^{4}(\R^{d})$ and assuming \eqref{eq:assumption_on_D3V} for $|\alpha| = 4$.
  In fact, the asymptotic expansion in \eqref{eq:H^0} assumes that $V$ is smooth and satisfies a condition similar to \eqref{eq:assumption_on_D3V}; see \cite[Proposition~7.1]{OhLe2013}.
\end{remark}

\begin{proof}
  Note first that the assumption that $V$ is bounded from below guarantees that the Schr\"odinger operator~\eqref{eq:SchroedingerOp} is essentially self-adjoint.
  Let us split the expectation value of the Hamiltonian into the kinetic and potential parts, i.e.,
  \begin{equation*}
    \texval{\hat{H}}^{(n)} = \texval{\hat{T}}^{(n)} + \exval{V}^{(n)}
  \end{equation*}
  with $\hat{T} \defeq \hat{p}^{2}/(2m)$, and first evaluate the kinetic part.
  We see that
  \begin{align*}
    \texval{\hat{T}}^{(n+e_{j})}
    &= \bigl\langle \chi_{n+e_{j}}, \hat{T} \chi_{n+e_{j}} \bigr\rangle \\
    &= \frac{1}{n_{j}+1} \bigl\langle \mathscr{L}^{*}_{j} \chi_{n}, \hat{T} \mathscr{L}^{*}_{j} \chi_{n} \bigr\rangle \\
    &= \frac{1}{n_{j}+1} \parentheses{
      \bigl\langle \mathscr{L}^{*}_{j} \chi_{n}, \bigl[\hat{T}, \mathscr{L}^{*}_{j}\bigr] \chi_{n} \bigr\rangle
      + \bigl\langle \mathscr{L}^{*}_{j} \chi_{n}, \mathscr{L}^{*}_{j} \hat{T} \chi_{n} \bigr\rangle
      },
  \end{align*}
  but then
  \begin{align*}
    \bigl\langle \mathscr{L}^{*}_{j} \chi_{n}, \mathscr{L}^{*}_{j} \hat{T} \chi_{n} \bigr\rangle
    &= \bigl\langle \mathscr{L}_{j} \mathscr{L}^{*}_{j} \chi_{n}, \hat{T} \chi_{n} \bigr\rangle \\
    &= (n_{j}+1) \bigl\langle \chi_{n}, \hat{T} \chi_{n} \bigr\rangle \\
    &= (n_{j}+1) \texval{\hat{T}}^{(n)},
  \end{align*}
  and hence we have the recurrence relation
  \begin{equation*}
    \texval{\hat{T}}^{(n+e_{j})} = \texval{\hat{T}}^{(n)} + \frac{1}{n_{j}+1} \bigl\langle \mathscr{L}^{*}_{j} \chi_{n}, \bigl[\hat{T}, \mathscr{L}^{*}_{j}\bigr] \chi_{n} \bigr\rangle.
  \end{equation*}
  
  Let us evaluate the second term on the right-hand side.
  It is straightforward to see that, using \eqref{eq:raising_operator},
  \begin{equation*}
    \bigl[ \hat{T}, \mathscr{L}^{*} \bigr]
    = \frac{1}{m}\sqrt{\frac{\hbar}{2}}\, \mathcal{B}^{-1/2} (\mathcal{A} - \rmi\mathcal{B}) \hat{p},
  \end{equation*}
  and hence we have
  \begin{align*}
    \bigl\langle \mathscr{L}^{*}_{j} \chi_{n}, \bigl[\hat{T}, \mathscr{L}^{*}_{j}\bigr] \chi_{n} \bigr\rangle
    &= \frac{1}{m}\sqrt{\frac{\hbar}{2}}\, \mathcal{B}^{-1/2}_{jk}
      (\mathcal{A} - \rmi\mathcal{B})_{kl} \bigip{ \mathscr{L}^{*}_{j} \chi_{n} }{ \hat{p}_{l} \chi_{n} }.
  \end{align*}
  It is easy to see from the definition \eqref{eq:ladder_operators} of the ladder operators (see also \eqref{eq:x-q_in_As}) that
  \begin{equation*}
    \hat{p} - p = \sqrt{\frac{\hbar}{2}}\,\bigl(
    (\mathcal{A} - \rmi\mathcal{B}) \mathcal{B}^{-1/2} \mathscr{L}
    + (\mathcal{A} + \rmi\mathcal{B}) \mathcal{B}^{-1/2} \mathscr{L}^{*}
    \bigr),
  \end{equation*}
  and so
  \begin{align*}
    \bigip{ \mathscr{L}^{*}_{j} \chi_{n} }{ \hat{p}_{l} \chi_{n} }
    &= \bigip{ \mathscr{L}^{*}_{j} \chi_{n} }{ \chi_{n} } p_{l} \\
    &\quad + \sqrt{\frac{\hbar}{2}}\,\bigl(
      (\mathcal{A} - \rmi\mathcal{B})_{lr} \mathcal{B}^{-1/2}_{rs} \bigip{ \mathscr{L}^{*}_{j} \chi_{n} }{ \mathscr{L}_{s} \chi_{n} }
      + (\mathcal{A} + \rmi\mathcal{B})_{lr} \mathcal{B}^{-1/2}_{rs} \bigip{ \mathscr{L}^{*}_{j} \chi_{n} }{ \mathscr{L}^{*}_{s} \chi_{n} }
      \bigr) \\
    &= (n_{j}+1)\sqrt{\frac{\hbar}{2}}\,\mathcal{N}_{\hbar}(\mathcal{B},\delta)\,(\mathcal{A} + \rmi\mathcal{B})_{lr} \mathcal{B}^{-1/2}_{rj}
  \end{align*}
  because, due to the properties~\eqref{eq:chi_n-raised} and \eqref{eq:chi_n-lowered} of the ladder operators and the orthogonality of $\{ \chi_{n} \}_{n\in\N_{0}^{d}}$,
  \begin{equation*}
    \bigip{ \mathscr{L}^{*}_{j} \chi_{n} }{ \chi_{n} } = 0,
    \qquad
    \bigip{ \mathscr{L}^{*}_{j} \chi_{n} }{ \mathscr{L}_{s} \chi_{n} } = 0,
    \qquad
    \bigip{ \mathscr{L}^{*}_{j} \chi_{n} }{ \mathscr{L}^{*}_{s} \chi_{n} }
    = \delta_{js}\,(n_{j}+1)\,\mathcal{N}_{\hbar}(\mathcal{B},\delta).
  \end{equation*}
  Therefore,
  \begin{align*}
    \bigl\langle \mathscr{L}^{*}_{j} \chi_{n}, \bigl[\hat{T}, \mathscr{L}^{*}_{j}\bigr] \chi_{n} \bigr\rangle
    &= (n_{j}+1) \frac{\hbar}{2m}\, \mathcal{N}_{\hbar}(\mathcal{B},\delta)\,\mathcal{B}^{-1/2}_{jk}
      (\mathcal{A} - \rmi\mathcal{B})_{kl} (\mathcal{A} + \rmi\mathcal{B})_{lr} \mathcal{B}^{-1/2}_{rj} \\
    &= (n_{j}+1) \frac{\hbar}{2m}\, \mathcal{N}_{\hbar}(\mathcal{B},\delta) 
      \parentheses{ \mathcal{B}^{-1/2}(\mathcal{A}^{2} + \mathcal{B}^{2})\mathcal{B}^{-1/2} }_{jj}
  \end{align*}
  since both $\mathcal{A}$ and $\mathcal{B}$ are symmetric.
  As a result, we obtain the recurrence relation
  \begin{equation*}
    \texval{\hat{T}}^{(n+e_{j})} = \texval{\hat{T}}^{(n)}
    + \frac{\hbar}{2m}\, \mathcal{N}_{\hbar}(\mathcal{B},\delta) 
    \parentheses{ \mathcal{B}^{-1/2}(\mathcal{A}^{2} + \mathcal{B}^{2})\mathcal{B}^{-1/2} }_{jj}.
  \end{equation*}
  It is easy to see by direct calculations that, as in \cite{OhLe2013},
  \begin{equation*}
    \texval{\hat{T}}^{(0)} = \mathcal{N}_{\hbar}(\mathcal{B},\delta) \braces{
      \frac{p^{2}}{2m} + \frac{\hbar}{4m} \tr\parentheses{ \mathcal{B}^{-1/2}(\mathcal{A}^{2} + \mathcal{B}^{2})\mathcal{B}^{-1/2} }
    }.
  \end{equation*}
  Hence the recurrence relation yields
  \begin{align*}
    \texval{\hat{T}}^{(n)}
    &= \texval{\hat{T}}^{(0)}
      + \frac{\hbar}{2m}\, \mathcal{N}_{\hbar}(\mathcal{B},\delta) \sum_{j=1}^{d} n_{j} \parentheses{ \mathcal{B}^{-1/2}(\mathcal{A}^{2} + \mathcal{B}^{2})\mathcal{B}^{-1/2} }_{jj} \\
    &= \mathcal{N}_{\hbar}(\mathcal{B},\delta) \braces{
      \frac{p^{2}}{2m}
      + \frac{\hbar}{4m} \tr\parentheses{ \mathcal{B}^{-1/2} \Lambda^{(n)} \mathcal{B}^{-1/2}(\mathcal{A}^{2} + \mathcal{B}^{2}) }
      } \\
    &= \mathcal{N}_{\hbar}(\mathcal{B},\delta) \braces{
      \frac{p^{2}}{2m}
      + \frac{\hbar}{4m} \tr\parentheses{ (\mathcal{B}^{(n)})^{-1}(\mathcal{A}^{2} + \mathcal{B}^{2}) }
      },
  \end{align*}
  where $\Lambda^{(n)}$ and $\mathcal{B}^{(n)}$ are defined in \eqref{eq:Lambda_n} and \eqref{eq:mathcalB_n}.
  
  Next, let us find an asymptotic expansion of the potential term $\exval{V}^{(n)}$.
  We mimic the technique employed in the proof of Theorem~2.9 in \citet{Ha1998}.
  First, since $V$ is assumed to be $C^{3}$, we have, for any $x \in \R^{d}$,
  \begin{equation*}
    V(x) = V(q) + D_{k}V(q) (x - q)_{k} + \frac{1}{2} D^{2}_{kl}V(q) (x - q)^{\otimes^{2}}_{kl}
    + \sum_{|\alpha|=3} \frac{1}{\alpha!} D^{\alpha}V(\sigma(q,x)) (x - q)^{\alpha}
  \end{equation*}
  for some point $\sigma(q,x)$ in the closed ball $\bar{\mathbb{B}}_{|x-q|}(q) \subset \R^{d}$ with radius $|x-q|$ centered at $q$, where we used the shorthands $(x - q)^{\otimes^{2}}_{kl} = (x - q)_{k} (x - q)_{l}$ and $(x - q)^{\alpha} = \prod_{j=1}^{d} (x - q)_{j}^{\alpha_{j}}$.
  Therefore,
  \begin{align*}
    \exval{V}^{(n)}
    &= \ip{ \chi_{n} }{ V \chi_{n} } \\
    &= \norm{ \chi_{n} }^{2}\, V(q)
      + \ip{ \chi_{n} }{ (x - q)_{k} \chi_{n} } D_{k}V(q) 
     + \frac{1}{2}\ip{ \chi_{n} }{ (x - q)^{\otimes^{2}}_{kl} \chi_{n} } D^{2}_{kl}V(q) \\
    &\quad + \sum_{|\alpha|=3} \frac{1}{\alpha!} \ip{ \chi_{n} }{ D^{\alpha}V(\sigma(q,x)) (x - q)^{\alpha} \chi_{n} }.
   \end{align*}
  But then the second term vanishes because, using \eqref{eq:x-q_in_As} and in view of \eqref{eq:chi_n-raised} and \eqref{eq:chi_n-lowered},
  \begin{align*}
    \ip{ \chi_{n} }{ (x - q)_{k} \chi_{n} }
    &= \sqrt{\frac{\hbar}{2}}\,\mathcal{B}^{-1/2}_{kl} \ip{ \chi_{n} }{ ( \mathscr{L}_{l} + \mathscr{L}^{*}_{l} ) \chi_{n} } \\
    &= \sqrt{\frac{\hbar}{2}}\,\mathcal{B}^{-1/2}_{kl}\parentheses{
      \sqrt{n_{l}} \ip{ \chi_{n} }{ \chi_{n-e_{l}} }
      + \sqrt{n_{l}+1} \ip{ \chi_{n} }{ \chi_{n+e_{l}} }
      } \\
    &= 0.
  \end{align*}
  On the other hand, using \eqref{eq:x-q_in_As} and \eqref{eq:inner_products} as well as $\ip{ \mathscr{L}_{j} \chi_{n} }{ \mathscr{L}_{k} \chi_{n} } = \delta_{jk}\, n_{j} \mathcal{N}_{\hbar}(\mathcal{B},\delta)$, we can evaluate the third term as follows:
  \begin{align*}
    \ip{ \chi_{n} }{ (x - q)^{\otimes^{2}}_{kl} \chi_{n} } 
    &= \ip{ (x - q)_{k} \chi_{n} }{ (x - q)_{l} \chi_{n} } \\
    &= \frac{\hbar}{2}\,\mathcal{B}^{-1/2}_{kr}\mathcal{B}^{-1/2}_{ls}
      \bigip{ (\mathscr{L}_{r} + \mathscr{L}_{r}^{*})\chi_{n} }{ (\mathscr{L}_{s} + \mathscr{L}_{s}^{*}) \chi_{n} } \\
    &= \frac{\hbar}{2}\,\mathcal{B}^{-1/2}_{kr}\mathcal{B}^{-1/2}_{ls} \parentheses{
      \bigip{ \mathscr{L}_{r} \chi_{n} }{ \mathscr{L}_{s} \chi_{n} }
      + \bigip{ \mathscr{L}^{*}_{r} \chi_{n} }{ \mathscr{L}^{*}_{s} \chi_{n} }
      } \\
    &= \frac{\hbar}{2}\,\mathcal{N}_{\hbar}(\mathcal{B},\delta)\,\mathcal{B}^{-1/2}_{kr}(2n_{r}+1)\delta_{rs}\mathcal{B}^{-1/2}_{ls} \\
    &= \frac{\hbar}{2}\,\mathcal{N}_{\hbar}(\mathcal{B},\delta)\,\parentheses{ \mathcal{B}^{-1/2}\Lambda^{(n)}\mathcal{B}^{-1/2} }_{kl} \\
    &= \frac{\hbar}{2}\,\mathcal{N}_{\hbar}(\mathcal{B},\delta)\, (\mathcal{B}^{(n)})^{-1}_{kl},
  \end{align*}
  where $\Lambda^{(n)}$ and $\mathcal{B}^{(n)}$ are defined in \eqref{eq:Lambda_n} and \eqref{eq:mathcalB_n}.
  So it remains to show that the last term is $O(\hbar^{3/2})$.
  Let $R > 0$ and set
  \begin{equation*}
    C_{3} \defeq \max_{|\alpha|=3} \max_{x \in \bar{\mathbb{B}}_{R}(q)} \abs{ D^{\alpha}V(x) }.
  \end{equation*}
  If $x \in \bar{\mathbb{B}}_{R}(q)$ then $\sigma(q,x) \in \bar{\mathbb{B}}_{R}(q)$ as well and hence, for any $\alpha \in \N_{0}^{d}$ with $|\alpha| = 3$,
  \begin{align*}
    \abs{ D^{\alpha}V(\sigma(q,x)) (x - q)^{\alpha} \chi_{n}(y;x) }
    \le C_{3}\,\hbar^{3/2}\, \abs{ \parentheses{ \frac{x - q}{\sqrt{\hbar}} }^{\alpha} \chi_{n}(y;x) },
  \end{align*}
  whereas if $x \in \bar{\mathbb{B}}_{R}(q)^{\rm c}$ then, due to the assumption~\eqref{eq:assumption_on_D3V} on the potential $V$, there exists $C_{4} > 0$ such that
  \begin{align*}
    \abs{ D^{\alpha}V(\sigma(q,x)) (x - q)^{\alpha} \chi_{n}(y;x) }
    &\le C_{4} \exp(M_{1}|x|^{2}) \abs{ \chi_{n}(y;x) }.
  \end{align*}
  Let $\mathbf{1}_{S}$ be the indicator function of an arbitrary subset $S \subset \R^{d}$ and also define the normalized wave packet
  \begin{equation*}
    \varphi_{n}(y;x) \defeq \frac{\chi_{n}(y;x)}{\norm{\chi_{n}(y;\,\cdot\,)}}.
  \end{equation*}
  Then we have, for any $\alpha \in \N_{0}^{d}$ with $|\alpha| = 3$ and any $x \in \R^{d}$,
  \begin{align*}
    \abs{ D^{\alpha}V(\sigma(q,x)) (x - q)^{\alpha} \chi_{n}(y;x) }
    &= \mathbf{1}_{\bar{\mathbb{B}}_{R}(q)}(x) \abs{ D^{\alpha}V(\sigma(q,x)) (x - q)^{\alpha} \chi_{n}(y;x) } \\
    &\qquad + \mathbf{1}_{\bar{\mathbb{B}}_{R}(q)^{\rm c}}(x) \abs{ D^{\alpha}V(\sigma(q,x)) (x - q)^{\alpha} \chi_{n}(y;x) } \\
    &\le C_{3}\, \hbar^{3/2}\, \norm{\chi_{n}(y;\,\cdot\,)}\, \mathbf{1}_{\bar{\mathbb{B}}_{R}(q)}(x)\,\abs{ \parentheses{ \frac{x - q}{\sqrt{\hbar}} }^{\alpha} \varphi_{n}(y;x) } \\
    &\quad+ C_{4} \norm{\chi_{n}(y;\,\cdot\,)}\, \mathbf{1}_{\bar{\mathbb{B}}_{R}(q)^{\rm c}}(x)\, \exp(M_{1}|x|^{2}) \abs{ \varphi_{n}(y;x) }.
  \end{align*}
  However, the norm of the first term is $O(\hbar^{3/2})$ because, as shown in \citet[Eq.~(3.30)]{Ha1998},
  \begin{equation*}
    \norm{ \parentheses{ \frac{x - q}{\sqrt{\hbar}} }^{\alpha} \varphi_{n}(y;\,\cdot\,) } = O(1),
  \end{equation*}
  whereas
  \begin{equation*}
    \norm{ \exp(M_{1}|x|^{2}) \varphi_{n}(y;\,\cdot\,) } = o(\hbar^{\gamma})
  \end{equation*}
  for any real number $\gamma$.
  Hence
  \begin{equation*}
    \norm{ D^{\alpha}V(\sigma(q,x)) (x - q)^{\alpha} \chi_{n} } = \norm{\chi_{n}(y;\,\cdot\,)}\,O(\hbar^{3/2}),
  \end{equation*}
  and thus by the Cauchy--Schwarz inequality,
  \begin{align*}
    \sum_{|\alpha|=3} \frac{1}{\alpha!} \abs{ \ip{\chi_{n}}{ D^{\alpha}V(\sigma(q,x)) (x - q)^{\alpha} \chi_{n} } }
    &\le \norm{\chi_{n}} \sum_{|\alpha|=3} \frac{1}{\alpha!} \norm{ D^{\alpha}V(\sigma(q,x)) (x - q)^{\alpha} \chi_{n} } \\
    &\le \norm{\chi_{n}}^{2}\,O(\hbar^{3/2}) \\
    &= \mathcal{N}_{\hbar}(\mathcal{B},\delta)\,O(\hbar^{3/2}).
  \end{align*}
  As a result, we obtain
  \begin{align*}
    \exval{V}^{(n)}
    &= \mathcal{N}_{\hbar}(\mathcal{B},\delta) \parentheses{
      V(q) + \frac{\hbar}{4}\,(\mathcal{B}^{(n)})^{-1}_{kl} D^{2}_{kl}V(q)
      + O(\hbar^{3/2})
      } \\
    &= \mathcal{N}_{\hbar}(\mathcal{B},\delta) \parentheses{
      V(q) + \frac{\hbar}{4}\tr\!\parentheses{ (\mathcal{B}^{(n)})^{-1} D^{2}V(q) }
      + O(\hbar^{3/2})
      }.
  \end{align*}
  Hence we have the asymptotic expansion~\eqref{eq:exvalH-asymptotic} along with \eqref{eq:H_M}.
\end{proof}

\subsection{Hamiltonian Dynamics of Semiclassical Wave Packets}
Now that we have both the symplectic forms $\{ \Omega_{\mathcal{M}}^{(n)} \}_{n\in\N_{0}^{d}}$ and the Hamiltonians $\{ H^{(n)} \}_{n\in\N_{0}^{d}}$ associated with the semiclassical wave packets $\{ \chi_{n}(y;\,\cdot\,) \}_{n\in\N_{0}^{d}}$, we may formulate Hamiltonian dynamics for each of them:
\begin{theorem}[Hamiltonian dynamics of semiclassical wave packets]
  \label{thm:Hamiltonian_system}
  Suppose that the potential $V$ satisfies the conditions stated in Proposition~\ref{prop:H_M}.
  Then, for any $n \in \N_{0}^{d}$, the Hamiltonian vector field $X_{H^{(n)}} \in \mathfrak{X}(\mathcal{M})$ associated with the semiclassical wave packet $\chi_{n}(y;\,\cdot\,)$ is defined by ${\bf i}_{X_{H^{(n)}}} \Omega_{\mathcal{M}}^{(n)} = \mathbf{d}H^{(n)}$ or
  \begin{equation}
    \label{eq:SemiclassicalSystem}
    \begin{array}{c}
      \DS
      \dot{q} = \frac{p}{m},
      \qquad
      \dot{p} = -D_{q}V^{(n)}_{\hbar}(q,\mathcal{B}),
      \medskip\\
      \DS
      \dot{\mathcal{A}} = -\frac{1}{m}\parentheses{ \mathcal{A}^{2} - \frac{1}{2}(\mathcal{B}^{(n)}\Lambda^{(n)}\mathcal{B} + \mathcal{B}\Lambda^{(n)}\mathcal{B}^{(n)}) } - D^{2}V(q),
      \qquad
      \dot{\mathcal{B}}^{(n)} = -\frac{1}{m}(\mathcal{A}\mathcal{B}^{(n)} + \mathcal{B}^{(n)}\mathcal{A}),
      \medskip\\
      \DS
      \dot{\phi} = \frac{p^{2}}{2m} - V(q) - \frac{\hbar}{2m} \tr(\Lambda^{(n)}\mathcal{B}),
      \qquad
      \dot{\delta} = \frac{\hbar}{2m} \tr\mathcal{A},
    \end{array}
  \end{equation}
  where the corrected potential $V^{(n)}_{\hbar}$ is defined as
  \begin{equation}
    \label{eq:V^n}
    V^{(n)}_{\hbar}(q,\mathcal{B})
    \defeq
    V(q) + \frac{\hbar}{4}\tr\!\parentheses{ (\mathcal{B}^{(n)})^{-1} D^{2}V(q) }.
  \end{equation}
\end{theorem}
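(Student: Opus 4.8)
The plan is to solve the defining equation $\ins{X_{H^{(n)}}}\Omega_{\mathcal{M}}^{(n)} = \d H^{(n)}$ componentwise for the vector field $X_{H^{(n)}} = \dot q\,\partial_{q} + \dot p\,\partial_{p} + \dot{\mathcal A}\,\partial_{\mathcal A} + \dot{\mathcal B}\,\partial_{\mathcal B} + \dot\phi\,\partial_{\phi} + \dot\delta\,\partial_{\delta}$, using the explicit symplectic form \eqref{eq:Omega_M} and the Hamiltonian \eqref{eq:H_M}. First I would contract this ansatz into $\Omega_{\mathcal{M}}^{(n)}$ and, separately, compute $\d H^{(n)}$, in both cases factoring the common prefactor $\mathcal{N}_{\hbar}(\mathcal{B},\delta)$ by the product rule and using the expression for $\d\mathcal{N}_{\hbar}$ recorded in the proof of Proposition~\ref{prop:Omega_M}. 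Equating the coefficients of the independent one-forms $\d q_i,\d p_i,\d\mathcal{A}_{ij},\d\mathcal{B}_{ij},\d\phi,\d\delta$ then yields an algebraic (in fact matrix-valued) system for the components of $X_{H^{(n)}}$, which I would solve to obtain \eqref{eq:SemiclassicalSystem}.

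The computation is greatly streamlined by the observation that, apart from its prefactor, the symplectic form \eqref{eq:Omega_M} is obtained from the Gaussian ($n=0$) form of \cite{OhLe2013} by the uniform substitution $\mathcal{B}\mapsto\mathcal{B}^{(n)}$. This is consistent because $\mathcal{N}_{\hbar}$ enters $\Omega_{\mathcal{M}}^{(n)}$ only through $\d\mathcal{N}_{\hbar}$, and the identities $\tr(\mathcal{B}^{-1}\d\mathcal{B}) = \tr((\mathcal{B}^{(n)})^{-1}\d\mathcal{B}^{(n)})$ and $\det\mathcal{B}^{(n)} = \det\mathcal{B}/\det\Lambda^{(n)}$ (the constant $\Lambda^{(n)}$ disappearing under $\d$) make the $\mathcal{N}_\hbar$-contributions identical in the two coordinates. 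Consequently the left-hand side $\ins{X}\Omega_{\mathcal{M}}^{(n)}$ is formally the $n=0$ left-hand side written in the variable $\mathcal{B}^{(n)}$, which is precisely why the answer is most naturally expressed through $\dot{\mathcal{B}}^{(n)}$ rather than $\dot{\mathcal{B}}$, and why the equations for $q,p,\phi,\delta$ and for $\dot{\mathcal{B}}^{(n)}$ emerge as the $n=0$ equations \eqref{eq:Heller}--\eqref{eq:pdot-chi_0} with $\mathcal{B}$ replaced by $\mathcal{B}^{(n)}$.

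The genuinely new ingredient is the Hamiltonian: $H^{(n)}$ is \emph{not} simply the $n=0$ Hamiltonian in the variable $\mathcal{B}^{(n)}$, since its kinetic correction $\frac{\hbar}{4m}\tr((\mathcal{B}^{(n)})^{-1}(\mathcal{A}^{2}+\mathcal{B}^{2}))$ carries $\mathcal{B}^{2}$ rather than $(\mathcal{B}^{(n)})^{2}$, while the potential correction $\frac{\hbar}{4}\tr((\mathcal{B}^{(n)})^{-1}D^{2}V(q))$ and the prefactor $\mathcal{N}_\hbar$ still involve the original $\mathcal{B}$. I would therefore retain $\mathcal{B}$ as the true coordinate, write $(\mathcal{B}^{(n)})^{-1}=\mathcal{B}^{-1/2}\Lambda^{(n)}\mathcal{B}^{-1/2}$, and carry out the $\mathcal{A}$- and $\mathcal{B}$-differentiations of $H^{(n)}$ directly. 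Matching the $\d\mathcal{A}$- and $\d\mathcal{B}$-coefficients then gives a coupled pair of matrix equations whose solution yields $\dot{\mathcal{B}}^{(n)}$ and $\dot{\mathcal{A}}$; the $\mathcal{B}^{2}$-term, differentiated against the weighted inverse $\mathcal{B}^{-1/2}\Lambda^{(n)}\mathcal{B}^{-1/2}$, is exactly what turns the Gaussian term $\mathcal{B}^{2}$ into the symmetrized expression $\frac{1}{2}(\mathcal{B}^{(n)}\Lambda^{(n)}\mathcal{B}+\mathcal{B}\Lambda^{(n)}\mathcal{B}^{(n)})$ in the $\dot{\mathcal{A}}$ equation, and likewise turns $\tr\mathcal{B}$ into $\tr(\Lambda^{(n)}\mathcal{B})$ in the $\dot\phi$ equation, while the corrected potential \eqref{eq:V^n} enters $\dot p$ through the $D^{2}V(q)$ correction.

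I expect the main obstacle to be this last step: disentangling the coupled system for $(\dot{\mathcal{A}},\dot{\mathcal{B}}^{(n)})$ arising from the pairing term $-\frac{\hbar}{4}\d\mathcal{A}_{ij}\wedge\d(\mathcal{B}^{(n)})^{-1}_{ij}$ together with the remaining trace-type wedge terms, while correctly relating $\d\mathcal{B}^{(n)}$ to $\d\mathcal{B}$ (equivalently, handling the derivative of the matrix square root hidden in $\mathcal{B}^{(n)}$) and ensuring that the resulting $\dot{\mathcal{A}}$ is symmetric, as it must be since $\mathcal{A}$ is symmetric. As consistency checks I would verify that the system reduces to \eqref{eq:Heller}--\eqref{eq:pdot-chi_0} upon setting $n=0$ (so that $\Lambda^{(0)}=I_{d}$ and $\mathcal{B}^{(0)}=\mathcal{B}$), and that $\mathcal{N}_{\hbar}(\mathcal{B},\delta)$ is conserved along \eqref{eq:SemiclassicalSystem}, consistent with its role as the Noether quantity of the phase symmetry asserted in Remark~\ref{rem:N}.
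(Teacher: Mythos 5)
Your proposal is essentially the paper's own proof: the paper disposes of this theorem in one line as ``tedious but straightforward calculations'' with the explicit formulas \eqref{eq:Omega_M} and \eqref{eq:H_M}, which is exactly the componentwise contraction you describe, and your structural observations (the $\mathcal{B}\mapsto\mathcal{B}^{(n)}$ covariance of the symplectic form, the non-covariance of the $\mathcal{B}^{2}$ term and of $\mathcal{N}_{\hbar}$ in $H^{(n)}$) are correct. One small slip in your second paragraph: the $\dot\phi$ equation is \emph{not} the $n=0$ equation with $\mathcal{B}\mapsto\mathcal{B}^{(n)}$ (that substitution would give $\tr\mathcal{B}^{(n)}=\tr\bigl((\Lambda^{(n)})^{-1}\mathcal{B}\bigr)$ rather than $\tr(\Lambda^{(n)}\mathcal{B})$), but your third paragraph already identifies the correct source of that term via $\tr\bigl((\mathcal{B}^{(n)})^{-1}\mathcal{B}^{2}\bigr)=\tr(\Lambda^{(n)}\mathcal{B})$, so the computation as you plan it would come out right.
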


\begin{remark}
  For the special case with $n = (\bar{n}, \dots, \bar{n}) \in \N_{0}^{d}$ with $\bar{n} \in \N_{0}$, we have $\Lambda^{(n)} = (2\bar{n}+1)\,I_{d}$ and $\mathcal{B}^{(n)} = (2\bar{n}+1)^{-1} \mathcal{B}$.
  Hence the equations for $\mathcal{A}$ and $\mathcal{B}$ simplify to those in \eqref{eq:Heller}.
\end{remark}

\begin{proof}
  The assertion follows from tedious but straightforward calculations using the formulas for the symplectic forms $\{ \Omega_{\mathcal{M}}^{(n)} \}_{n\in\N_{0}^{d}}$ and the Hamiltonians $\{ H^{(n)} \}_{n\in\N_{0}^{d}}$ from Propositions~\ref{prop:Omega_M} and \ref{prop:H_M}, respectively.
\end{proof}

\section{Symplectic Reduction by Phase Symmetry}
\label{sec:Symplectic_Reduction}
The Hamiltonian $H^{(n)}$ found in \eqref{eq:H_M} does not depend on the phase variable $\phi$ and hence is invariant under the $\mathbb{S}^{1}$ phase shift action.
Therefore, we can reduce the Hamiltonian dynamics~\eqref{eq:SemiclassicalSystem} to a lower-dimensional one by the symplectic (Marsden--Weinstein) reduction~\cite{MaWe1974} (see also \citet[Sections~1.1 and 1.2]{MaMiOrPeRa2007}) as is done for the Gaussian case in our earlier work~\cite{OhLe2013}.
The resulting {\em reduced} symplectic structure is much simpler than $\Omega_{\mathcal{M}}^{(n)}$ from Proposition~\ref{prop:H_M} and moreover takes an appealing form: It is given by the classical symplectic form plus an $O(\hbar)$ ``correction'' term for any $n \in \N_{0}^{d}$ and similarly for the Hamiltonian as well, hence generalizing the results for the Gaussian case ($n = 0$) from our earlier work~\cite[Theorem~4.1]{OhLe2013}.

\subsection{Reduced Symplectic Structures}
\begin{proposition}[Reduced semiclassical symplectic structures]
  \label{prop:Omega_M-reduced}
  Let $\Phi\colon \mathbb{S}^{1} \times \mathcal{M} \to \mathcal{M}$ be the $\mathbb{S}^{1}$-action on $\mathcal{M}$ defined for any $\theta \in \mathbb{S}^{1}$ as
  \begin{equation}
    \label{eq:Phi}
    \Phi_{\theta}: \mathcal{M} \to \mathcal{M};
    \quad
    (q, p, \mathcal{A}, \mathcal{B}, \phi, \delta) \mapsto (q, p, \mathcal{A}, \mathcal{B}, \phi + \hbar\,\theta, \delta).
  \end{equation}
  Then the corresponding momentum map $\mathbf{J}_{\!\mathcal{M}}^{(n)}\colon \mathcal{M} \to \mathfrak{so}(2)^{*} \cong \R$ is given by
  \begin{equation}
    \label{eq:mathbfJ}
    \mathbf{J}_{\!\mathcal{M}}^{(n)}(y) = -\hbar\,\mathcal{N}_{\hbar}(\mathcal{B},\delta),
  \end{equation}
  and the Marsden--Weinstein quotient
  \begin{equation*}
    \overline{\mathcal{M}}_{\hbar}^{(n)} \defeq (\mathbf{J}_{\!\mathcal{M}}^{(n)})^{-1}(-\hbar)/\mathbb{S}^{1} = T^{*}\R^{d} \times \Sigma_{d}
  \end{equation*}
  is equipped with the reduced symplectic form
  \begin{equation}
    \label{eq:Omega-reduced}
    \begin{split}
      \overline{\Omega}_{\hbar}^{(n)}
      &= \mathbf{d}q_{i} \wedge \mathbf{d}p_{i} + \frac{\hbar}{4} (\mathcal{B}^{(n)})^{-1}_{jr} (\mathcal{B}^{(n)})^{-1}_{sk}\,\d\mathcal{A}_{jk} \wedge \d\mathcal{B}^{(n)}_{rs}
      \\
      &= \mathbf{d}q_{i} \wedge \mathbf{d}p_{i} + \frac{\hbar}{4} \d(\mathcal{B}^{(n)})^{-1}_{jk} \wedge \d\mathcal{A}_{jk},
    \end{split}
  \end{equation}
  where $\mathcal{B}^{(n)}$ is defined in \eqref{eq:mathcalB_n}.
\end{proposition}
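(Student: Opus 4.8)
The plan is to carry out the Marsden--Weinstein reduction in the four standard moves: verify that the action is symplectic, compute its momentum map, identify the level set and quotient, and finally pull back $\Omega_{\mathcal{M}}^{(n)}$ to obtain the reduced form. First I would observe that the generating one-form $\Theta_{\mathcal{M}}^{(n)}$ in \eqref{eq:Theta_M} depends on $\phi$ only through $\d\phi$, with coefficient $\mathcal{N}_{\hbar}(\mathcal{B},\delta)$ independent of $\phi$. Hence $\Theta_{\mathcal{M}}^{(n)}$ is invariant under $\Phi_{\theta}$, and in particular $\Omega_{\mathcal{M}}^{(n)} = -\d\Theta_{\mathcal{M}}^{(n)}$ is preserved, so the action is symplectic. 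From \eqref{eq:Phi} the infinitesimal generator associated with $\xi = 1 \in \mathfrak{so}(2) \cong \R$ is $\xi_{\mathcal{M}} = \hbar\,\partial/\partial\phi$.

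For the momentum map I would exploit this invariance: by Cartan's formula $0 = \pounds_{\xi_{\mathcal{M}}}\Theta_{\mathcal{M}}^{(n)} = -\ins{\xi_{\mathcal{M}}}\Omega_{\mathcal{M}}^{(n)} + \d\bigl(\ins{\xi_{\mathcal{M}}}\Theta_{\mathcal{M}}^{(n)}\bigr)$, so $\mathbf{J}_{\!\mathcal{M}}^{(n)} = \ins{\xi_{\mathcal{M}}}\Theta_{\mathcal{M}}^{(n)}$ serves as a momentum map. Contracting $\hbar\,\partial/\partial\phi$ with \eqref{eq:Theta_M}, only the term $-\mathcal{N}_{\hbar}(\mathcal{B},\delta)\,\d\phi$ survives, which yields $\mathbf{J}_{\!\mathcal{M}}^{(n)} = -\hbar\,\mathcal{N}_{\hbar}(\mathcal{B},\delta)$, i.e.\ \eqref{eq:mathbfJ}; equivariance is automatic since $\mathbb{S}^{1}$ is abelian. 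Next I would identify the reduced space: the level set $(\mathbf{J}_{\!\mathcal{M}}^{(n)})^{-1}(-\hbar)$ is exactly $\{\mathcal{N}_{\hbar}(\mathcal{B},\delta) = 1\}$, and by \eqref{eq:N} the monotone dependence of $\mathcal{N}_{\hbar}$ on $\delta$ determines $\delta$ as a unique smooth function of $\mathcal{B}$; thus $-\hbar$ is a regular value, the level set is a graph over $(q,p,\mathcal{A},\mathcal{B},\phi)$, and the free $\mathbb{S}^{1}$-action by $\phi$-translation produces the quotient $\overline{\mathcal{M}}_{\hbar}^{(n)} = T^{*}\R^{d} \times \Sigma_{d}$.

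The final and most computational step is to obtain $\overline{\Omega}_{\hbar}^{(n)}$ from the defining relation $\pi^{*}\overline{\Omega}_{\hbar}^{(n)} = i^{*}\Omega_{\mathcal{M}}^{(n)}$, with $i$ the inclusion of the level set and $\pi$ the quotient projection. On the level set $\mathcal{N}_{\hbar} = 1$ and $\d\mathcal{N}_{\hbar} = 0$, and the formula for $\d\mathcal{N}_{\hbar}$ recorded in the proof of Proposition~\ref{prop:Omega_M} then forces the constraint $\d\delta = -\tfrac{\hbar}{4}\tr\!\bigl((\mathcal{B}^{(n)})^{-1}\d\mathcal{B}^{(n)}\bigr)$. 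Substituting this together with $\mathcal{N}_{\hbar} = 1$ into the explicit expression \eqref{eq:Omega_M} is where the work lies, and the key structural fact I expect to verify is that the eight terms collapse via three pairwise cancellations: the two terms proportional to $p_{i}$ cancel, the two terms carrying $\tr\!\bigl((\mathcal{B}^{(n)})^{-1}\d\mathcal{A}\bigr)$ cancel, and the two terms carrying $\d\phi$ cancel. What remains is $\d q_{i}\wedge\d p_{i} - \tfrac{\hbar}{4}\,\d\mathcal{A}_{ij}\wedge\d(\mathcal{B}^{(n)})^{-1}_{ij}$, which is free of $\d\phi$ and of $\delta$ and hence is the $\pi$-pullback of a well-defined form on the quotient, giving \eqref{eq:Omega-reduced} after the antisymmetry rewriting $-\d\mathcal{A}_{ij}\wedge\d(\mathcal{B}^{(n)})^{-1}_{ij} = \d(\mathcal{B}^{(n)})^{-1}_{ij}\wedge\d\mathcal{A}_{ij}$. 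The main obstacle is thus organizing the substitution cleanly enough to expose these cancellations; everything else is the routine Marsden--Weinstein machinery, and no conceptual difficulty is expected.
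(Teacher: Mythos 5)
Your proposal is correct and follows essentially the same route as the paper: the same verification that $\Phi$ preserves $\Theta_{\mathcal{M}}^{(n)}$, the same momentum map $\ip{\mathbf{J}_{\!\mathcal{M}}^{(n)}}{\xi} = \ip{\Theta_{\mathcal{M}}^{(n)}}{\xi_{\mathcal{M}}}$, and the same identification of the level set $\{\mathcal{N}_{\hbar}(\mathcal{B},\delta)=1\}$ and quotient. The only divergence is the final computation: you substitute the constraint $\d\delta = -\tfrac{\hbar}{4}\tr\bigl((\mathcal{B}^{(n)})^{-1}\d\mathcal{B}^{(n)}\bigr)$ directly into the eight-term expression \eqref{eq:Omega_M} and track three pairwise cancellations (which do check out), whereas the paper first pulls back the one-form along the explicit inclusion $i_{\hbar}$---where it collapses to $p_{i}\,\d q_{i} - \tfrac{\hbar}{4}\tr\bigl((\mathcal{B}^{(n)})^{-1}\d\mathcal{A}\bigr) - \d\phi$ since $\mathcal{N}_{\hbar}=1$ there---and then applies $-\d$, using $i_{\hbar}^{*}\d = \d\, i_{\hbar}^{*}$ to reach \eqref{eq:Omega-reduced} with no cancellation bookkeeping at all.
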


\begin{proof}
  Let us first find the momentum map corresponding to the above action.
  It is easy to see that the action $\Phi$ leaves the one-form $\Theta_{\mathcal{M}}^{(n)}$ invariant, i.e., $\Phi_{\theta}^{*}\Theta_{\mathcal{M}}^{(n)} = \Theta_{\mathcal{M}}^{(n)}$ for any $\theta \in \mathbb{S}^{1}$, and hence $\Phi$ is symplectic with respect to $\Omega_{\mathcal{M}}^{(n)}$, i.e., $\Phi_{\theta}^{*}\Omega_{\mathcal{M}}^{(n)} = \Omega_{\mathcal{M}}^{(n)}$ for any $\theta \in \mathbb{S}^{1}$.
  The infinitesimal generator of the above action corresponding to an arbitrary element $\xi$ in the Lie algebra $\mathfrak{so}(2) \cong \R$ is 
  \begin{equation*}
    \xi_{\mathcal{M}}(y) \defeq \left. \od{}{\eps} \Phi_{\eps\xi}(y) \right|_{\eps=0}
    = \hbar\,\xi\,\pd{}{\phi}.
  \end{equation*}
  Since $\Phi$ leaves $\Theta_{\mathcal{M}}^{(n)}$ invariant for any $n \in \N_{0}^{d}$, the momentum map $\mathbf{J}_{\!\mathcal{M}}^{(n)}$ with respect to the symplectic structure $\Omega_{\mathcal{M}}^{(n)}$ for any $n \in \N_{0}^{d}$ is defined as
  \begin{equation*}
    \ip{ \mathbf{J}_{\!\mathcal{M}}^{(n)}(y) }{\xi} = \ip{ \Theta_{\mathcal{M}}^{(n)}(y) }{ \xi_{\mathcal{M}}(y) } = -\hbar\,\mathcal{N}_{\hbar}(\mathcal{B},\delta)\,\xi
  \end{equation*}
  for any $\xi \in \mathfrak{so}(2)$.
  Hence we obtain \eqref{eq:mathbfJ}.
  So the level set $(\mathbf{J}_{\!\mathcal{M}}^{(n)})^{-1}(-\hbar)$ is given by the set of those parameters that normalize the wave packet $\chi_{n}(y;\,\cdot\,)$, i.e.,
  \begin{align*}
    (\mathbf{J}_{\!\mathcal{M}}^{(n)})^{-1}(-\hbar)
    &= \setdef{ (q,p,\mathcal{A},\mathcal{B},\phi,\delta) \in \mathcal{M} }{ \mathcal{N}_{\hbar}(\mathcal{B},\delta) = 1 } \\
    &= \setdef{ y = (q,p,\mathcal{A},\mathcal{B},\phi,\delta) \in \mathcal{M} }{ \bigl\| \chi_{n}(y;\,\cdot\,) \bigr\| = 1 }.
  \end{align*}
  Therefore, one may solve $\mathcal{N}_{\hbar}(\mathcal{B},\delta) = 1$ for $\delta$ (see \eqref{eq:N}) to have the inclusion $i_{\hbar}\colon (\mathbf{J}_{\!\mathcal{M}}^{(n)})^{-1}(-\hbar) \to \mathcal{M}$ defined as
  \begin{equation*}
    i_{\hbar}\colon (q,p,\mathcal{A},\mathcal{B},\phi) \mapsto \parentheses{ q,p,\mathcal{A},\mathcal{B},\phi,\frac{\hbar}{4}\ln\parentheses{\frac{(\pi\hbar)^{d}}{\det\mathcal{B}}} },
  \end{equation*}
  and hence we have
  \begin{equation*}
    i_{\hbar}^{*}\Theta_{\mathcal{M}}^{(n)}
    = p_{i}\,\mathbf{d}q_{i} - \frac{\hbar}{4}\tr\parentheses{ (\mathcal{B}^{(n)})^{-1} \d\mathcal{A} } - \mathbf{d}\phi.
  \end{equation*}
  Therefore,
  \begin{align*}
    i_{\hbar}^{*}\Omega_{\mathcal{M}}^{(n)}
    &= -i_{\hbar}^{*}\mathbf{d}\Theta_{\mathcal{M}}^{(n)} \\
    &= -\mathbf{d}\bigl( i_{\hbar}^{*}\Theta_{\mathcal{M}}^{(n)} \bigr) \\
    &= \mathbf{d}q_{i} \wedge \mathbf{d}p_{i} + \frac{\hbar}{4}\,\d(\mathcal{B}^{(n)})^{-1}_{jk} \wedge \d\mathcal{A}_{jk}.
  \end{align*}
  However, we have the quotient map
  \begin{equation*}
    \pi_{\hbar}\colon (\mathbf{J}_{\!\mathcal{M}}^{(n)})^{-1}(-\hbar) \to \overline{\mathcal{M}}_{\hbar}^{(n)} \defeq (\mathbf{J}_{\!\mathcal{M}}^{(n)})^{-1}(-\hbar)/\mathbb{S}^{1};
    \quad
    (q,p,\mathcal{A},\mathcal{B},\phi) \mapsto (q,p,\mathcal{A},\mathcal{B}),
  \end{equation*}
  and see that $\overline{\Omega}_{\hbar}^{(n)}$ shown in \eqref{eq:Omega-reduced} satisfies $\pi_{\hbar}^{*}\overline{\Omega}_{\hbar}^{(n)} = i_{\hbar}^{*}\Omega_{\mathcal{M}}^{(n)}$ and hence defines the reduced symplectic form on $\overline{\mathcal{M}}_{\hbar}^{(n)}$; note that $\pi_{\hbar}^{*}$ is injective because $\pi_{\hbar}$ is a surjective submersion.
\end{proof}

\subsection{Reduced Hamiltonian Dynamics}
Since the Hamiltonian $H^{(n)}$ does not depend on the phase variable $\phi$, it has the $\mathbb{S}^{1}$-symmetry under the action $\Phi$ defined in \eqref{eq:Phi}.
Therefore we can reduce the Hamiltonian dynamics~\eqref{eq:SemiclassicalSystem} to the reduced symplectic manifold $\overline{\mathcal{M}}_{\hbar}$:
\begin{theorem}[Reduced semiclassical wave packet dynamics]
  \label{thm:ReducedSemiclassicalSystem}
  Suppose that the potential $V$ satisfies the conditions stated in Proposition~\ref{prop:H_M}.
  Then the Hamiltonian system~\eqref{eq:SemiclassicalSystem} on $\mathcal{M}$ for the semiclassical wave packet $\chi_{n}(y;\,\cdot\,)$ is reduced by the above $\mathbb{S}^{1}$-symmetry to the Hamiltonian system
  \begin{equation}
    \label{eq:ReducedHamiltonianSystem}
    {\bf i}_{X_{\overline{H}^{(n)}_{\hbar}}} \overline{\Omega}^{(n)}_{\hbar} = \mathbf{d}\overline{H}^{(n)}_{\hbar}
  \end{equation}
  defined on $\overline{\mathcal{M}}_{\hbar}^{(n)} = T^{*}\R^{d} \times \Sigma_{d}$ with the reduced symplectic form~\eqref{eq:Omega-reduced} and the reduced Hamiltonian $\overline{H}^{(n)}\colon \overline{\mathcal{M}}_{\hbar}^{(n)} \to \R$ given by
  \begin{equation}
    \label{eq:H-reduced}
    \begin{split}
      \overline{H}^{(n)}_{\hbar}
      &= \frac{p^{2}}{2m} + V(q)
      + \frac{\hbar}{4} \parentheses{
        \frac{1}{m} \tr\!\parentheses{ (\mathcal{B}^{(n)})^{-1}(\mathcal{A}^{2} + \mathcal{B}^{2}) }
        + \tr\parentheses{ (\mathcal{B}^{(n)})^{-1} D^{2}V(q) }
      } \\
      &= \frac{p^{2}}{2m} + \frac{\hbar}{4m} \tr\!\parentheses{ (\mathcal{B}^{(n)})^{-1}(\mathcal{A}^{2} + \mathcal{B}^{2}) } + V^{(n)}_{\hbar}(q,\mathcal{B}),
    \end{split}
  \end{equation}
  where the corrected potential $V^{(n)}_{\hbar}$ is defined in \eqref{eq:V^n}.
  Specifically, \eqref{eq:ReducedHamiltonianSystem} gives the reduced set of the semiclassical equations:
  \begin{equation}
    \label{eq:ReducedSemiclassicalSystem}    
    \begin{array}{c}
      \DS
      \dot{q} = \frac{p}{m},
      \qquad
        \dot{p} = -D_{q}V^{(n)}_{\hbar}(q,\mathcal{B}),
      \medskip\\
      \DS
      \dot{\mathcal{A}} = -\frac{1}{m}\parentheses{ \mathcal{A}^{2} - \frac{1}{2}(\mathcal{B}^{(n)}\Lambda^{(n)}\mathcal{B} + \mathcal{B}\Lambda^{(n)}\mathcal{B}^{(n)}) } - D^{2}V(q),
      \qquad
      \dot{\mathcal{B}}^{(n)} = -\frac{1}{m}(\mathcal{A}\mathcal{B}^{(n)} + \mathcal{B}^{(n)}\mathcal{A}).
    \end{array}
  \end{equation}
\end{theorem}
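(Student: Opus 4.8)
The plan is to apply the Marsden--Weinstein reduction theorem directly, using the momentum map and reduced symplectic form already supplied by Proposition~\ref{prop:Omega_M-reduced}, then to identify the reduced Hamiltonian as the restriction of $H^{(n)}$ to the level set, and finally to read off the explicit reduced equations \eqref{eq:ReducedSemiclassicalSystem}.

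First I would check that the hypotheses of the reduction theorem hold. The $\mathbb{S}^{1}$-action $\Phi$ of \eqref{eq:Phi} is free and proper, and Proposition~\ref{prop:Omega_M-reduced} already establishes that it is symplectic with respect to $\Omega_{\mathcal{M}}^{(n)}$ and carries the equivariant momentum map $\mathbf{J}_{\!\mathcal{M}}^{(n)}(y) = -\hbar\,\mathcal{N}_{\hbar}(\mathcal{B},\delta)$. Since $\mathcal{N}_{\hbar} > 0$ everywhere and $\partial\mathcal{N}_{\hbar}/\partial\delta \neq 0$, the differential $\mathbf{d}\mathbf{J}_{\!\mathcal{M}}^{(n)}$ never vanishes, so every value in the image is regular; in particular $-\hbar$ is regular, and the level set $(\mathbf{J}_{\!\mathcal{M}}^{(n)})^{-1}(-\hbar)=\{\mathcal{N}_{\hbar}=1\}$ is a smooth submanifold, realized as the graph of the $\delta$-coordinate through the inclusion $i_{\hbar}$. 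The quotient $\overline{\mathcal{M}}_{\hbar}^{(n)} = T^{*}\R^{d}\times\Sigma_{d}$ then carries the reduced form $\overline{\Omega}_{\hbar}^{(n)}$ of \eqref{eq:Omega-reduced}, characterized by $\pi_{\hbar}^{*}\overline{\Omega}_{\hbar}^{(n)} = i_{\hbar}^{*}\Omega_{\mathcal{M}}^{(n)}$. Because the Hamiltonian $H^{(n)}$ in \eqref{eq:H_M} is independent of the phase variable $\phi$, it is $\Phi$-invariant; equivalently, $\mathcal{N}_{\hbar}$ is conserved along the flow as noted in Remark~\ref{rem:N}, which is exactly the Noether conservation of $\mathbf{J}_{\!\mathcal{M}}^{(n)}$. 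The reduction theorem therefore guarantees that $X_{H^{(n)}}$ is tangent to the level set and descends to a Hamiltonian vector field on $\overline{\mathcal{M}}_{\hbar}^{(n)}$.

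Next I would identify the reduced Hamiltonian. By the reduction theorem, $\overline{H}_{\hbar}^{(n)}$ is the unique function on $\overline{\mathcal{M}}_{\hbar}^{(n)}$ satisfying $\pi_{\hbar}^{*}\overline{H}_{\hbar}^{(n)} = i_{\hbar}^{*}H^{(n)}$. Since the level set is precisely $\{\mathcal{N}_{\hbar}(\mathcal{B},\delta)=1\}$, computing $i_{\hbar}^{*}H^{(n)}$ amounts to setting $\mathcal{N}_{\hbar}(\mathcal{B},\delta)=1$ in \eqref{eq:H_M}. The resulting expression no longer involves $\phi$ or $\delta$, so it descends to the coordinates $(q,p,\mathcal{A},\mathcal{B})$ and yields \eqref{eq:H-reduced} verbatim, the second form following from the definition \eqref{eq:V^n} of $V^{(n)}_{\hbar}$.

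Finally I would obtain the explicit reduced equations \eqref{eq:ReducedSemiclassicalSystem}, for which there are two routes. The quick route observes that the reduced vector field is $\pi_{\hbar}$-related to the restriction of $X_{H^{(n)}}$ to the level set and that $\pi_{\hbar}$ simply forgets $\phi$; hence the reduced equations are obtained from \eqref{eq:SemiclassicalSystem} by discarding the $\dot{\phi}$ and $\dot{\delta}$ equations. This is legitimate precisely because the right-hand sides of the $(q,p,\mathcal{A},\mathcal{B})$-equations in \eqref{eq:SemiclassicalSystem} depend on neither $\phi$ nor $\delta$, so they close on the quotient. The self-contained route instead computes $\ins{X_{\overline{H}_{\hbar}^{(n)}}}\overline{\Omega}_{\hbar}^{(n)} = \d\overline{H}_{\hbar}^{(n)}$ directly from \eqref{eq:Omega-reduced} and \eqref{eq:H-reduced}. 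I expect this latter computation to be the only delicate step: one must invert the reduced symplectic form, whose $O(\hbar)$ block $\frac{\hbar}{4}\,\d(\mathcal{B}^{(n)})^{-1}_{jk}\wedge\d\mathcal{A}_{jk}$ pairs $\mathcal{A}$ with $(\mathcal{B}^{(n)})^{-1}$ rather than with $\mathcal{B}$, and then translate the resulting flow on $(\mathcal{A},\mathcal{B}^{(n)})$ back into the stated mixed form involving both $\mathcal{B}$ and $\mathcal{B}^{(n)}$ via \eqref{eq:mathcalB_n}, while carefully tracking the symmetry of $\mathcal{A}$ and $\mathcal{B}^{(n)}$. Carrying out this inversion and change of variables is where the main effort lies; the agreement of the two routes provides a useful consistency check.
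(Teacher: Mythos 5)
Your proposal is correct and follows essentially the same route as the paper: invoke Marsden--Weinstein reduction with the momentum map and reduced form from Proposition~\ref{prop:Omega_M-reduced}, define $\overline{H}^{(n)}_{\hbar}$ by $\overline{H}^{(n)}_{\hbar}\circ\pi_{\hbar}=H^{(n)}\circ i_{\hbar}$ (i.e.\ set $\mathcal{N}_{\hbar}=1$ in \eqref{eq:H_M}), and obtain \eqref{eq:ReducedSemiclassicalSystem} from the $\phi$- and $\delta$-independence of the $(q,p,\mathcal{A},\mathcal{B})$-equations in \eqref{eq:SemiclassicalSystem}. You simply spell out more of the standing hypotheses (freeness, properness, regularity of the momentum value) than the paper does.
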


\begin{proof}
  Clearly the Hamiltonian $H^{(n)}$ has the $\mathbb{S}^{1}$-symmetry, i.e., $H^{(n)} \circ \Phi_{\theta} = H^{(n)}$ for any $\theta \in \mathbb{S}^{1}$ and so, by the Marsden--Weinstein reduction~\cite{MaWe1974} (see also \citet[Sections~1.1 and 1.2]{MaMiOrPeRa2007}), the Hamiltonian system~\eqref{eq:SemiclassicalSystem} reduces to the reduced one \eqref{eq:ReducedHamiltonianSystem} with the reduced Hamiltonian $\overline{H}^{(n)}_{\hbar}$ defined as $\overline{H}^{(n)}_{\hbar} \circ \pi_{\hbar} = H^{(n)} \circ i_{\hbar}$, which yields~\eqref{eq:H-reduced}.
  It is a straightforward calculation to see that \eqref{eq:ReducedHamiltonianSystem} yields \eqref{eq:ReducedSemiclassicalSystem}.
\end{proof}

\section{Numerical Results}
\label{sec:Numerical_Results}
\subsection{Problem Setting: Escape from Cubic Potential Well}
We performed numerical experiments with the simple one-dimensional potential (i.e., $d = 1$)
\begin{equation}
  \label{eq:V}
  V(x) = 2x^{2} + x^{3} + 0.1x^{4}
\end{equation}
and $m = 1$, and different values of index $n$ and parameter $\hbar$.
This example is a slightly modified version of an example from \citet[Section~6.4]{KeLaOh2016}, which in turn is a rescaled version of the cubic potential example from \citet{PrPe2000} with an additional quartic confinement term to make sure that the potential is bounded from below; see the assumptions in Proposition~\ref{prop:H_M}.

The initial position of the particle in the phase space is $(q(0),p(0)) = (0.25,1)$; this gives the classical total energy $H_{\text{cl}} \simeq 0.641$.
This is below the local maximum $V_{1} \simeq 1.703$ of the potential (located at $x \simeq 1.73$) and hence the solution $(q(t),p(t))$ to the classical Hamiltonian system in \eqref{eq:Heller} gives a periodic orbit confined in the potential well; see Fig.~\ref{fig:V}.
\begin{figure}[htbp]
  \begin{center}
    \begin{tikzpicture}[smooth, domain=-3:2, scale=2, samples=50]
      \clip (-3,-0.35) rectangle (1,2);
      \draw[-] (-3,0) -- (0.75,0) node[below] {$x$};
      \draw[-] (0,-1) -- (0,2.5);
      \draw[thick] plot (\x,{2*pow(\x,2) + pow(\x,3) + 0.1*pow(\x,4)});
      \node[below left] at (0.8,2) {$V(x)$};
      \draw[-, thin] (0.25,0.075) -- (0.25,-0.075) node[below] {$-0.25$};
      \fill [green!55!black] (0.25,0) circle (1pt);
      \draw[-stealth, ultra thick, green!55!black] (0.25,0) -- (-0.15,0);
      \node[below] at (-10/3,0) {$-10/3$};
      \draw[dashed] (-10/3,0) -- (-10/3,1.85);
      \draw[-, thick, blue!55!black] (-0.686347,0.641016) -- (0.503472,0.641016);
      \node[above left, blue!55!black] at (0,0.641016) {$H_{\text{cl}}$};
      \draw[dashed] (-1.73444,1.70386) -- (0,1.70386);
      \node[above left] at (0,1.70386) {$V_{1} \simeq 1.703$};
      \draw[dashed] (-1.73444,1.70386) -- (-1.73444,0) node[below] {$-1.73$};
      \draw[-, thin] (-2.5,0.075) -- (-2.5,-0.075) node[below] {$-2.5$};
    \end{tikzpicture}
  \end{center}
  \captionsetup{width=0.9\textwidth}
  \caption{
    Potential \eqref{eq:V}. $H_{\text{cl}}$ is the classical Hamiltonian $p^{2}/(2m) + V(q)$ with initial condition $(q(0),p(0)) = (0.25,1)$; the green dot and arrow on the $x$-axis indicate the initial position and velocity of the particle.
    The fact that $H_{\text{cl}} < V_{1}$ implies that the classical trajectory following \eqref{eq:Heller} is trapped in the potential well.
  }
  \label{fig:V}
\end{figure}
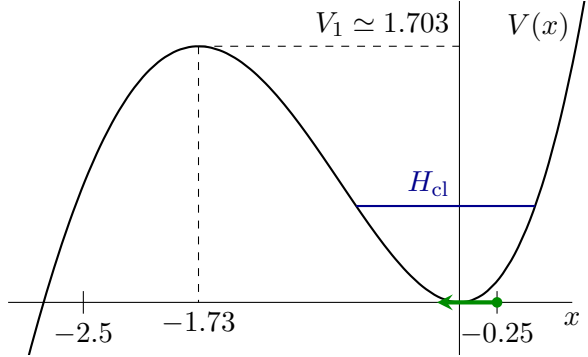

However, the semiclassical Hamiltonian dynamics~\eqref{eq:ReducedSemiclassicalSystem} with the same initial condition $(q(0),p(0))$ may not be confined in the potential well because it is a Hamiltonian system in the higher-dimensional space $T^{*}\R^{d} \times \Sigma_{d}$.
We set the initial condition for $(\mathcal{A},\mathcal{B})$ as $(\mathcal{A}(0),\mathcal{B}(0)) = (0,1)$; the phase is set as $\phi(0) = 0$ because it is irrelevant to the dynamics of observables; $\delta(0)$ is set so that the initial wave function is normalized.
This means that the initial wave function
\begin{equation}
  \label{eq:psi-IC}
  \psi(0,x) = \chi_{n}(q(0),p(0),\mathcal{A}(0),\mathcal{B}(0),\phi(0),\delta(0);x)
\end{equation}
is the Hermite function with index $n\in\N_{0}$ because the ground state $\chi_{0}$ and the ladder operators become those of the harmonic oscillator; see~\eqref{eq:chi_0} and \eqref{eq:ladder_operators}.

\subsection{Results}
We computed the solutions of the classical system (the first two equations of \eqref{eq:Heller}) and the Hamiltonian system~\eqref{eq:ReducedSemiclassicalSystem} for the semiclassical wave packet $\chi_{n}$.
Also, for a reference solution $(\exval{\hat{x}}(t),\exval{\hat{p}}(t))$ of the expectation value dynamics, we used a method based on Egorov's Theorem with the phase space density developed in \cite{KeLaOh2016} (essentially equivalent to the Initial Value Representation (IVR) method~\cite{Mi1970,Mi1974b,WaSuMi1998,Mi2001} often used by chemical physicists).
It is known that such Egorov-type/IVR algorithms give an $O(\hbar^{2})$ approximation to the expectation value dynamics of the Schr\"odinger equation~\eqref{eq:SchroedingerEq-coordinates}, and hence provide a very good alternative to the exact solution in the semiclassical regime $\hbar \ll 1$.

We used the St\"ormer--Verlet method~\cite{Ve1967} to solve the classical Hamiltonian system and the variational splitting integrator of \citet{FaLu2006} (see also \citet[Section~IV.4]{Lu2008}) for the semiclassical solution; the time step is $0.01$ in all the cases.
It is easy to show that the variational splitting integrator preserves the symplectic structure~\eqref{eq:Omega-reduced}, and its limit as $\hbar \to 0$ gives the St\"ormer--Verlet method~\cite{Ve1967}.
The Egorov-type algorithm involves averaging of solutions of the classical Hamiltonian system with numerous initial conditions sampled with respect to the initial phase space densities corresponding to the initial wave function~\eqref{eq:psi-IC}.
Again the classical Hamiltonian system is solved using the St\"ormer--Verlet method and $100,000$ initial conditions are sampled to ensure accuracy.

The phase space plots of the results with $n = 1,3,5,10$ are shown in Figs.~\ref{fig:0.05}--\ref{fig:0.01} for $\hbar = 0.05, 0.025, 0.01$.
Fig.~\ref{fig:0.025_t-H} shows the time evolution of the classical energy $H_{\text{cl}} = p^{2}/(2m) + V$ along the classical solution, the semiclassical energy $\overline{H}^{(n)}_{\hbar}$ along the semiclassical solution, and the expectation value $\texval{\hat{H}}$ along the solutions of the Egorov-type algorithm.
They are shown for $0 \le t \le T$ where $T \simeq 3.39$ is the period of the classical solution.
The classical solution is trapped inside the potential well as explained earlier.
On the other hand, the semiclassical energy or Hamiltonian $\overline{H}^{(n)}_{\hbar}$ in \eqref{eq:H-reduced} becomes larger for larger values of $\hbar$ and $n$, and significantly deviates from the classical energy $H_{\text{cl}}$; see Fig.~\ref{fig:0.025_t-H} to see how $\overline{H}^{(n)}_{\hbar}$ changes as $\hbar$ becomes larger.
As a result, the solutions escape from the potential well for relatively large values of $\hbar$ and $n$ whereas it is trapped inside the well for some small values of $\hbar$ and $n$.
However, note that, unlike the classical case, $\overline{H}^{(n)}_{\hbar} < V_{1}$ does not necessarily imply that the trajectory is trapped inside the well because, as mentioned above, the semiclassical dynamics is a Hamiltonian system on $T^{*}\R^{d} \times \Sigma_{d}$ and so the level set of the Hamiltonian does not necessarily define a closed curve in $T^{*}\R^{d}$ even with $\overline{H}^{(n)}_{\hbar} < V_{1}$.
This is in fact the case for, e.g., $\hbar = 0.025$ and $n = 5, 10$; see Figs.~\ref{fig:0.025} and \ref{fig:0.025_t-H}.

More importantly, {\em the semiclassical solutions show a very good agreement with the reference solutions computed by the Egorov-type algorithm}.
Note that these solutions are computed using completely different methods: one from a single semiclassical Hamiltonian system~\eqref{eq:ReducedSemiclassicalSystem} whereas the other by sampling numerous solutions of the classical Hamiltonian system.

However, there is an issue with the semiclassical solutions as well.
The solutions of the semiclassical system~\eqref{eq:ReducedSemiclassicalSystem} deviate from the reference solutions after a while as we can see in some of the solutions in the figures.
In fact, it is known that approximation methods using semiclassical wave packets are usually valid only in the Ehrenfest time scale, i.e., $t \sim \ln(1/\hbar)$; see, e.g., \citet{HaJo2000}, \citet{CoRo2012}, and \citet{ScVaTo2012}.
It is because the wave packet becomes very widespread, i.e., the parameter $\mathcal{B}$---which controls the width of the wave packet---becomes significantly small in the Ehrenfest time scale.
The Ehrenfest time scales in our settings are roughly the same as the period $T \simeq 3.39$ of the classical solution.
This issue seems to exacerbate the errors in the numerical solution; see, for example the behavior of the Hamiltonian for $\hbar = 0.025$ and $n = 10$ in Fig.~\ref{fig:0.025_t-H}.

\begin{figure}[htbp]
  \centering
  \subfigure[$n = 1$]{
    \includegraphics[width=.465\linewidth]{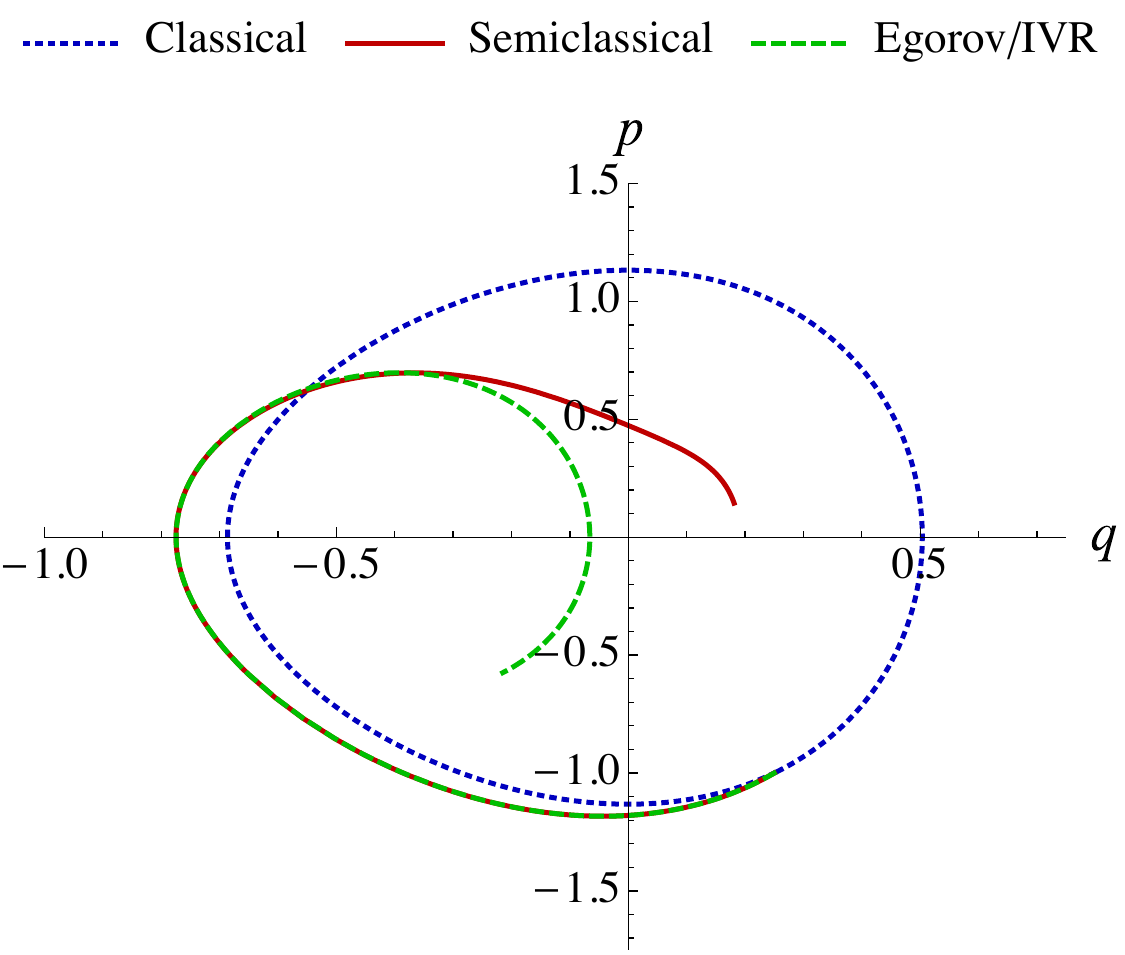}
  }
  \quad
  \subfigure[$n = 3$]{
    \includegraphics[width=.465\linewidth]{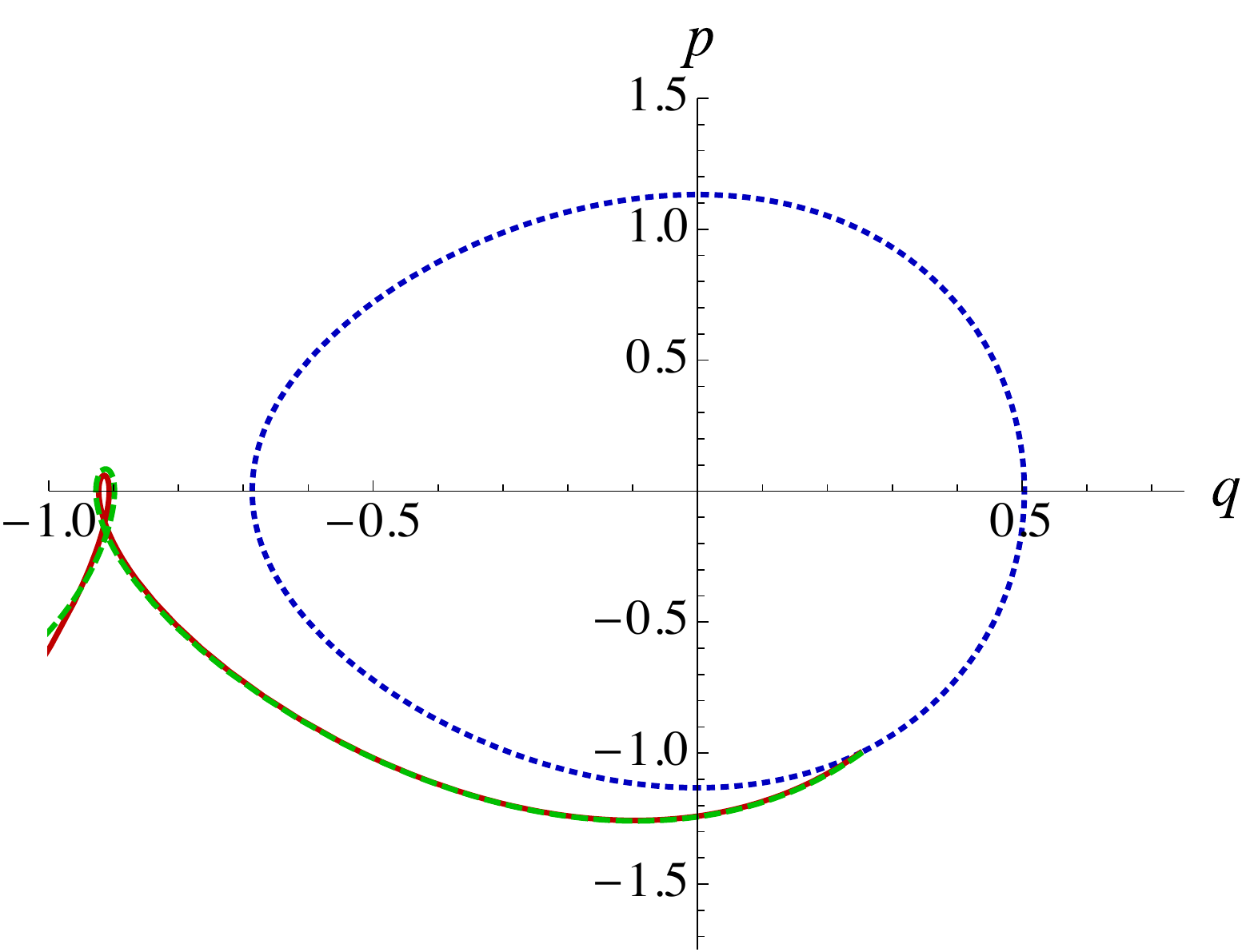}
  }
  \subfigure[$n = 5$]{
    \includegraphics[width=.465\linewidth]{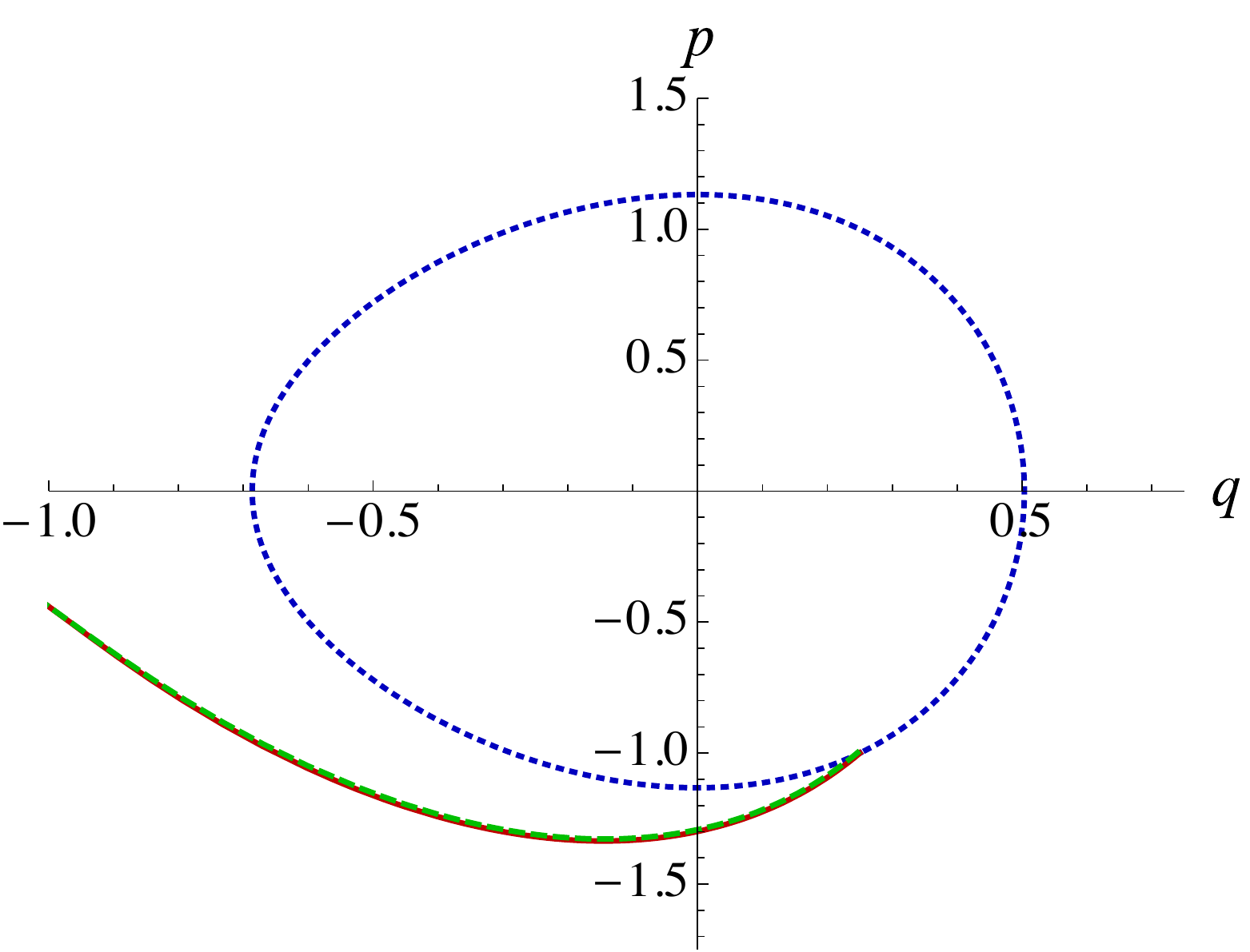}
  }
  \quad
  \subfigure[$n = 10$]{
    \includegraphics[width=.465\linewidth]{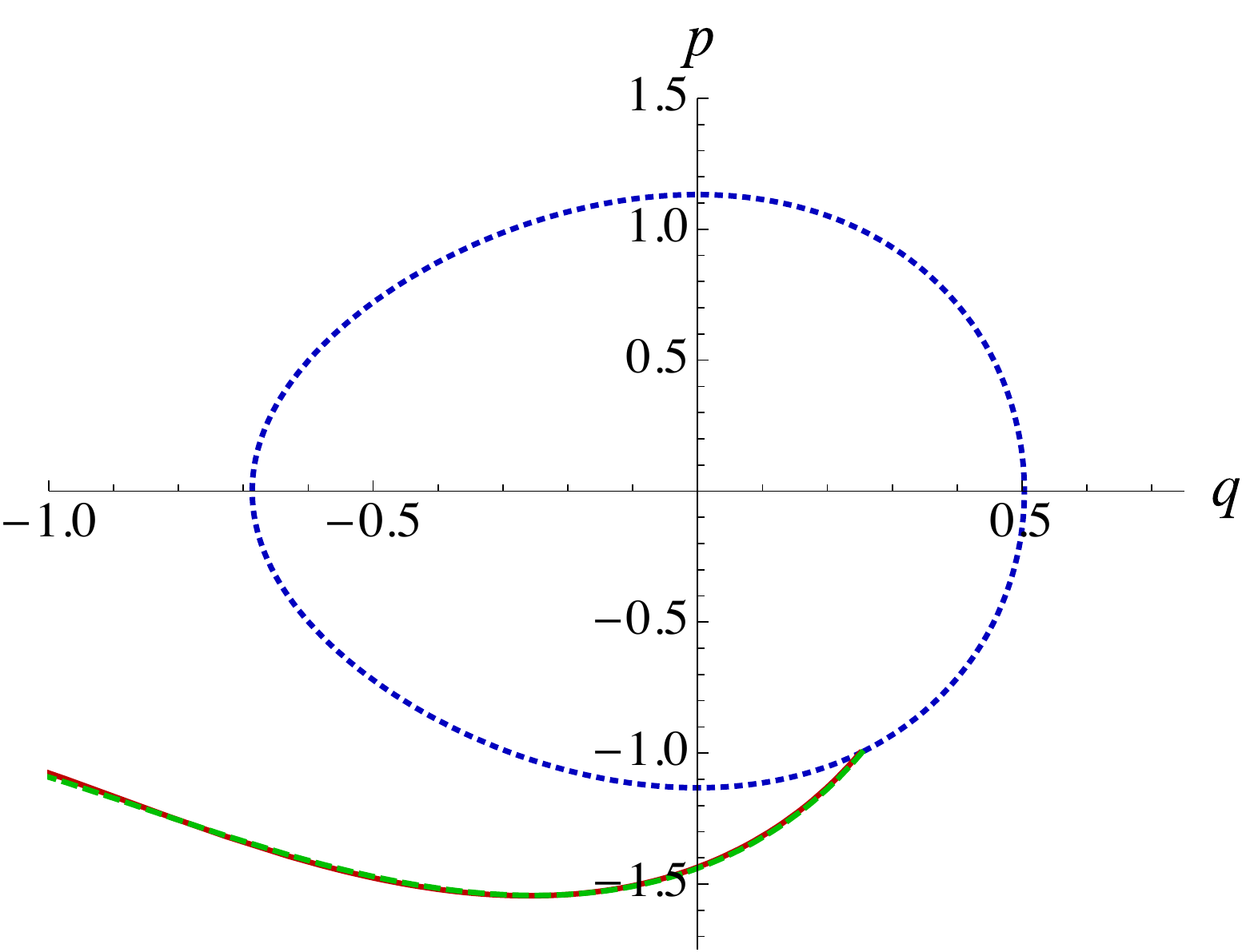}
  }
  \captionsetup{width=0.9\textwidth}
  \caption{
    Phase space plot of escape from cubic potential well; $\hbar = 0.05$, $n = 1, 3, 5, 10$. Plotted for $0 \le t \le T$ (unless the solution goes outside the range), where $T \simeq 3.39$ is the period of the classical solution.
    The solution to the classical Hamiltonian system (or the first two equations of \eqref{eq:Heller}; dotted and blue) is trapped inside the potential well.
    The solutions of the semiclassical system~\eqref{eq:SemiclassicalSystem} (solid and red) escape from the potential well inside, and show a very good agreement with the solutions (dashed and green) obtained by an algorithm based on Egorov's Theorem (or the IVR method) for a short time.
  }
  \label{fig:0.05}
\end{figure}

\begin{figure}[htbp]
  \centering
  \subfigure[$n = 1$]{
    \includegraphics[width=.465\linewidth]{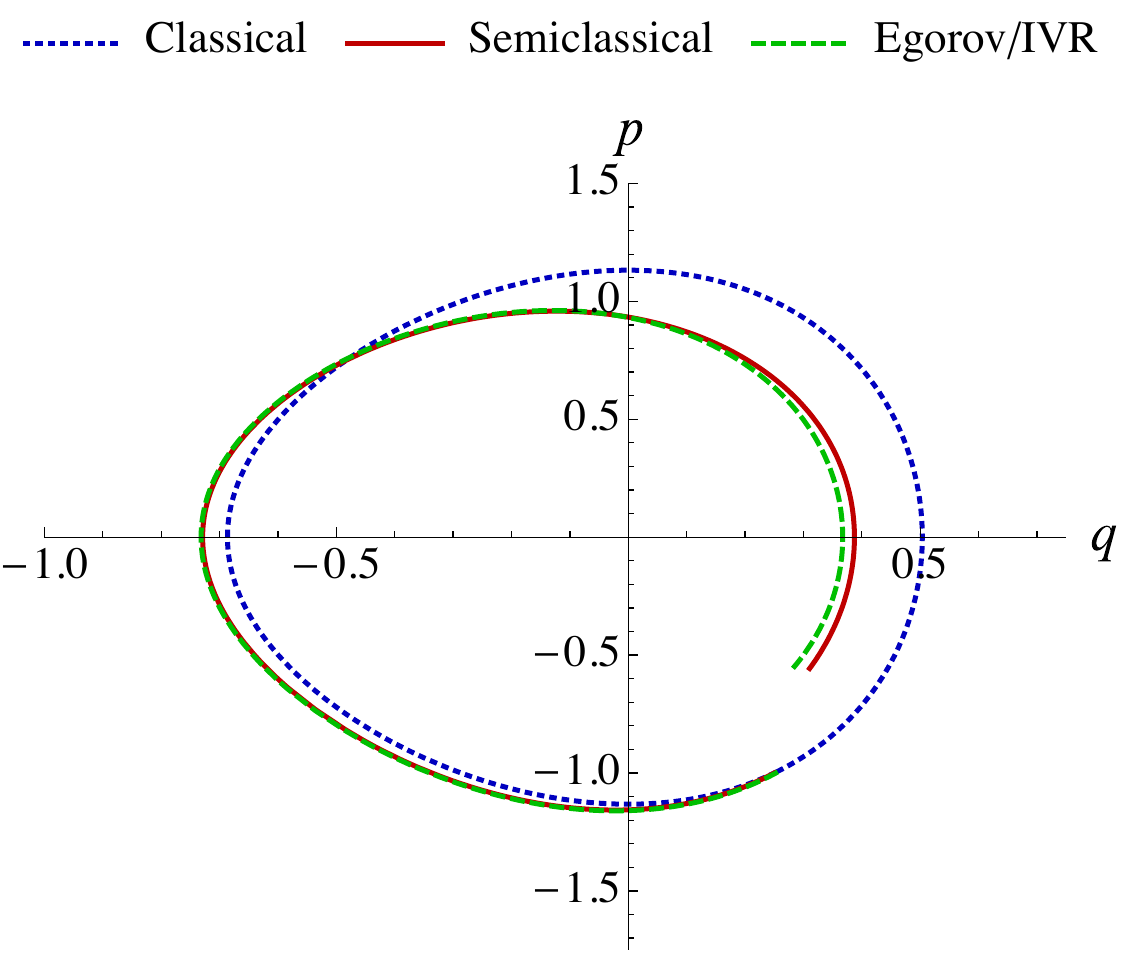}
  }
  \quad
  \subfigure[$n = 3$]{
    \includegraphics[width=.465\linewidth]{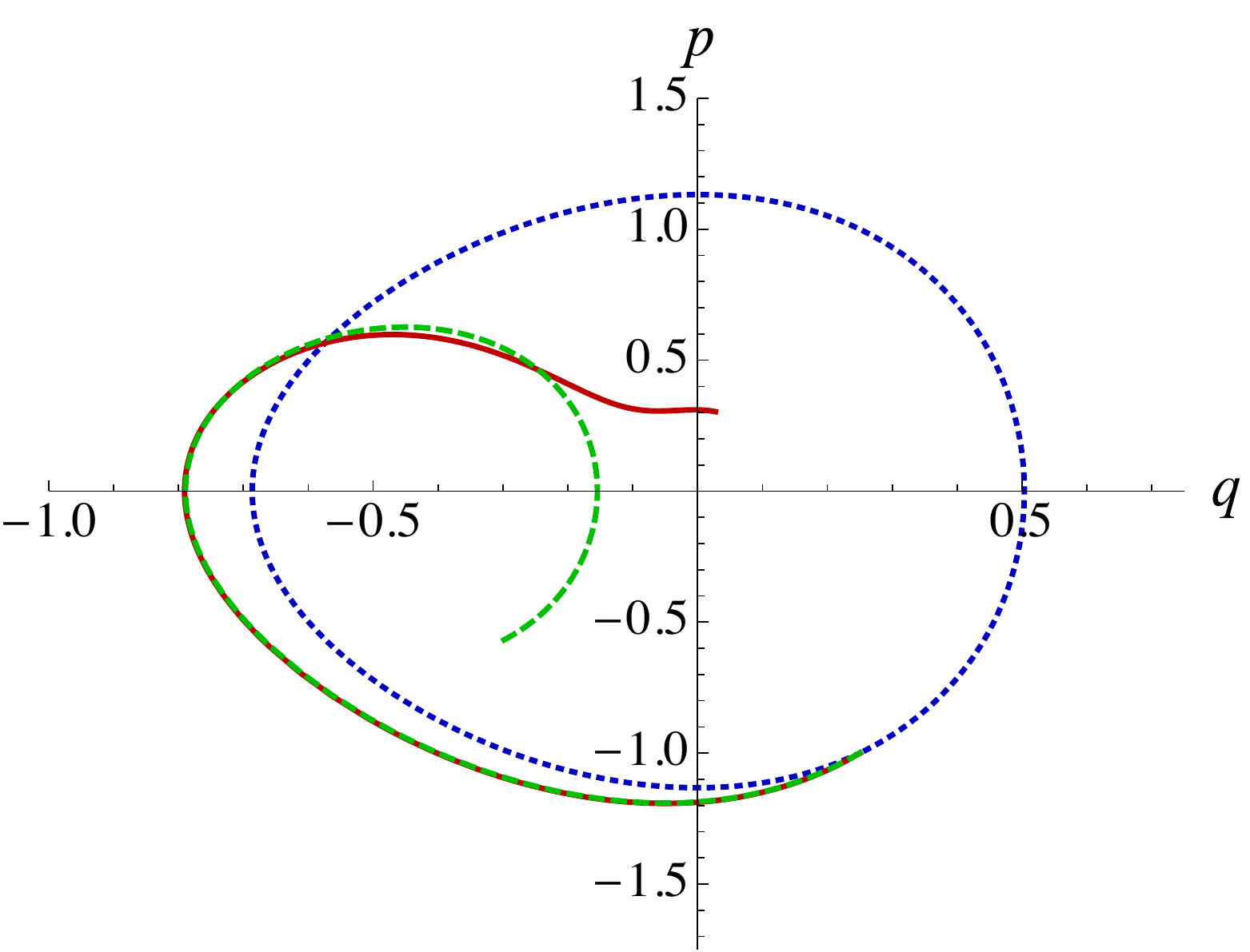}
  }
  \subfigure[$n = 5$]{
    \includegraphics[width=.465\linewidth]{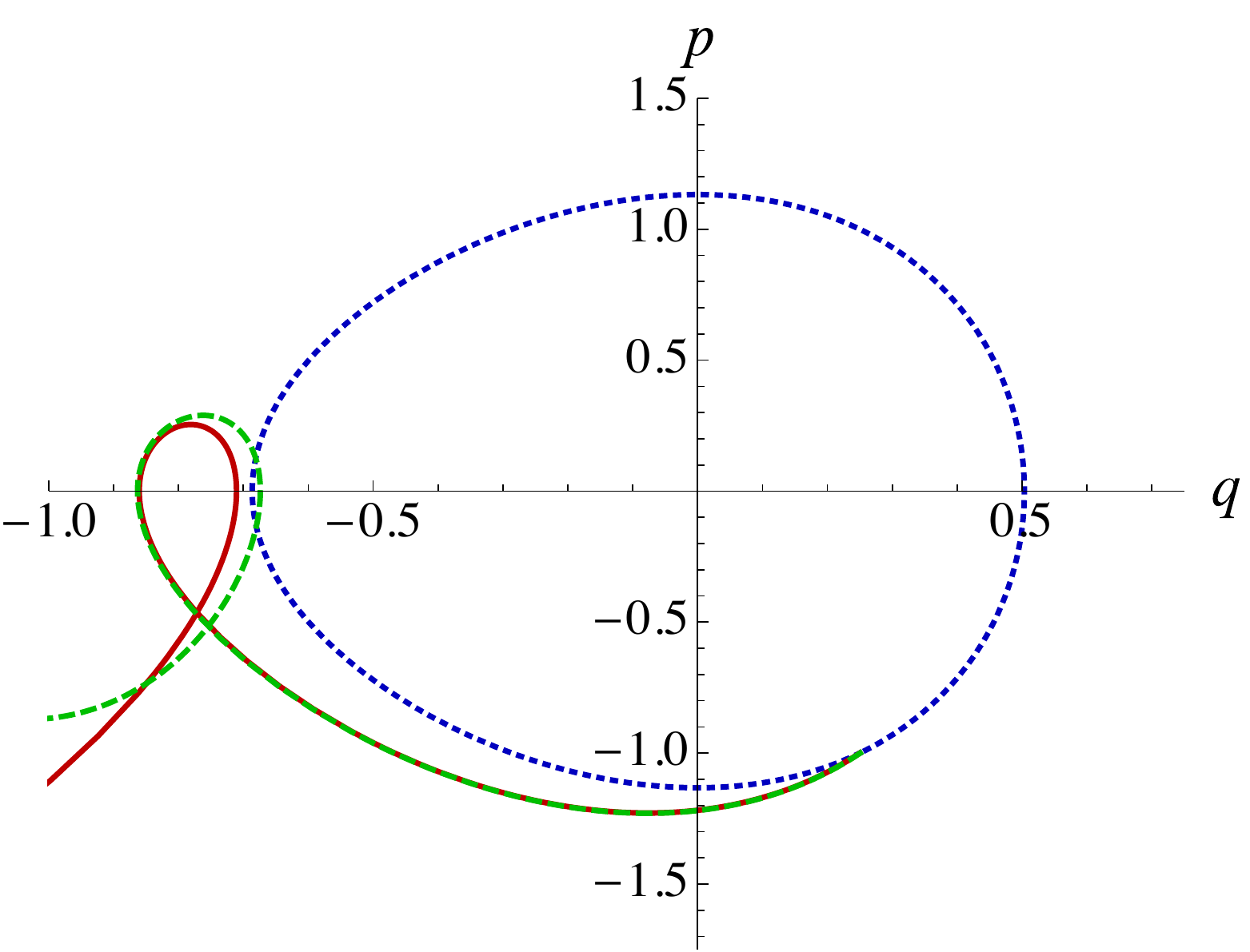}
  }
  \quad
  \subfigure[$n = 10$]{
    \includegraphics[width=.465\linewidth]{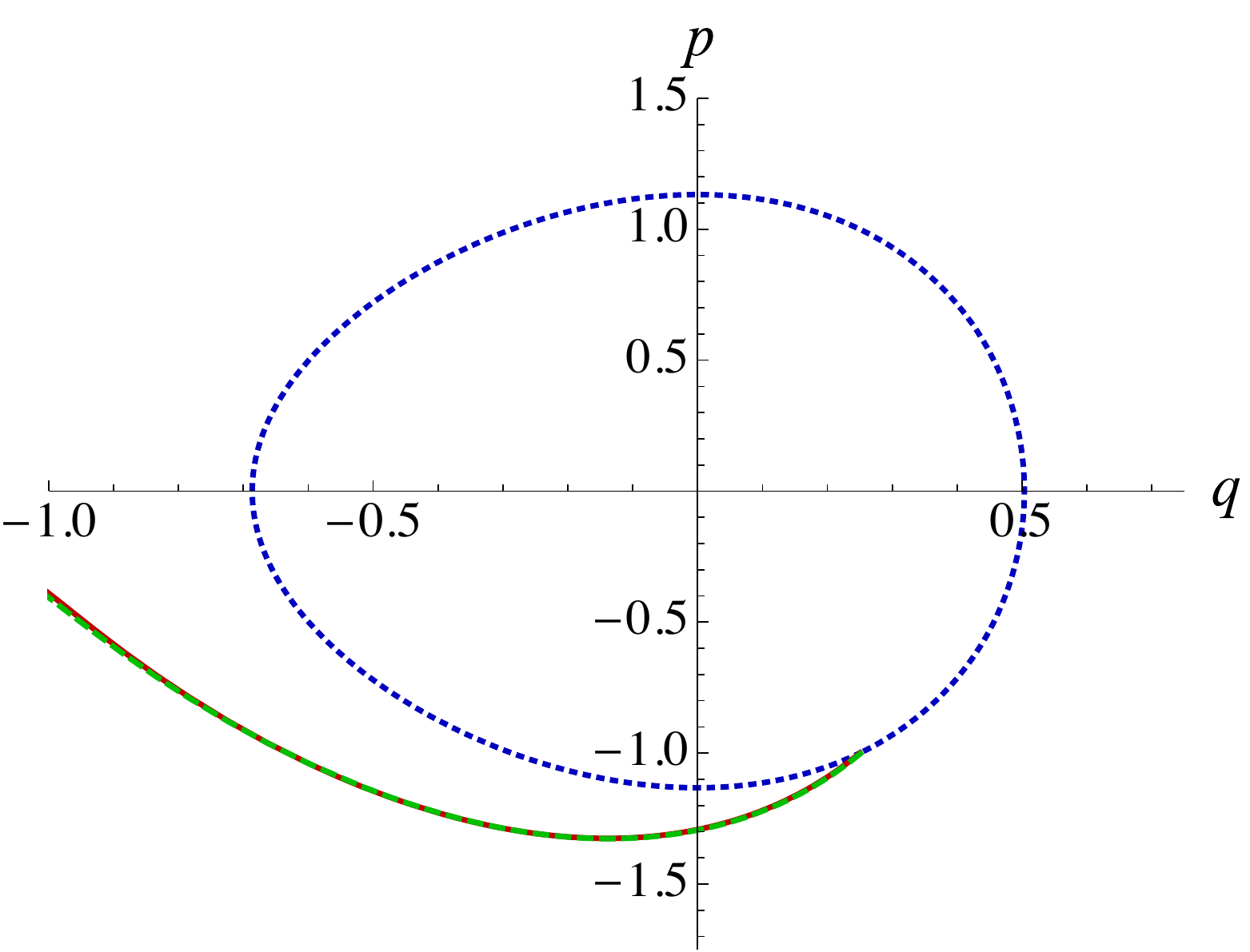}
  }
  \captionsetup{width=0.9\textwidth}
  \caption{Phase space plot of escape from cubic potential well: $\hbar = 0.025$}
  \label{fig:0.025}
\end{figure}

\begin{figure}[htbp]
  \centering
  \subfigure[$n = 1$]{
    \includegraphics[width=.465\linewidth]{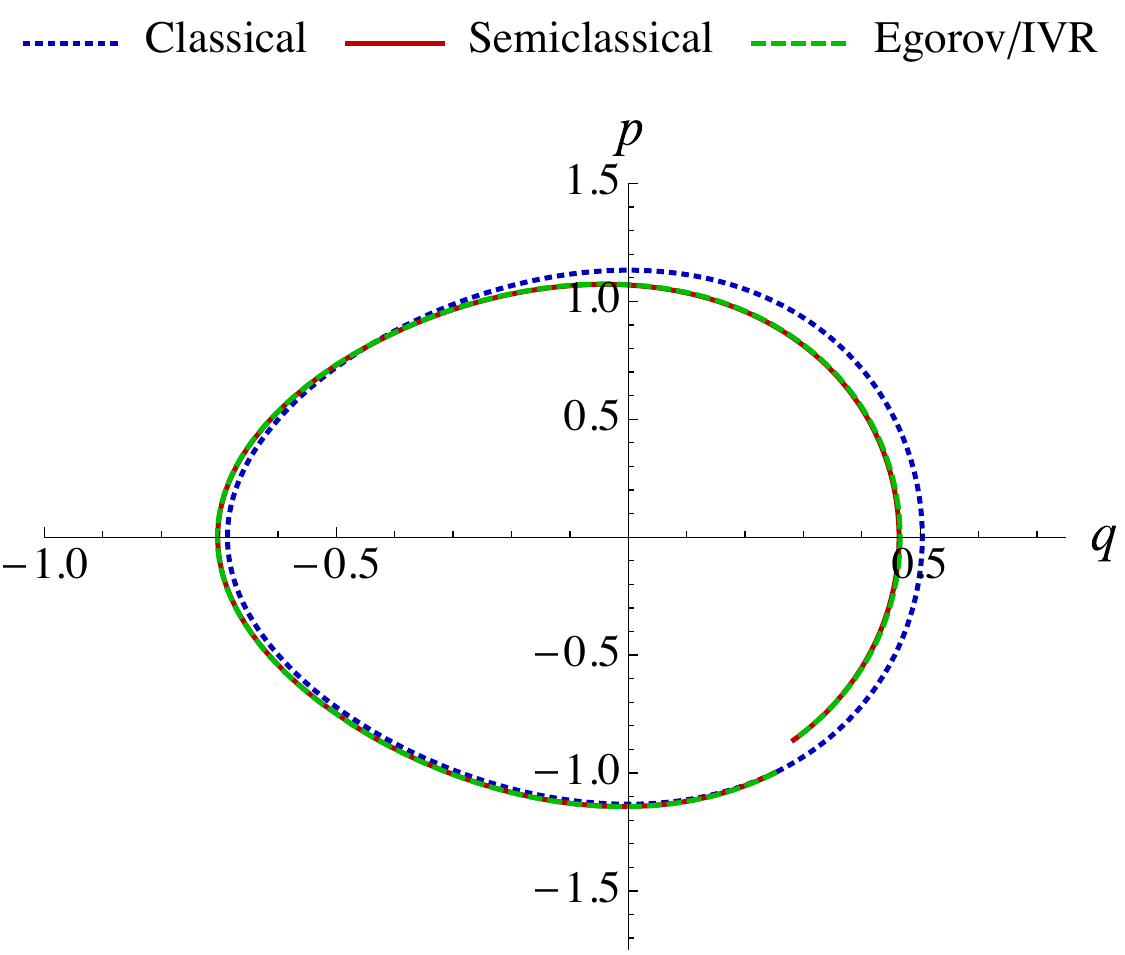}
  }
  \quad
  \subfigure[$n = 3$]{
    \includegraphics[width=.465\linewidth]{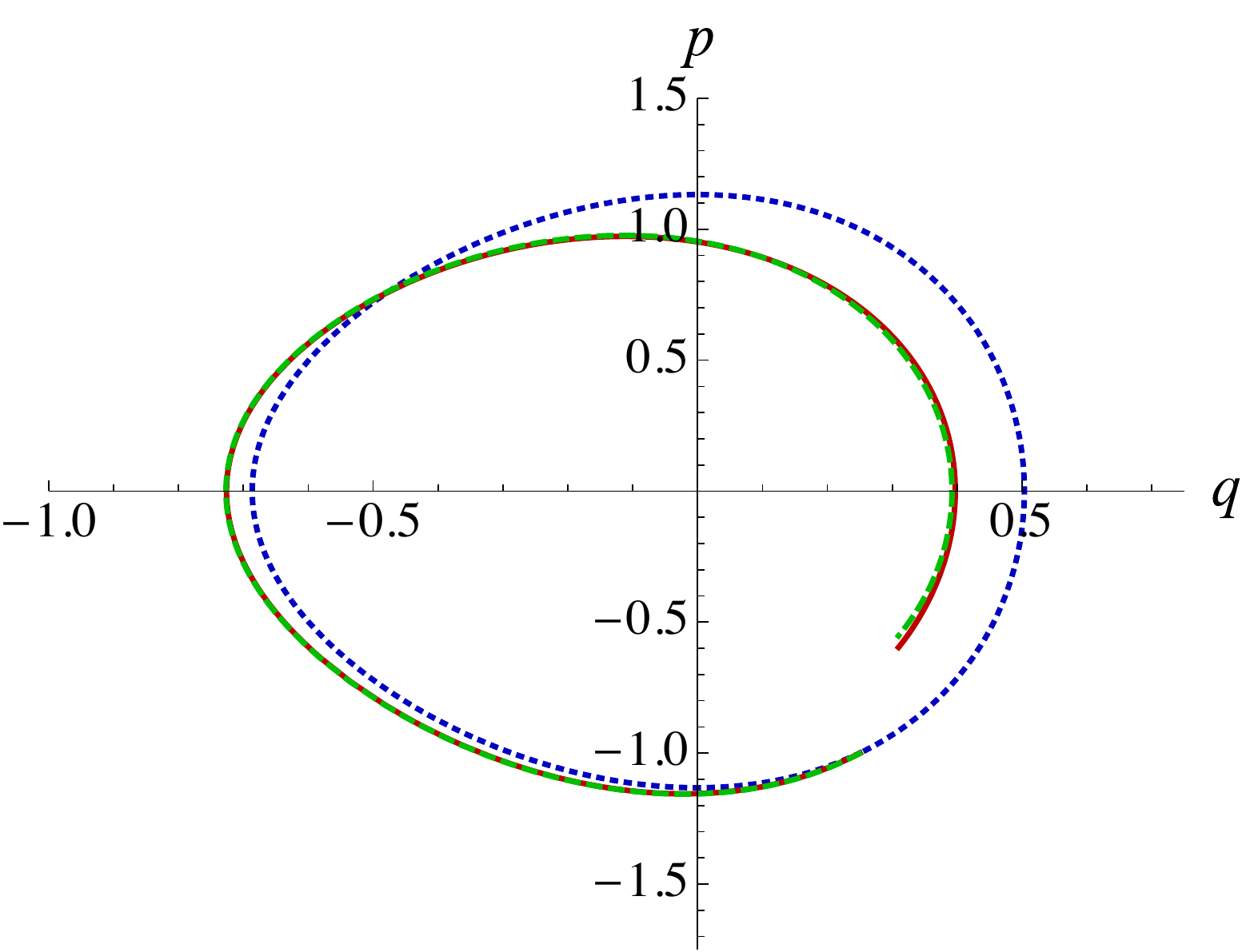}
  }
  \subfigure[$n = 5$]{
    \includegraphics[width=.465\linewidth]{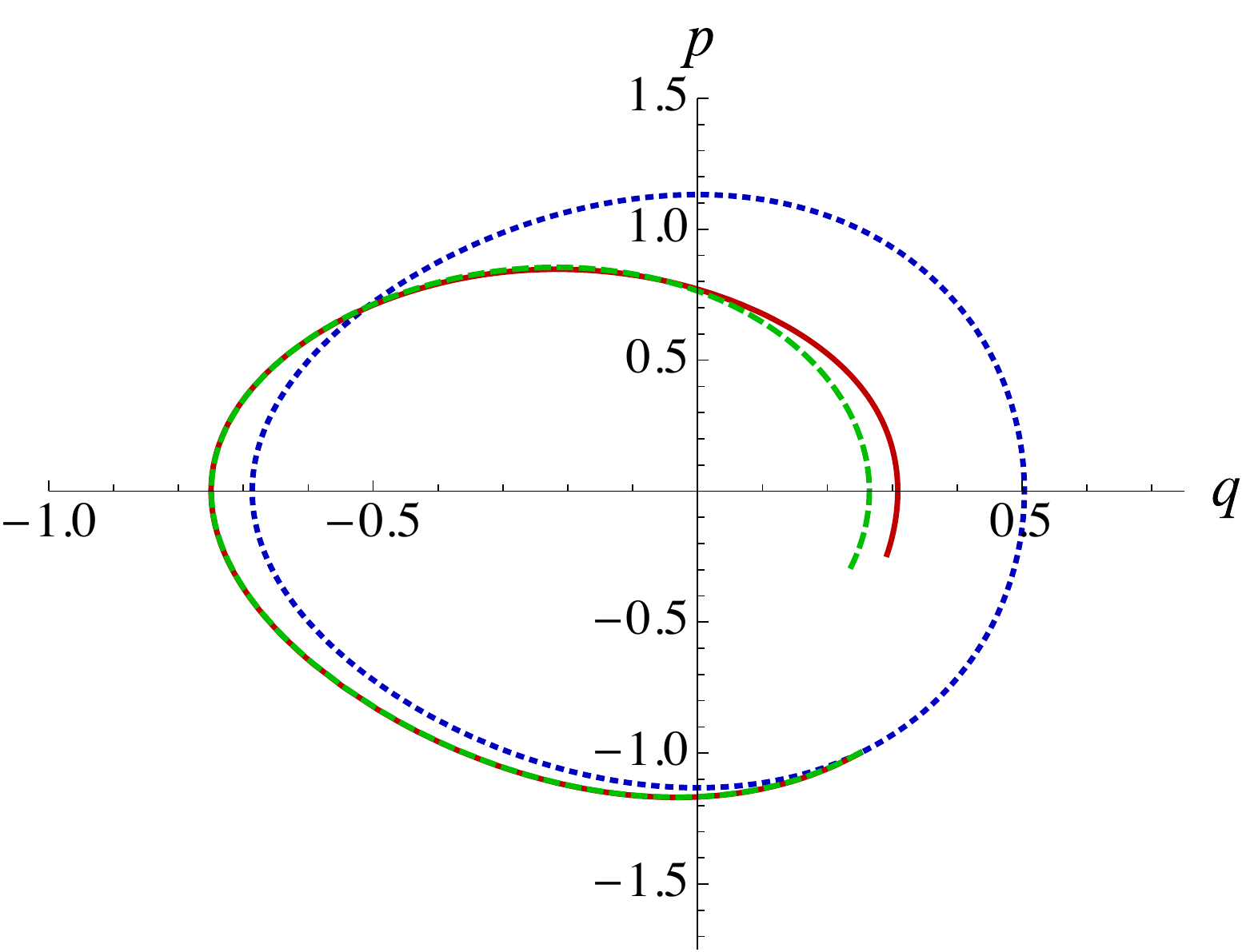}
  }
  \quad
  \subfigure[$n = 10$]{
    \includegraphics[width=.465\linewidth]{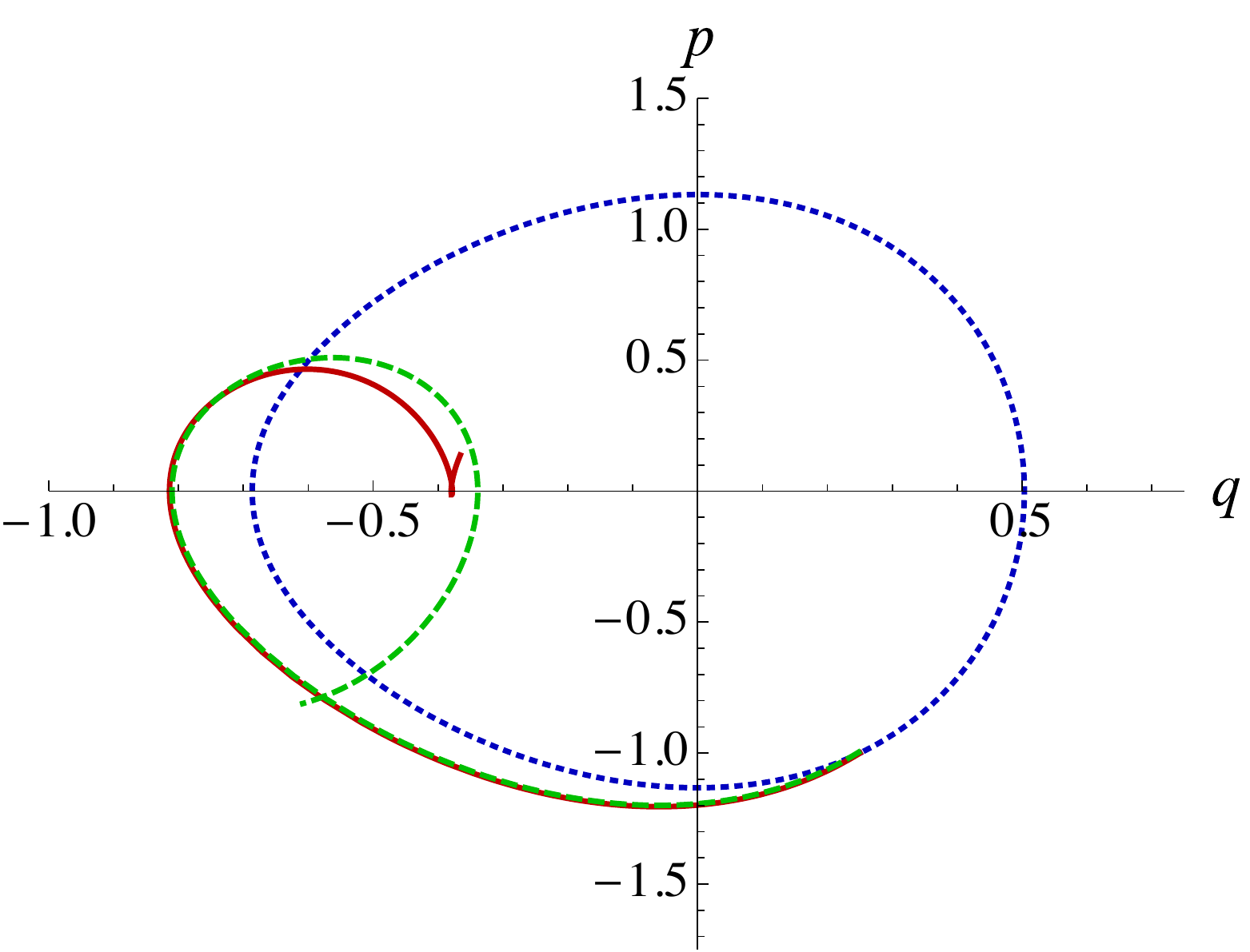}
  }
  \captionsetup{width=0.9\textwidth}
  \caption{Phase space plot of escape from cubic potential well: $\hbar = 0.01$}
  \label{fig:0.01}
\end{figure}

\begin{figure}[htbp]
  \centering
  \subfigure[$n = 1$]{
    \includegraphics[width=.465\linewidth]{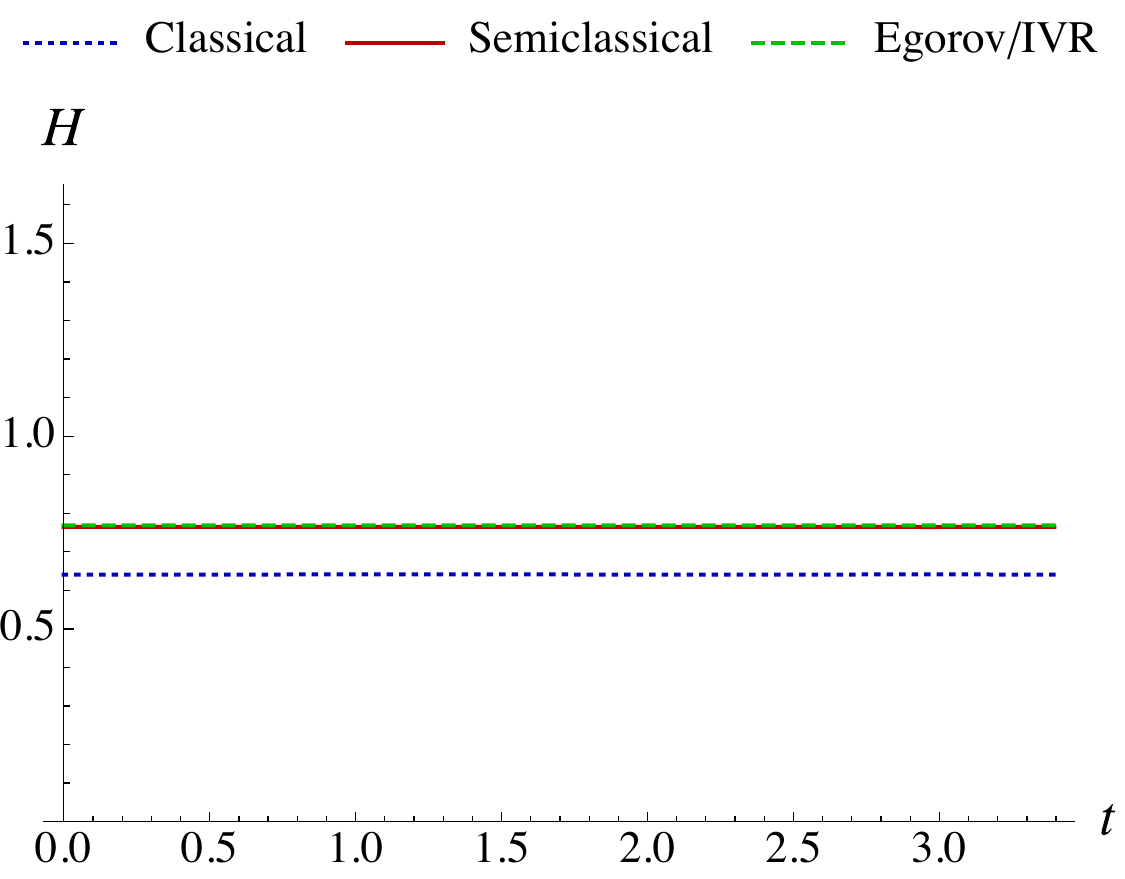}
  }
  \quad
  \subfigure[$n = 3$]{
    \includegraphics[width=.465\linewidth]{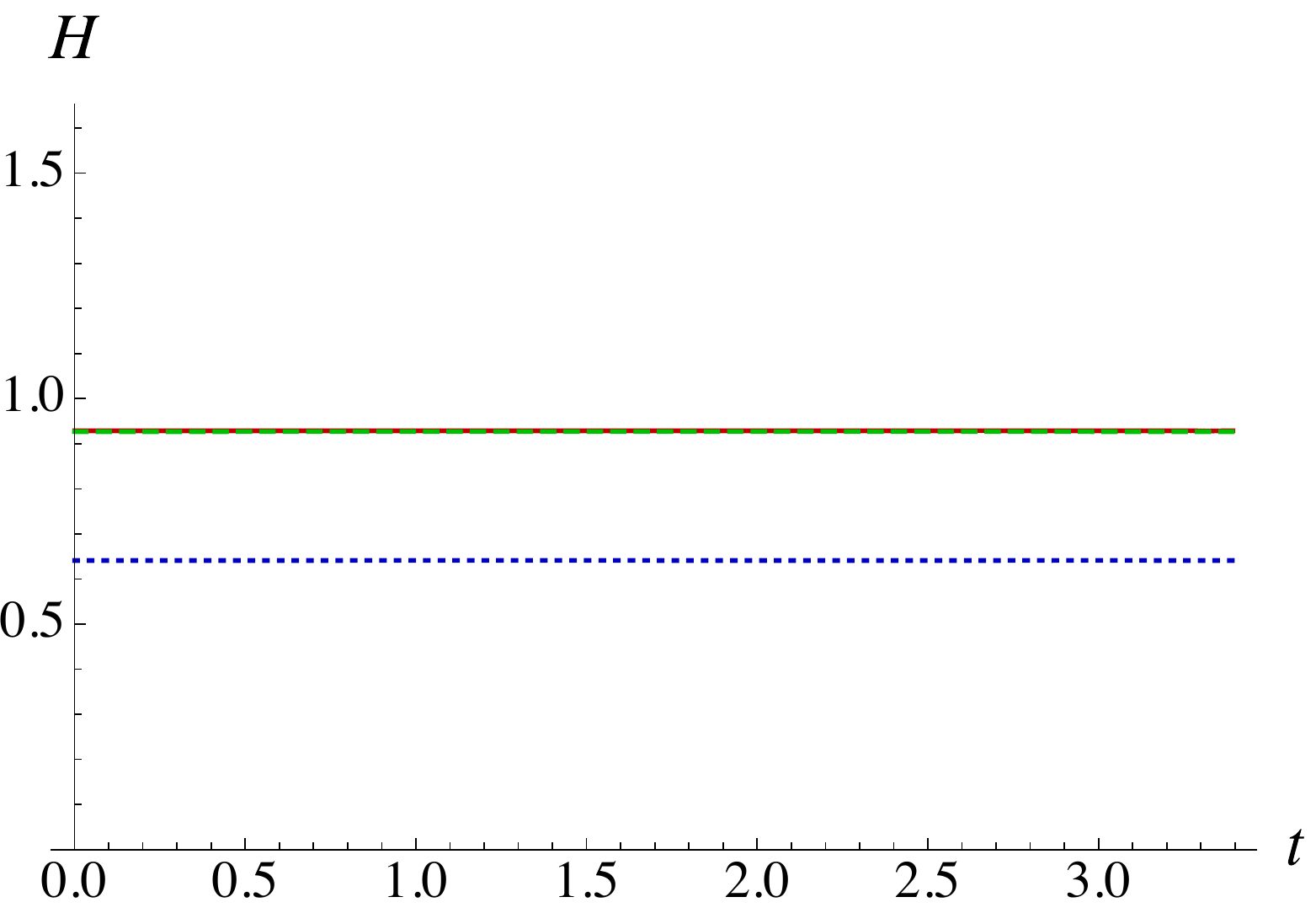}
  }
  \subfigure[$n = 5$]{
    \includegraphics[width=.465\linewidth]{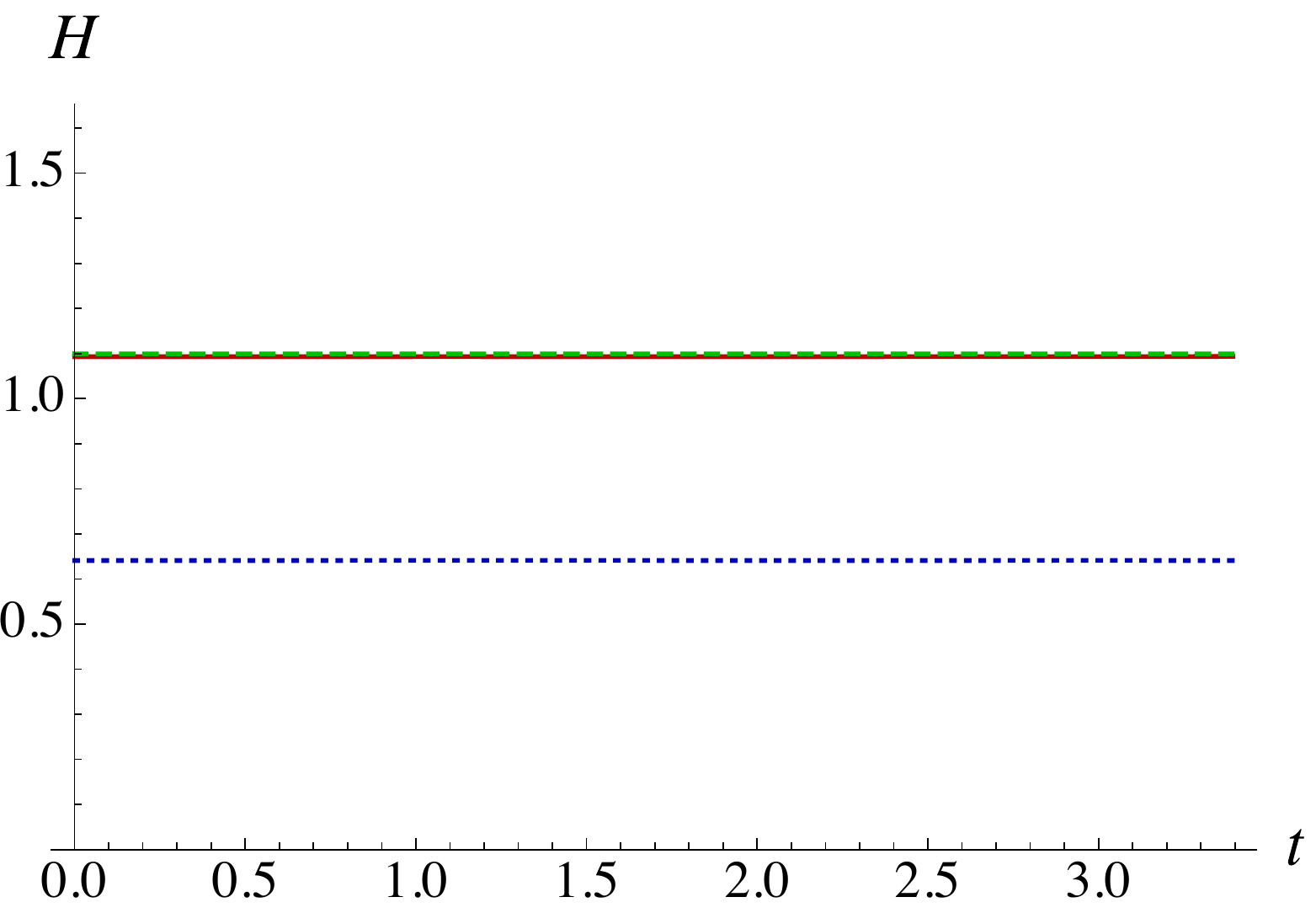}
  }
  \quad
  \subfigure[$n = 10$]{
    \includegraphics[width=.465\linewidth]{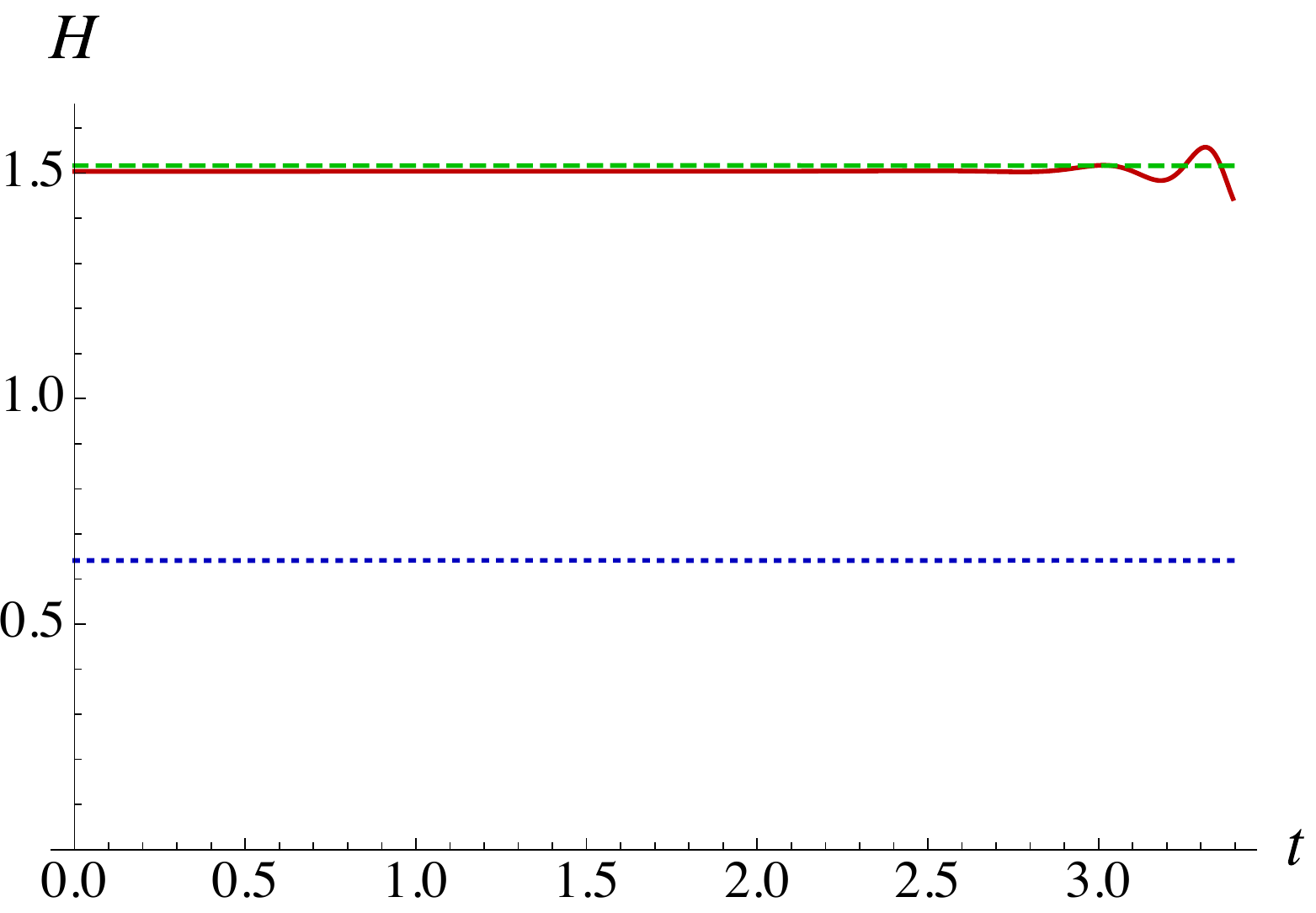}
  }
  \captionsetup{width=0.9\textwidth}
  \caption{Time evolution of the total energy or Hamiltonian for $\hbar = 0.025$.
    Classical energy $H_{\text{cl}} = p^{2}/(2m) + V$ along the classical solution, the semiclassical energy $\overline{H}^{(n)}_{\hbar}$ along the semiclassical solution of \eqref{eq:ReducedSemiclassicalSystem}, and the expectation value $\texval{\hat{H}}$ along the solutions of the Egorov-type algorithm~\cite{KeLaOh2016}.
    One can see that the energies are conserved well by the numerical methods.
    The oscillation of the semiclassical energy observed for $n = 10$ seems to be caused by the increased errors as $\mathcal{B}$ becomes very small, i.e., the wave packet becomes very widespread: $\mathcal{B}(t) \sim 0.001$ towards the end; note that the Ehrenfest time scale is $t \sim 3.7$ here.
  }
  \label{fig:0.025_t-H}
\end{figure}

\bibliography{SympSWPDyn}
\bibliographystyle{plainnat}

\end{document}